\newtheorem{lemma}{Lemma}[section]
\newtheorem{theorem}[lemma]{Theorem}
\newtheorem{definition}[lemma]{Definition}
\newtheorem{corollary}[lemma]{Corollary}
\newtheorem{claim}[lemma]{Claim}
\newcommand{\optMS}{\textsc{OPT}{}_1}
\newcommand{\optMNS}{\textsc{OPT}{}_2}
\newcommand{\opt}{\textsc{OPT}{}}
\newcommand{\One}{\mathbbm{1}}
\DeclareMathOperator*{\argmax}{arg\,max}
\def\Sigma{{\cal{S}}}
\def\N{{\cal{N}}}
\newcommand{\initOneLiners}{%
    \setlength{\itemsep}{0pt}
    \setlength{\parsep }{0pt}
    \setlength{\topsep }{0pt}
}
\newcommand{\defeq}{\stackrel{\text{def}}{=}}
\newcommand{\alg}{\mathsf{ALG}}
\newcommand{\pseudogreedy}{\mathsf{PseudoGreedy}}
\newcommand{\greedy}{\mathsf{Greedy}}
\newcommand{\E}{\mathbb{E}}
\newcommand{\UU}{{\mathbb U}}
\newcommand{\NN}{{\mathbb N}}
\newcommand{\RR}{{\mathbb R}}
\begin{document}
\title{Randomized Composable Core-sets for \\ Distributed Submodular Maximization
}
\author{Vahab Mirrokni \\
Google Research, \\ New York \\
mirrokni@google.com
\and
Morteza Zadimoghaddam \\
Google Research, \\
New York \\
zadim@google.com
\thanks{An extended abstract of this paper appeared at ACM Symposium on Theory of Computing (STOC)' 2015.}
}

\maketitle

\begin{abstract}
An effective technique for solving optimization problems over massive data sets is to partition the data into smaller pieces, solve the problem on each piece and compute a representative solution from it, and finally obtain a solution inside the union of the representative solutions for all pieces. This technique can be captured via the concept of {\em composable core-sets}, and has been recently applied to solve diversity maximization problems as well as several clustering problems~\cite{nips-BEL13,IMMM14,BBLM14}. However, for coverage and submodular maximization problems, impossibility bounds are known for this technique~\cite{IMMM14}.
 In this paper, we focus on efficient construction of a randomized variant of composable core-sets where the above idea is applied on a {\em random clustering} of the data. We employ this technique for the coverage, monotone and non-monotone submodular maximization problems. Our results significantly improve upon the hardness results for non-randomized core-sets, and imply improved results for submodular maximization in a distributed and streaming settings. The effectiveness of this technique has been confirmed empirically for several machine learning applications~\cite{nips13}, and our proof  provides a theoretical foundation to this idea. 
 
In summary, we show that a simple greedy algorithm results in a $1/3$-approximate randomized composable core-set for submodular maximization under a cardinality constraint. 
This is in contrast to a known $O({\log k\over \sqrt{k}})$ impossibility result for (non-randomized) composable core-set.
Our result also extends to non-monotone submodular functions, and leads to the first 2-round MapReduce-based constant-factor approximation algorithm with $O(n)$ total communication complexity for either monotone or non-monotone functions.  
Finally, using an improved analysis technique and a new algorithm $\pseudogreedy$, we present an improved $0.545$-approximation algorithm for monotone submodular maximization, which is in turn the first MapReduce-based algorithm beating factor $1/2$ in a constant number of rounds. 
\end{abstract}

\newpage

\section{Introduction}
An effective way of processing massive data is to first extract a compact representation of the data 
and then perform further processing only on the representation itself.
This approach significantly reduces the cost of processing, communicating and storing the data, as the representation size can be much smaller than the size of the original data set. Typically, the representation provides a smooth tradeoff between its size and the representation accuracy. 
Examples of this approach include techniques such as sampling, sketching, (composable) core-sets and mergeable summaries.
Among these techniques,  the concept of composable core-sets has been employed in several distributed optimization models such as nearest neighbor search~\cite{AAIMV13}, and the streaming and MapReduce models~\cite{nips13,nips-BEL13,IMMM14,karbasiKDD2014,BBLM14}.
Roughly speaking, the main idea behind this technique is as follows: First partition the data into smaller parts. Then compute a representative solution, referred to as a {\em core-set}, from each part. 
Finally, obtain a solution by solving the optimization problem over the union of core-sets for all parts. While this technique has been successfully applied to diversity maximization and clustering problems~\cite{nips-BEL13,IMMM14,BBLM14}, for coverage and submodular maximization problems, impossibility bounds are known for this technique~\cite{IMMM14}.

 In this paper, we focus on efficient construction of a randomized variant of composable core-sets where the above idea is applied on a {\em random clustering} of the data. We employ this technique for the coverage, monotone and non-monotone submodular problems. Our results significantly improve upon the hardness results for non-randomized core-sets, and imply improved results for submodular maximization in a distributed and streaming settings. The effectiveness of this technique has been confirmed empirically for several machine learning applications~\cite{nips13}, and our proof  provides a theoretical foundation to this idea.  Let us first define this concept, and then discuss its applications, and our results.
 
\subsection{Preliminaries} \label{sec:prelim}
Here, we discuss the formal problem definition, and the distributed model motivating it.

{\bf \noindent Submodular Functions.} 
We start by defining submodular functions~\footnote{ While the concepts in this paper can be applied to other set functions, we focus on maximizing submodular set functions.}.
Let  $\NN$ be a ground set of items with cardinality $n=\vert \NN\vert$.
Consider a set function $f: 2^{\NN} \rightarrow \RR^+\cup\{0\}$.
We say function $f$ is monotone if for  any two subsets $X \subseteq Y \subseteq \NN$, $f(X) \le f(Y)$. 
We say function $f$ is submodular if and only if for any two subsets $X \subseteq Y \subseteq \NN$, and an item $x \in \NN \setminus Y$,   we have the property of diminishing returns, i.e.,
$$ f(X \cup \{x\}) - f(X) \geq f(Y \cup \{x\}) - f(Y).$$ 
Given an integer size constraint $k$, we let $f_k$ be $$f_k(S)\defeq \max_{S'\subseteq S, \vert S'\vert \le k} f(S').$$
The submodular maximization problem with a cardinality constraint is as follows:
given a parameter $k$ and a value oracle access to a non-negative submodular function $f: 2^\NN\rightarrow \RR^+\cup\{0\}$, find a subset $S$ of cardinality at most  $k$ with the maximum value $f(S)$.
The most common algorithm for solving the above problem is algorithm $\greedy$ which is as follows: start from an empty set $S=\emptyset$, and in $k$ iterations, find an item $x$ with maximum marginal $f$ value for $S$ (i.e., $x=\argmax_{y\in \NN} f(S\cup\{y\}) - f(S)$) and add this item $x$ to $S$.  We refer to this algorithm as algorithm $\greedy$ and note that it is a $(1-{1\over e})$-approximation for monotone sumodular maximization problem with a cardinality constraint. 


{\bf \noindent Randomized Composable Core-sets.} In this paper, we assume that all $n$ items of $\NN$ do not fit on one machine, and 
we need to apply a distributed algorithm to solve submodular maximization problem.
To deal with this issue, 
we consider distributing items of $\NN$ into $m$ machines with indices $\{1, \ldots, m\}$, 
where each item goes to $C$ randomly chosen machines.
Let $\{T_1, T_2, \ldots, T_m\}$ be subsets of items going to machines $\{1, 2, \ldots, m\}$ respectively. In this case, we say that $\{T_1, T_2, \ldots, T_m\}$ is a {\em random clustering of $\NN$ with multiplicity $C$}, i.e., $\{T_1, T_2, \ldots, T_m\}$  is a family of subsets $T_i\subseteq \NN$, where each item of $\NN$ is assigned to $C$ randomly chosen subsets in this family.
Note that $T_i$'s are not necessarily disjoint subsets of items. 
Only the case of $C=1$ corresponds to a random partitioning of items into $m$ disjoint parts. This case
is the most natural way of applying this idea, and is studied  in Section~\ref{sec:rand-core-set}. As we see later, higher values of $C$  can help us achieve better approximation factors (See Section~\ref{sec:linear}). We are now ready to formally define randomized composable core-sets.

\begin{definition}
Consider an algorithm  $\alg$ that given any subset $T\subseteq \NN$ returns a subset ${\alg}(T) \subseteq T$ with size at most $k'$. Let $\{T_1, T_2, \ldots, T_m\}$ be a 
random clustering of $\NN$ to $m$ subsets  with multiplicity $C$.
We say that algorithm $\alg$ is an algorithm that implements 
{\em $\alpha$-approximate randomized composable core-set of size $k'$ with multiplicity $C$ for $f$ and cardinality constraint parameter $k$}, if, 
$$ \E \left [ f_k(\alg(T_1) \cup \ldots \cup \alg(T_m)) \right ] \ge {\alpha} \cdot \E\left [ f_k(T_1 \cup \ldots   \cup T_m)\right ],$$
where the expectation is taken over the random choice of $\{T_1, T_2, \ldots, T_m\}$. For brevity, instead of saying that $\alg$ implements a composable core-set, we say that $\alg$ is an $\alpha$-approximate randomized composable core-set. 
\end{definition}

For ease of notation, when it is clear from the context, we may drop the term composable, and refer to composable core-sets as core-sets. Throughout this paper, we discuss randomized composable core-sets for the submodular maximization problem with a cardinality constraint $k$.

{\bf \noindent Distributed Approximation Algorithm.} 
Note that we can use a randomized $\alpha$-approximate composable core-set algorithm $\alg$ to design the following simple distributed $(1-{1\over e})\alpha$-approximation algorithm for monotone submodular maximization:
\begin{enumerate}
\item  In the first phase, following the random clustering $\{T_1, \ldots, T_m\}$ defined above, allocate items in $\NN$ to $m$ machines, i.e., machine $i$ gets the subset $T_i$ of items. 
\item Each machine $i$ computes a randomized composable core-set $S_i\subseteq T_i$ of size $k'$, i.e., $S_i=\alg(T_i)$ for each $1\le i\le m$.
 \item In the second phase, first collect the union of all core-sets, $U=\cup_{1\le i\le m} S_i$, on one machine. Then apply a {\em post-processing} $(1-{1\over e})$-approximation algorithm (e.g., algorithm $\greedy$) to compute a solution $S$ to the submodular maximization problem over the set $U$. Output $S$.
\end{enumerate}

It follows from the definition of the $\alpha$-approximate randomized composable core-set that the above algorithm is a distributed $(1-{1\over e})\alpha$-approximation algorithm for submodular maximization problem. We refer to this two-phase algorithmic approach as {\em the distributed algorithm}, and the overall approximation factor of the distributed algorithm as the {\em distributed approximation factor}. For all our algorithms in this paper, in addition to presenting an algorithm that achieves an approximation factor $\alpha$ as a randomized composable core-set, we propose a post-processing algorithm for the second phase, and present an improved analysis that achieves much better than $(1-{1\over e})\alpha$-approximation as the distributed approximation factor. 

Note that the above algorithm can  be implemented in a distributed manner only if $k'$ is small enough such that $mk'$ items can be processed on one machine. In all our results the size of the composable core-set, $k'$, is a function of the cardinality constraint, $k$: In particular, in Section \ref{sec:rand-core-set}, we apply a composable core-set of size $k'=k$. In Section \ref{sec:linear}, we apply a composable core-set of size $k'<4k$, and as a result, achieve a better approximation factor.
We call a core-set, {\em a small-size core-set}, if its size $k'$ is less than $k$ (See Section~\ref{sec:small}). As we will see, the hardness results for small-size core-sets are much stronger than that of core-sets of size $k$ or larger.

{\bf \noindent Non-randomized Composable Core-sets.} The above definition for randomized composable core-sets is introduced in this paper. Prior work~\cite{IMMM14,BBLM14} define a non-randomized variant of composable core-sets where the above property holds for any (arbitrary) partitioning $\{T_1, T_2, \ldots, T_m\}$ of data into $m$ parts~\footnote{It is not hard to see that for non-randomized composable core-set, the multiplicity parameter $C$ is not relevant.}, i.e., an algorithm  $\alg$ as described above is a {\em $\alpha$-approximate (non-randomized) composable core-set of size $k'$ for $f$}, if for any cardinality constraint $k$, and any arbitrary partitioning $\{T_1, T_2, \ldots, T_m\}$ of the items into $m$ sets, we have 
$ f_k(\alg(T_1) \cup \ldots \cup \alg(T_m))  \ge {\alpha} \cdot f_k(T_1 \cup \ldots   \cup T_m)$.

\subsection{ Applications and Motivations} 
An $\alpha$-approximate randomized composable core-set of size $k'=O(k)$ for a problem can be applied in three types of applications~\cite{IMMM14}\footnote{These results assume  $k\le n^{1-\epsilon}$ for a constant $\epsilon$.}:
(i) in distributed computation~\cite{osdi-DG04}, where it implies an $\alpha$-approximation  in one or two rounds of MapReduces using the total communication complexity of $O(n)$,
(ii) in the random-order streaming model, where it implies an $\alpha$-approximation algorithm in one pass using sublinear memory,
(iii) in a class of approximate nearest neighbor search problems, where it implies an $\alpha$-approximation algorithm based on the locally sensitive hashing (under an assumption).
 Here, we discuss the application for the MapReduce and Streaming framework, and for details of the approximate nearest neighbor application, we refer to~\cite{IMMM14}.

We first show how to use a randomized composable core-set of size $O(k)$ to design a distributed algorithm in one or two rounds of MapReduces~\footnote{The straightforward way of applying the ideas will result in  two rounds
 of MapReduce. However, if we assume that the data is originally sharded randomly and each part is in a single shard, and the memory for each machine is more than the size
 of each shard, then it can be implemented via one round of MapReduce computation.} using linear total communication complexity: 
Let $m=\sqrt{n/k}$, and let $(T_1, \ldots, T_m)$ be a random partitioning where $T_i$ has $\sqrt{kn}$ items. In the distributed algorithm, we assume that the random partitioning is produced in one round of MapReduce where each of $m$ reducers receives $T_i$ as input, and produces a core-set $S_i$ for the next round. Alternatively, we may assume that the data (or the items) are distributed uniformly at random among machines, or similarity each of $m$ mappers receives $T_i$ as input, and produces a core-set $S_i$  for the reducer. In either case, the produced core-sets are passed to a single 
 reducer in the first or the second round. The total input to the reducer, i.e., the union of the core-sets, 
 is of size at most $mk'=O(k)\sqrt{n/k}=O(\sqrt{kn})$.
 The solution computed by the reducer for the union of the core-sets is, by definition, 
 a good approximation to the original problem. It is easy to see that the total communication complexity of this  algorithm is $O(n)$, and this computation can be performed in one or two rounds as formally defined in the MapReduce computation model~\cite{soda-KSV10}.  
 
 Next, we elaborate on the application for a streaming computation model:  In the random-order data stream model, a random sequence of $n$ data points needs to be processed ``on-the-fly'' while using only limited storage. An algorithm for a randomized composable core-set can be easily used to obtain an algorithm for this setting~\cite{GMMO,agarwal2004approximating}\footnote{The paper~\cite{GMMO} introduced this approach for the special case of $k$-median clustering. More general formulation of this method with other applications appeared in~\cite{agarwal2004approximating}.}. 
In particular, if a randomized composable core-set for a given problem has size $k$, we start by dividing the random stream of data into $\sqrt{n/k}$ blocks of size $s=\sqrt{nk}$. This way, each block will be a random subset of items.
The algorithm then proceeds block by block.
Each block is read and stored  in the main memory, its core-set is computed and stored, and the block is deleted. 
At the end, the algorithm solves the problem for the union of the core-sets. The whole algorithm takes only $O(\sqrt{kn})$ space.
The storage can be reduced further by utilizing more than one level of compression, at the cost of increasing the approximation factor.

Variants of the composable core-set technique have been applied for optimization under MapReduce framework~\cite{soda-KSV10,spaa-LMSV11,nips-BEL13,IMMM14,ANOY14,BBLM14}. However, none of these previous results formally study the difference between randomized and non-randomized variants and in most cases, they employ non-randomized composable core-sets. Indyk et al.~\cite{IMMM14} observed that the idea of non-randomized composable core-sets cannot be applied to the coverage maximization (or more generally submodular maximization) problems. 
In fact, all our hardness results also apply to a class of submodular maximization problems known as the maximum $k$-coverage problems, i.e.,
given a number $k$, and a family of subsets ${\cal A}\subset 2^X$, find a subfamily of $k$ subsets $A_1, \ldots, A_k$ whose union $\cup_{j=1}^k A_j$ is maximized.
Solving max $k$-coverage and submodular maximization in a distributed manner have  attracted a significant amount of research over the last few years~\cite{CKT10,CKW10,spaa-LMSV11,BST12,KMVV13,nips13}. Other than the importance of these problems, one reason for the popularity of this problem in this context is the fact that its approximation algorithm is algorithm  $\greedy$ which is naturally sequential and it is hard to parallelize or implement in a distributed manner. 


\subsection{Our Contributions} 
{
\begin{table*}
\begin{center}
\label{tab:summary-results1}
 \begin{tabular}[h]{|c|c|c|c|c|c|}
\hline 
Problem & Core-set Size & R/N & U/L & Core-set Approx. Factor & Distributed Approx. \\   \hline \hline
Mon. Submodular Max. ~\cite{IMMM14} & $\mbox{poly}(k)$ & N & U& $O(\frac{\log k}{\sqrt k})$& - \\ \hline
Mon. Submodular Max.* & $k$ & R & L& $1/3$ & 0.27  \\
Non-Mon. Submodular Max.*  & $k$ & R & L& $\max({m-1\over 3m},{1\over em})\ge 0.18$ &  $\max({1- {1 \over m}\over 2+e},{1\over em}) \ge 0.14$ \\
\hline \hline
Mon. Submodular Max.* & $O(k)$ & R & L& \small{$\iffalse{2-\sqrt{2}}\approx\fi 0.585 - O({1\over k})$} & $0.545 - O({1\over k})$ \\ 
Mon. Submodular Max. & $\mbox{poly}(k)$ & R & U& $1-{1\over e}$ & - \\ 
\hline
\hline
Mon. Submodular Max. & $k'<k$ & N & UL&$\Theta(\frac{k'}{k})$ & $\Theta(\frac{k'}{k})$ \\
\hline
Mon. Submodular Max.  & $k'<k$ & R & UL& $\Theta(\sqrt{\frac{k'}{k}})$&$\Theta(\sqrt{\frac{k'}{k}})$ \\
 \hline 
\end{tabular}
\end{center}
\caption{This table summarizes our results. 
In the column titled "R/N ", "R" corresponds
to the randomized core-set, and "N" corresponds to the non-randomized core-set notion.
In the column titled "U/L ", "U" corresponds
to an upper bound result, and "L" corresponds to a lower bound result. The last column corresponds 
to the distributed approximation factor. All these results except the first row are the new results of this paper. Previously, no constant-factor approximation has been proved for a randomized composable core-set for this problem. See Section~\ref{sec:relwork} for comparison to previous approximation algorithms. The rows with a star(*) are our most important results. 
}
\end{table*}
}

Our results are summarized in Table 1. As our first result, we prove that a family of efficient algorithms including a variant of algorithm $\greedy$  with a consistent tie-breaking rule leads to an almost $1/3$-approximate randomized composable core-set of size $k$ for any monotone submodular function and cardinality constraint $k$ with multiplicity of $1$  (see Section~\ref{sec:rand-core-set}). This is in contrast to a known $O({\log k\over \sqrt k})$ hardness result for any (non-randomized)  composable core-set~\cite{IMMM14}, and shows the advantage of using the randomization here.  
Furthermore, by constructing this randomized core-set and applying algorithm $\greedy$ afterwards, we show a $0.27$ distributed approximation factor for the monotone submodular maximization problem in {\em one or two rounds} of MapReduces with a {\em linear communication complexity}.  
 Previous results lead to algorithms with either much larger number of rounds of MapReduce~\cite{CKT10,BST12}, and/or larger communication complexity~\cite{KMVV13}. This improvement is important, since the number of rounds of MapReduce computation and communication complexity are the most important factors in determining the performance of a MapReduced-based algorithm~\cite{socc14}. The effectiveness of using this technique  has been confirmed empirically by Mirzasoleiman et al~\cite{nips13} who studied a similar algorithm on a subclass of submodular maximization problems. However, they only provide provable guarantees for a subclass of submodular functions satisfying a certain Lipchitz condition~\cite{nips13}. Our result not only works for monotone submodular functions, but also {\em extends to non-monotone (non-negative) submodular} functions, and leads to the first constant-round MapReduce-based constant-factor approximation algorithm for non-monotone submodular maximization (with $O(n)$ total communication complexity and approximation factor of $0.18$). It also leads to the first constant-factor approximation algorithm for non-monotone submodular maximization in a random-order streaming model in one pass with sublinear memory.
 
Our next goal is to improve the approximation factor of the above algorithm for monotone submodular functions. To this end, we first observe that one cannot achieve a better than the $1/2$ factor via core-sets of size $k$ using algorithm $\greedy$ or any algorithm in a family of local search algorithms.  In Section~\ref{sec:linear}, we show how to go beyond the $1/2$-approximation by applying core-sets of size higher than $k$ but still of size $O(k)$, and prove that  algorithm $\greedy$ with a consistent tie-breaking rule  provides a $0.585$-approximate randomized composable core-set of size $k'<4k$ for our problem. We then present algorithm $\pseudogreedy$ that can be applied as a post-processing step to design a distributed $0.545$-approximation  algorithm in one or two rounds of MapReduces, and with linear total communication complexity. For monotone submodular maximization, this result implies the first distributed approximation algorithm with approximation factor better than $1/2$ that  runs in a constant number of rounds. We achieve this approximation factor using one or two rounds of MapReduces and with the total communication complexity of $O(n)$. In addition, this result implies the first approximation algorithm beating the $1/2$ factor for the random-order streaming model with constant number of passes on the data and sublinear memory. To complement this result,  we first show that our analysis for algorithm $\greedy$ is tight.  Moreover, we show that it is information theoretically impossible to achieve an approximation factor better than $1-{1/e}$ using a core-set with size polynomial in $k$. 

Finally, we consider the construction of {\em small-size} core-sets, i.e., a core-set of size $k'<k$.  Studying such core-sets is important particularly for cases with large parameter $k$, e.g., $k= \Omega(n)$ or $k={n\over \log n}$~\footnote{For such large $k$, a core-set of size $k$ may not be as useful since outputting the whole core-set may be impossible. For example, in the formal MapReduce model~\cite{soda-KSV10}, outputting a core-set of size $k$ for $k=\Omega(n)$ is not feasible.}. 
For our problem, we first observe a hardness bound of $O({k'\over k})$ for  non-randomized core-sets. On the other hand, in Subsection~\ref{subset:small_size_submod}, we show an $\Omega(\sqrt{k'\over k})$-approximate randomized composable core-set for this problem, and accompany this result by a matching hardness  bound of $O(\sqrt{k'\over k})$ for randomized composable core-set. The hardness result is presented in Subsection~\ref{subsec:hardness}.

\subsection{Other Related Work.} \label{sec:relwork}

{\bf  Submodular Maximization in Streaming and MapReduce:}
Solving max $k$-coverage and submodular maximization in a distributed manner have  attracted a significant amount of research over the last few years~\cite{CKT10,CKW10,spaa-LMSV11,BST12,KMVV13,nips13,karbasiKDD2014}. From theoretical point of view, for the coverage maximization problem, Chierchetti et al.~\cite{CKT10} present a $(1-1/e)$-approximation algorithm in polylogarithmic number of MapReduce rounds, and Belloch et al~\cite{BST12} improved this result and achieved $\log^2 n$ number of rounds. Recently, Kumar et al.~\cite{KMVV13} present a $(1-1/e)$-approximation algorithm using a logarithmic number of rounds of MapReduces. They also derive $(1/2-{\epsilon})$-approximation  algorithm that runs in $O({1\over \delta})$ number of rounds of MapReduce (for a constant $\delta$), but this algorithm needs a $\log n$ blowup in the communication complexity. As observed in various empirical studies~\cite{socc14}, the communication complexity and the number of MapReduce rounds are important factors in determining the performance of a MapReduce-based algorithm and a $\log n$ blowup in the communication complexity can play a crucial role in applicability of the algorithm in practice.
Our algorithm on the other hand  runs only in (one or) two rounds, and can run on any number of machines as long as they can store the data, i.e, it needs $m$ machines each with memory proportional to ${1\over m}$ of the size of the input. One previous attempt to apply the idea of core-sets for submodular maximization is by Indyk et al.~\cite{IMMM14} who rule out the applicability of non-randomized core-sets by showing a hardness  bound of $O({\log k\over \sqrt k})$ for non-randomized core-sets. The most relevant previous attempt in applying randomized core-sets to  submodular maximization is by Mirzasoleiman et al~\cite{nips13}, where the authors study a class of algorithms similar to the ones discussed here, and show the effectiveness of applying algorithm $\greedy$ over a random partitioning empirically for several machine learning applications. The authors also prove theoretical guarantees for algorithm $\greedy$ for special classes of submodular functions satisfying a certain Lipschitz condition~\cite{nips13}. Here, on the other hand, we present guaranteed approximation results for all monotone and non-monotone submodular functions.
In fact, Ashwinkumar and Karbasi~\cite{AK14} observed that if one applies the greedy algorithm without a consistent tie-breaking rule, the approximation factor of the algorithm is not bounded for coverage functions. 
In this paper, we prove that a class of algorithms including greedy with a consistent tie-breaking rule provide a guaranteed $0.27$-approximation algorithm for all monotone submodular functions. We also show how to achieve an improved $0.54$-approximation using slightly larger core-sets of size $O(k)$. 
Finally, there is a recent paper in the streaming model~\cite{karbasiKDD2014} in which the authors present a streaming $1/2$-approximation algorithm with one pass and linear memory.  Our results lead to improved results for random-order streaming model for monotone and non-monotone submodular maximization.

{\bf Core-sets:}
The notion of core-sets has been introduced in~\cite{agarwal2004approximating}. 
In this paper, we use the term core-sets to refer to ``composable core-sets'' which was formally defined in a recent paper~\cite{IMMM14}. This notion has been also implicitly used in Section 5 
of Agarwal et al.~\cite{agarwal2004approximating} where the authors specify composability properties 
of $\epsilon$-kernels (a variant of core-sets). 
The notion of (composable) core-sets are also related to the concept of mergeable summaries 
that have been studied in the literature~\cite{agarwal2012mergeable}.
As discussed before, the idea of using core-sets has been applied either explicitly or 
implicitly in the streaming model~\cite{GMMO,agarwal2004approximating} and in the MapReduce 
framework~\cite{soda-KSV10,spaa-LMSV11,nips-BEL13,IMMM14,BBLM14}. 
Moreover, notions similar to randomized core-sets have been studied for random-order streaming models~\cite{KKS14}. Finally, the idea of random projection in performing algebraic projections can be viewed as a related topic~\cite{Sarlos06,CW13,FMSW10} but it does not discuss the concept of composability over a random partitioning.




\subsection{More Notation}\label{sec:notation}
We end this section by presenting notations and definitions used in the rest of the paper.
Let $\Delta(x,X)$ denote the marginal $f$ value of adding item $x$ to set $X$, i.e., $\Delta(x,X) \defeq f(X \cup \{x\}) - f(X)$.  
With this definition, the submodularity property is equivalent to saying that for any two subsets $X \subseteq Y \subseteq \NN$ and an item $x \in \NN \setminus Y$, we have $\Delta(x,X) \geq  \Delta(x,Y)$.
Consider the distributed algorithm described in Section~\ref{sec:prelim}. For any random clustering $\{T_1, \ldots, T_m\}$, we let $S_i$ ($1\le i\le m$)  be the output of algorithm $\alg$ on set $T_i$, i.e., $S_i = {\alg}(T_i)$. Also we let $\opt$ be the optimum solution ($\argmax_{|S| \leq k} f(S)$), and $\opt^{S}$ be the set of selected items of $\opt$ in $\cup_{i=1}^m S_i$, i.e., $\opt^{S}\defeq \opt \cap (\cup_{i=1}^m S_i)$. 
Throughout this paper, we analyze a class of algorithms, referred to as $\beta$-nice algorithms, with certain properties defined below:

\begin{definition}\label{def:nice}
Consider a submodular set function $f$. Let  $\alg$ be an algorithm that given any subset $T$ returns subset ${\alg}(T) \subseteq T$ with size at most $k'$. We say that this Algorithm $\alg$ is a {\em $\beta$-nice} algorithm for function $f$ and some parameter $\beta$ {\em iff} 
for any set $T$ and any item $x \in T \setminus {\alg}(T)$ (item $x$ is in set $T$ but is not selected in the output of algorithm $\alg$), then the following two properties hold: 
\begin{itemize}
\item Set ${\alg}(T \setminus \{x\})$ is equal to ${\alg}(T)$, i.e., intuitively the output of the algorithm should not depend on the items it does not select, and 
\item  $\Delta(x,{\alg}(T))$ is at most $\beta\frac{f({\alg}(T))}{k'}$. In other words, the marginal $f$ value of any not-selected item cannot be more than $\beta$ times the average contribution of selected items. 
\end{itemize}
\end{definition}

\section{Randomized Core-sets for Submodular Maximization}\label{sec:rand-core-set}
In this section, we show that a family of $\beta$-nice algorithms, introduced in Section~\ref{sec:notation}, leads to a constant-factor approximate randomized composable core-set, and a constant-factor distributed approximation algorithm for monotone and non-monotone submodular maximization problems with cardinality constraints.
Later, in Subsection~\ref{subsec:betaNiceAlgs}, we show that several efficient algorithms in the literature of submodular maximization  are $\beta$-nice for some $\beta \in [1, 1+\epsilon]$ (for $\epsilon=o(1)$) including some variant of  algorithm $\greedy$ with a consistent tie-breaking rule, and also an almost linear-time algorithm in \cite{AshwinVondrakSODA2014}.  Before stating the theorem, we emphasize that in this section, we apply a composable core-set of multiplicity 1 which corresponds to a {\em random partitioning} of items
into $m$ disjoint pieces.

\begin{theorem}\label{theorem:1/3_composable_coreset}
For any $\beta > 0$, any $\beta$-nice algorithm $\alg$ is a $\frac{1}{2+\beta}$-approximate  randomized composable core-set of multiplicity 1 and size $k$ for the monotone, and $\frac{1-\frac{1}{m}}{2+\beta}$-approximate for non-monotone submodular maximization problems with cardinality constraint $k$. 
\end{theorem}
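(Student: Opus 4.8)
The plan is to first simplify the target. Since the multiplicity is $C=1$, the sets $T_1,\dots,T_m$ form a partition of $\NN$, so $f_k(T_1\cup\cdots\cup T_m)=f_k(\NN)=f(\opt)$; hence it suffices to prove $\E[f_k(U)]\ge\frac{1}{2+\beta}f(\opt)$ in the monotone case (and $\E[f_k(U)]\ge\frac{1-1/m}{2+\beta}f(\opt)$ in the non-monotone case), where $U:=S_1\cup\cdots\cup S_m$ and $S_i=\alg(T_i)$. Write $\opt^S=\opt\cap U$ for the optimal items surviving in the union of the core-sets, $B=\opt\setminus\opt^S$ for the ``lost'' optimal items, and $L_i=(\opt\cap T_i)\setminus S_i$, so that $B=\bigsqcup_i L_i$ and $|B|\le k$. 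Two elementary lower bounds hold in every outcome of the random partition: $f_k(U)\ge f(\opt^S)$ (as $\opt^S\subseteq U$ and $|\opt^S|\le k$), and $f_k(U)\ge f(S_i)$ for every $i$ (as $S_i\subseteq U$ and $|S_i|\le k$).

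The skeleton of the argument is this. By submodularity and non-negativity, $f(\opt)=f(\opt^S\cup B)\le f(\opt^S)+f(B)$, so using $f(\opt^S)\le f_k(U)$ it remains only to show $\E[f(B)]\le(1+\beta)\,\E[f_k(U)]$; this is exactly where the two ``$+1$''s in $2+\beta$ come from (one unit of $f_k(U)$ for $\opt^S$, and $1+\beta$ for $B$). To bound $f(B)$, I would fix a machine $i$ and use monotonicity and submodularity to write $f(B)\le f(B\cup S_i)\le f(S_i)+\sum_{o\in B}\Delta(o,S_i)$, and then split the sum into $o\in L_i$ and $o\in B\setminus T_i$. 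For $o\in L_i$ we have $o\in T_i\setminus S_i$, so the second property of a $\beta$-nice algorithm gives $\Delta(o,S_i)\le\beta f(S_i)/k$, and since $|L_i|\le k$ this part contributes at most $\beta f(S_i)$. For $o\in B\setminus T_i$, consider running $\alg$ on $T_i\cup\{o\}$: if $o\notin\alg(T_i\cup\{o\})$, then the first property of $\beta$-niceness gives $\alg(T_i\cup\{o\})=\alg(T_i)=S_i$ and the second gives $\Delta(o,S_i)\le\beta f(S_i)/k$; so the only uncontrolled terms are the optimal items $o$ on other machines that machine $i$ \emph{would} select if they were added to it.

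The crux --- and the step I expect to be the main obstacle --- is handling these ``would-be-selected'' lost items, whose marginal with respect to $S_i$ can a priori be as large as $f(S_i)$ itself, so that a naive union bound over the (up to $k$) elements of $\opt$ is hopeless. The resolution must exploit the randomness of the partition together with \emph{both} properties of $\beta$-niceness. Concretely, I would pick a machine $I$ uniformly at random, independently of the partition, average the inequality above over $I$, and then, for a fixed $o\in\opt$, couple the random partition with a random partition of $\NN\setminus\{o\}$: by the first property of $\beta$-niceness, whenever $o$ is lost its presence does not affect any core-set, so all core-sets coincide with those computed on $\NN\setminus\{o\}$. One then evaluates the conditional expectation of $o$'s contribution to $\sum_{o'\in B\setminus T_I}\Delta(o',S_I)$ given the partition of $\NN\setminus\{o\}$: the event ``$o$ is lost'' forces $o$ onto one of the machines that reject it while $I$ ranges over the complementary machines, and since a rejecting machine $\ell$ gives $\Delta(o,S_\ell)\le\beta f(S_\ell)/k$ by the second property and $f(S_\ell)\le f_k(U)$, this trade-off, summed over the at most $k$ elements of $\opt$, yields the bound $\E[f(B)]\le(1+\beta)\,\E[f_k(U)]$ after combining with the $f(\opt^S)\le f_k(U)$ budget. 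Putting the pieces together gives $f(\opt)\le(2+\beta)\,\E[f_k(U)]$.

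For the non-monotone case the structure is identical, except that the step $f(B)\le f(B\cup S_i)$, which used monotonicity, is no longer valid. I would instead invoke a submodularity estimate for random subsets: every optimal element lies in $T_I$ with probability $1/m$ over the random partition and the random machine $I$, and a standard concavity argument (concavity of the multilinear extension along the ray $t\mapsto t\cdot\mathbf 1$) converts the ``$f(B)\le f(B\cup S_i)$'' step into one that loses a factor $1-\tfrac1m$, yielding the stated $\frac{1-1/m}{2+\beta}$ bound; the remaining obstacle is precisely the same ``would-be-selected-elsewhere'' phenomenon as in the monotone case.
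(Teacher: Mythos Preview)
Your setup and the two pointwise lower bounds $f_k(U)\ge f(\opt^S)$ and $f_k(U)\ge\max_i f(S_i)$ are exactly right, and the decomposition $f(\opt)\le f(\opt^S)+f(B)$ is fine. The gap is precisely where you say it is, and your proposed ``trade-off'' does not close it. After writing $f(B)\le f(S_I)+\sum_{o\in B}\Delta(o,S_I)$ and handling the $o$'s that machine $I$ rejects, you are left with the accepting-machine terms $\sum_{o\in B}\Delta(o,S_I)\cdot\One[o\in\alg(T_I\cup\{o\})]$, and $\beta$-niceness gives no control on these. Conditioning on the partition of $\NN\setminus\{o\}$ does bound $\Pr[o\in B]$ by $|R_o|/m$, but it does not bound $\tfrac{1}{m}\sum_{I\in A_o}\Delta(o,\tilde S_I)$: these marginals neither telescope nor combine across $o\in\opt$ (in a coverage instance with many identical optimal sets, $\sum_{o\in\opt}\Delta(o,S_I)$ is $k$ times a single marginal). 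Your sentence ``a rejecting machine $\ell$ gives $\Delta(o,S_\ell)\le\beta f(S_\ell)/k$ \dots this trade-off \dots yields the bound'' conflates the two cases: the rejecting-machine inequality says nothing about accepting machines, and those are the only terms left to sum. Note also that the intermediate target $\E[f(B)]\le(1+\beta)\E[f_k(U)]$ is strictly stronger than what is needed, since $f(B)\ge f(\opt)-f(\opt^S)$ with possible strict inequality; this is another sign the decomposition is lossy.

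The paper's fix is to change the currency from $\Delta(o,S_I)$ to $\Delta(x,\opt^x)$, where $\opt^x$ is the prefix of $x$ in a \emph{fixed} ordering of $\opt$. Then $f(\opt)-f(\opt^S)\le\sum_{x\in B}\Delta(x,\opt^x)$, and each lost $x$ is charged to \emph{its own} machine $i=i(x)$ via
\[
\Delta(x,\opt^x)\ \le\ \Delta(x,S_i)\ +\ h(x,i),\qquad h(x,i):=\Delta(x,\opt^x)-\Delta(x,\opt^x\cup S_i)\ge 0.
\]
The first pieces sum to at most $\beta\max_i f(S_i)$ because $\sum_i|L_i|=|B|\le k$. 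For the second pieces your coupling idea is exactly the right tool and gives $\E\bigl[\sum_i\sum_{x\in L_i}h(x,i)\bigr]\le\tfrac{1}{m}\,\E\bigl[\sum_i\sum_{x\in\opt}h(x,i)\bigr]$; the crucial gain is that now the inner sum \emph{telescopes},
\[
\sum_{x\in\opt}h(x,i)\ =\ f(\opt)-f(\opt\cup S_i)+f(S_i)\ \le\ f(S_i)
\]
by monotonicity, so the whole expression is at most $\E[\max_i f(S_i)]$. This yields $\E[f(\opt^S)]\ge f(\opt)-(1+\beta)\E[\max_i f(S_i)]$ and hence the factor $\tfrac{1}{2+\beta}$. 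In the non-monotone case the only change is in the telescoped bound: one uses $\sum_{i=1}^m f(\opt\cup S_i)\ge(m-1)f(\opt)$ (iterated submodularity together with non-negativity of $f$, not a multilinear-extension/concavity argument), which costs an additive $f(\opt)/m$ and produces the $\tfrac{1-1/m}{2+\beta}$ factor.
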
 
 
\begin{proof} 
We want to show that there exists a subset of $\cup_{i=1}^m S_i$ with size at most $k$, and at least an expected $f$ value of $\frac{f(\opt)}{2+\beta}$ for monotone $f$, and $\frac{(1-1/m)f(\opt)}{2+\beta}$ for non-monotone $f$. 
Toward this goal, we take the maximum of $\max_{1 \leq i \leq m} f(S_i)$ and $f(\opt^S)$  as a candidate solution. 
We define some notation to simplify the rest of the proof. Consider an arbitrary permutation $\pi$ on items of $\opt$, and for each item $x \in \opt$ let $\opt^x$ be the set of items in $\pi$ that appear before $x$.
We will first lower bound $f(\opt^S)$ using the submodularity property in Lemma~\ref{lem:linear_in_Deltas}. 

\begin{lemma}\label{lem:linear_in_Deltas}
For any set of selected items, 
$f(\opt^S)\ge f(\opt)-\sum_{x \in \opt \setminus (\cup_{i=1}^m S_i)} \Delta(x, \opt^x)$ for a monotone or non-monotone submodular function $f$.
\end{lemma}
\begin{proof}
First we note that $f(\opt) - f(\opt^S) = \sum_{x \in \opt \setminus \opt^S} \Delta(x, \opt^x \cup \opt^S)$. Using submodularity property, we know that $\Delta(x, \opt^x \cup \opt^S)$ is at most $\Delta(x,\opt^x)$ because $\opt^x$ is a subset of $\opt^x \cup \opt^S$. Therefore $f(\opt) - f(\opt \cap A) \leq \sum_{x \in \opt \setminus \opt^S} \Delta(x, \opt^x)$ which is equal to $\sum_{x \in \opt \setminus (\cup_{i=1}^m S_i)} \Delta(x, \opt^x)$ by definition of $\opt^S$. This concludes the proof.
\end{proof}

Lemma~\ref{lem:linear_in_Deltas} suggests that we should upper bound $\sum_{x \in \opt \setminus (\cup_{i=1}^m S_i)} \Delta(x, \opt^x)$ which is done in the next lemma. 

\begin{lemma}\label{lem:upperbound_not_selected}
The sum $\sum_{i=1}^m \sum_{x \in \opt \cap T_i \setminus S_i} \Delta(x, \opt^x)$ is at most 
 $\beta \left(\max_{1 \leq i \leq m} f(S_i)\right) +$ $\sum_{i=1}^m \sum_{x \in \opt \cap T_i \setminus S_i}$ $\left(\Delta(x, \opt^x) - \Delta(x, \opt^x \cup S_i)\right)$ for a monotone or non-monotone submodular function $f$.  
\end{lemma}
\begin{proof}
The sum $\sum_{i=1}^m \sum_{x \in \opt \cap T_i \setminus S_i} \Delta(x, \opt^x)$ can be written as:

$$
\sum_{i=1}^m \sum_{x \in \opt \cap T_i \setminus S_i} \left( \Delta(x, \opt^x \cup S_i) + \left(\Delta(x, \opt^x)- \Delta(x, \opt^x \cup S_i) \right) \right)
$$

The first term in the sum is upper bounded by $\beta\frac{f(S_i)}{k}$ using the second property of $\beta$-nice algorithms. To conclude the proof, we apply inequality $f(S_i) \leq \max_{1 \leq i' \leq m} f(S_{i'})$, and use the fact that  there are at most $k$ items in $\opt \setminus \opt^S = \cup_{i=1}^m (\opt \cap T_i \setminus S_i)$.
\end{proof}

At this stage of the analysis, we use
randomness of partition $\{T_i\}_{i=1}^m$ to upper bound the expected value of these differences in $\Delta$ values with the expected value of average of $f(S_i)$. This is stated in the following lemma:

\begin{lemma}\label{lem:randomness}
Assuming each item in $\opt$ is assigned to a $T_i$ uniformly at random (which is the case for all items not only members of $\opt$), we have that $\E[\sum_{i=1}^m \sum_{x \in \opt \cap T_i \setminus S_i} \Delta(x, \opt^x) - \Delta(x, \opt^x \cup S_i)]$ is at most $\frac{\E[\sum_{i=1}^m f(S_i)]}{m}$ for a monotone submodular function $f$, and at most  $\frac{\E[\sum_{i=1}^m f(S_i)]}{m} + \frac{f(\opt)}{m}$ for a non-monotone subdmodular function $f$. 
\end{lemma}

Before proving the lemmas, we observe that  putting these three lemmas together, we can finish the proof of the theorem. 
In particular, for a monotone or non-monotone submodular function $f$, we have that:

\begin{eqnarray*}
\E[f(\opt^S)] &\ge & 
f(\opt) - \beta \E[\max_{1 \leq i \leq m} f(S_i)] - \frac{\E[\sum_{i=1}^m f(S_i)]}{m} - \frac{f(\opt)}{m} \\
&\geq & (1-\frac{1}{m})f(\opt) -(1+\beta)\E[\max_{1 \leq i \leq m} f(S_i)].
\end{eqnarray*} 

 This immediately implies that 
 $\E[\max\{f(\opt^S), \max_{1 \leq i \leq m} f(S_i)\}]  \ge \frac{(1-\frac{1}{m})f(\opt)}{2+\beta},$ which concludes the proof for non-monotone functions. For monotone functions, we get the same proof except that we should exclude the $-\frac{f(\opt)}{m}$ term. 
To complete the proof, it is sufficient to prove the lemmas. 

{\noindent \bf Proof of Lemma~\ref{lem:randomness}.}
The main part of the proof is to show that the sum of the $\Delta$ differences  in the statement of the lemma is in expectation at most $\frac{1}{m}$ fraction of sum of $\Delta$ differences for a larger set of pairs $(i,x)$. In particular, we show that  


\begin{eqnarray*}\label{eq:two_sums}
\E[\sum_{i=1}^m \sum_{x \in \opt \cap T_i \setminus S_i} \Delta(x, \opt^x) - \Delta(x, \opt^x \cup S_i)] \leq
\frac{1}{m} \cdot  \E[\sum_{i=1}^m \sum_{x \in \opt} \Delta(x, \opt^x) - \Delta(x, \opt^x \cup S_i)].
\end{eqnarray*}

To simplify the rest of the proof, let $A$ be the left hand side of the above inequality, and $B$ be its right hand side. 
Also to simplify expressions $A$ and $B$, we introduce the following notation: For every item $x$ and set $T \subseteq \NN$, let $h(x, T)$ denote $\Delta(x, \opt^x) - \Delta(x, \opt^x \cup \alg(T))$. 
Also let $\One[x \notin \alg(T \cup \{x\})]$ be equal to one if $x$ is not in set $\alg(T \cup \{x\})$, and zero otherwise.
We note that $A$ and $B$ are both separable for different choices of item $x$ and set $T_i$, and can be rewritten formally using the new notation as follows: 
\begin{eqnarray*}
A &=& \sum_{i=1}^m \sum_{x \in \opt} \sum_{T \subseteq \NN \setminus \{x\}} Pr[T_i = T \cup \{x\}] \One[x \notin \alg(T \cup \{x\})] h(x, T \cup \{x\})\\ 
B &=& \sum_{i=1}^m \sum_{x \in \opt} \sum_{T \subseteq \NN \setminus \{x\}} \left(Pr[T_i = T \cup \{x\}] h(x, T \cup \{x\}) + Pr[T_i = T] h(x, T) \right) \\
&\geq& \sum_{i=1}^m \sum_{x \in \opt} \sum_{T \subseteq \NN \setminus \{x\}} \One[x \notin \alg(T \cup \{x\})] h(x, T \cup \{x\})\left(Pr[T_i = T \cup \{x\}] + Pr[T_i = T] \right) 
\end{eqnarray*}

{\noindent where} the inequality is implied by the following simple observations. Function $h$ is non-negative, so multiplying the sum by $\One[x \notin \alg(T \cup \{x\})]$ (which is either zero or one) can only decrease its value. We also replace one $h(x, T)$ with $h(x, T \cup \{x\})$ which does not change the value of the sum at all because when $\One[x \notin \alg(T \cup \{x\})]=1$ (its only non-zero value), $\alg(T \cup \{x\})$ is identical to $\alg(T)$ using the first property of nice algorithms, and thus $h(x, T)=h(x, T \cup \{x\})$ by definition. So the inequality holds. 

Now we can compare $A$, and $B$ as follows: For any set $T \subseteq \NN \setminus \{x\}$, we have $Pr[T_i = T]$ and $Pr[T_i = T \cup \{x\}]$ are equal to 
$\left(\frac{1}{m}\right)^{|T|}\left(1-\frac{1}{m}\right)^{|\NN|-|T|}$ and $\left(\frac{1}{m}\right)^{|T|+1}\left(1-\frac{1}{m}\right)^{|\NN|-|T|-1}$ respectively. As a result, the ratio $\frac{Pr[T_i = T \cup \{x\}]}{Pr[T_i = T \cup \{x\}] + Pr[T_i = T]}$ is equal to $\frac{1}{m}$ which shows that 
$A \leq B$.
 
To complete the proof for monotone $f$, it suffices to prove that  $B \leq \frac{\E[\sum_{i=1}^m f(S_i)]}{m}$, we note that for any $i$, $\sum_{x \in \opt}$ $\Delta(x, \opt^x \cup S_i)=f(\opt \cup S_i) - f(S_i),$ and  $\sum_{x \in \opt} \Delta(x, \opt^x)=f(\opt).$  For a monotone $f$, we have $f(\opt \cup S^i) \geq f(\opt)$ which shows that  $B \leq\frac{\E[\sum_{i=1}^m f(S_i)]}{m}$.

Although, we do not necessarily have $f(\opt \cup S_i) \geq f(\opt)$ for a non-monotone $f$, we show that $\sum_{i=1}^m f(\opt \cup S_i) \geq (m-1)f(\opt)$. For any $1 \leq i < m$, we have that $f\left(\opt \cup \left( \cup_{i'=1}^i S_{i'}\right)\right) + f(\opt \cup S_{i+1})$  is at least $f(\opt) + f\left(\opt \cup \left( \cup_{i'=1}^{i+1}+ S_{i'}\right)\right)$ by submodularity. Applying this inequality for each $1 \leq i < m$, and using non-negativity of $f$, we imply that $\sum_{i=1}^m f(\opt \cup S_i) \geq (m-1)f(\opt)$. Therefore $B$ is at most $\frac{\E[\sum_{i=1}^m f(S_i)]}{m} + \frac{(m-(m-1))f(\opt)}{m}$, and the proof is completed for non-monotone $f$ as well.\\
\vspace{-2mm}
\end{proof}



\subsection{Overall Distributed Approximation Factor of $\beta$-nice Algorithms}
In Theorem~\ref{theorem:1/3_composable_coreset}, we prove that if on each part (set $T_i$) of the partitioning, we run a $\beta$-nice algorithm $\alg$, the union of output sets of $\alg$ will contain a set of size at most $k$ that preserves at least $\frac{1}{2+\beta}$ fraction of value of optimum set. If we run algorithm $\greedy$ on the union of output sets $\cup_{i=1}^m S_i$, using the classic analysis of algorithm $\greedy$, we can easily claim that the overall value of the output set at the end is at least $\frac{1-1/e}{2+\beta}$ fraction of $f(\opt)$. In Theorem~\ref{thm:overall_performance_greedy}, 
we show an improved distributed approximation factor.


\begin{theorem}\label{thm:overall_performance_greedy}
Let $S$ be the output of algorithm $\greedy$ over $\cup_{i=1}^m S_i$, i.e. $S = \greedy(\cup_{i=1}^m S_i)$ for a monotone submodular function $f$. Also let $S$ be the output of non-monotone submodular maximization algorithm of Buchbinder et al. \cite{BuchbinderFeldmanNaorSchwartzSODA2014} on $\cup_{i=1}^m S_i$ when $f$ is a non-monotone submodular function. The expected value of $\max\{f(S), \max_{i=1}^m\{f(S_i)\}\}$ is at least $\frac{(1-1/e)f(\opt)}{1+(1-1/e)(1+\beta)}$ for monotone and $\frac{(1-\frac{1}{m})f(\opt)/e}{1+(1+\beta)/e}$ for non-monotone submodular $f$.  In particular, for $\beta= 1$, the distributed approximation factors are $\ge 0.27$, and $\ge 0.21 - \frac{1}{4m}$ for monotone and non-monotone $f$ respectively.
\end{theorem}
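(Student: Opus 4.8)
The plan is to reuse the \emph{intermediate} inequality established inside the proof of Theorem~\ref{theorem:1/3_composable_coreset} — not merely its final statement — and combine it with the approximation guarantee of the second-phase algorithm, then balance the two resulting lower bounds. Recall that that proof actually showed $\E[f(\opt^S)] \ge f(\opt) - (1+\beta)\,\E[\max_{1\le i\le m} f(S_i)]$ for monotone $f$, and $\E[f(\opt^S)] \ge (1-\tfrac1m)f(\opt) - (1+\beta)\,\E[\max_{1\le i\le m} f(S_i)]$ for non-monotone $f$. Write $M \defeq \E[\max_{1\le i\le m} f(S_i)]$ for brevity.

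First I would invoke the guarantee of the post-processing algorithm run on the ground set $U = \cup_{i=1}^m S_i$. For monotone $f$, since $\opt^S = \opt \cap U$ is a feasible set of size at most $k$ contained in $U$, the classic analysis of $\greedy$ gives $f(S) = f(\greedy(U)) \ge (1-\tfrac1e)\,f_k(U) \ge (1-\tfrac1e)\,f(\opt^S)$ for every realization of the random partition; taking expectations and substituting the intermediate inequality yields $\E[f(S)] \ge (1-\tfrac1e)\bigl(f(\opt) - (1+\beta)M\bigr)$. For non-monotone $f$ the same reasoning applies with the $1/e$-approximation of Buchbinder et al.~\cite{BuchbinderFeldmanNaorSchwartzSODA2014} in place of $\greedy$; since that algorithm is itself randomized, one conditions on $U$ first, applies $\E[f(S)\mid U] \ge \tfrac1e f_k(U) \ge \tfrac1e f(\opt^S)$, then takes the outer expectation, obtaining $\E[f(S)] \ge \tfrac1e\bigl((1-\tfrac1m)f(\opt) - (1+\beta)M\bigr)$.

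To conclude, I would use the trivial bound $\E[\max\{f(S),\max_i f(S_i)\}] \ge \max\{\E[f(S)], M\}$. The lower bound on $\E[f(S)]$ is decreasing in $M$ while the bound $M$ is increasing in $M$, so the minimum over all admissible $M\ge 0$ of their maximum is attained at the crossover point where the two coincide. Solving $(1-\tfrac1e)(f(\opt)-(1+\beta)M) = M$ gives $M = \frac{(1-1/e)f(\opt)}{1+(1-1/e)(1+\beta)}$, exactly the claimed monotone bound; the analogous computation with the $1/e$ and $(1-\tfrac1m)$ factors gives $\frac{(1-1/m)f(\opt)/e}{1+(1+\beta)/e}$ for the non-monotone case. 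Finally, substituting $\beta=1$ and estimating numerically yields $\ge 0.27$ and $\ge \frac{1-1/m}{e+2} \ge 0.21 - \frac{1}{4m}$ respectively. The computations are routine; the only points needing care are (i) reusing the linear-in-$M$ intermediate inequality so the balancing argument goes through, and (ii) the conditioning step that separates the internal randomness of the non-monotone post-processing algorithm from the randomness of the partition. I do not anticipate a substantive obstacle beyond this bookkeeping.
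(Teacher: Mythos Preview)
Your proposal is correct and follows essentially the same route as the paper: both reuse the intermediate inequality $\E[f(\opt^S)]\ge f(\opt)-(1+\beta)\E[\max_i f(S_i)]$ (resp.\ with the $(1-\tfrac1m)$ factor) from the proof of Theorem~\ref{theorem:1/3_composable_coreset}, combine it with the $(1-\tfrac1e)$ (resp.\ $\tfrac1e$) guarantee of the second-phase algorithm, and balance against $\E[\max_i f(S_i)]$. The only cosmetic difference is that the paper sets $\rho \defeq \E[\max\{f(S),\max_i f(S_i)\}]/f(\opt)$ and solves the single linear inequality $\rho\ge(1-\tfrac1e)(1-(1+\beta)\rho)$ directly, whereas you phrase the same step as a crossover/min--max argument in the variable $M$.
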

\begin{proof}
By applying lemmas \ref{lem:linear_in_Deltas}, \ref{lem:upperbound_not_selected}, and \ref{lem:randomness}, we have that: $f(\opt^S) \geq f(\opt) - \beta\max_{i=1}^m f(S_i) - \frac{\sum_{i=1}^m f(S_i)}{m} \geq f(\opt) - (1+\beta) \max_{i=1}^m f(S_i)$ for a monotone submodular function $f$. Using the classic analysis of $\greedy$ on submodular maximization in~\cite{Nemhauser_Wolsey_Fisher78}, one can prove that $f(S) \geq (1-1/e) f(\opt^S)$ when $f$ is monotone. 
By taking the expectation of the two sides of this inequality, and using Lemma~\ref{lem:randomness}, we have that: 

$$
\E[f(S)] \geq 
\E\left[(1-1/e) \left(f(\opt) - (1+\beta)\max_{i=1}^m f(S_i)\right)\right] 
$$

By letting $\rho :=\frac{\E[\max\{f(S), \max_{i=1}^m \{f(S_i)\}\}]}{f(\opt)}$,
 the above inequality implies that  $\rho\ge (1-1/e)(1-(1+\beta)\rho)$, and therefore $\rho \geq \frac{1-1/e}{1+(1-1/e)(1+\beta)}$. Consequently $\E[\max\{f(S), \max_{i=1}^m$ $\{f(S_i)\}\}]$ is at least $\frac{(1-1/e)f(\opt)}{1+(1-1/e)(1+\beta)}$ which proves the claim for any monotone $f$. 

 If $f$ is non-monotone, we get a weaker inequality $f(S) \geq \frac{f(\opt^S)}{e}$ by applying the algorithm of Buchbinder et al. \cite{BuchbinderFeldmanNaorSchwartzSODA2014}. 
 Using lemmas \ref{lem:linear_in_Deltas}, \ref{lem:upperbound_not_selected}, and \ref{lem:randomness}, we have that: $f(\opt^S) \geq f(\opt) - \beta\max_{i=1}^m f(S_i) - \frac{\sum_{i=1}^m f(S_i)}{m} - \frac{f(\opt)}{m} \geq (1-\frac{1}{m})f(\opt) - (1+\beta) \max_{i=1}^m f(S_i)$. 
 Similarly, we can claim that  $\rho \geq \frac{(1-\frac{1}{m})/e}{1+(1+\beta)/e}$. This implies the desired lower bound of $\frac{(1-\frac{1}{m})/e}{1+(1+\beta)/e}$ on $\rho$.
\end{proof}

{\bf \noindent Non-monotone submodular maximization}
In Theorem~\ref{theorem:1/3_composable_coreset}, we proved that for a non-monotone $f$, $\greedy$ returns a $(\frac{1}{3} - \frac{1}{3m})$-approximate randomized composable core-set. When we have a small number of machines, $m$, this approximation factor becomes small.  So we suggest the non-monotone submodular maximization algorithm of Buchbinder et al. \cite{BuchbinderFeldmanNaorSchwartzSODA2014} as an alternative. We note that since the items are partitioned randomly, $\frac{1}{m}$ fraction of optimum solution is sent to each machine in expectation, and the expected value $\E[f(\opt \cap S_i)]$ is at least $\frac{f(\opt)}{m}$ for any $1 \leq i \leq m$. Using the algorithm of Buchbinder et al. we get a $\frac{1}{em}$-approximate randomized core-set. We also note that the $0.21$-approximation guarantee for non-monotone $f$ in Theorem~\ref{thm:overall_performance_greedy} holds for large enough number of machines $m$. For small $m$, this $\frac{1}{em}$ approximation can be used as an alternative. 

\subsection{Examples of $\beta$-Nice Algorithms}\label{subsec:betaNiceAlgs}
In this section, we show that several existing algorithms for submodular maximization in the literature belong to the family of $\beta$-nice algorithms.  

{\bf \noindent Algorithm $\greedy$ with Consistent Tie-breaking:} 
First, we observe that algorithm $\greedy$ is $1$-nice if it has a consistent tie breaking rule:  while selecting among the items with the same marginal value, $\greedy$ can have a fixed strict total ordering ($\Pi$) of the items, and among the set of items with the maximum marginal value chooses the one highest rank in $\Pi$. The consistency of the tie breaking rule implies the first property of nice algorithms. To see the second property, first observe that  (i) $\greedy$ always adds an item with the maximum marginal $f$ value, and (ii) using submodularity of $f$, the marginal $f$ values are decreasing as more items are added to the selected items. Therefore, after $k$ iterations, the marginal value of adding any other item, is less than each of the $k$ marginal $f$ values we achieved while adding the first $k$ items. This implies the 2nd property, and concludes that $\greedy$ with a consistent tie-breaking rule is a $1$-nice algorithm. 

{\bf \noindent An almost linear-time $(1+\epsilon)$-nice  Algorithm:} Badanidiyuru and Vondrak \cite{AshwinVondrakSODA2014} present an almost linear-time $(1-\frac{1}{e}-\epsilon)$-approximation algorithm for monotone submodular maximization with a cardinality constraint. We observe that this algorithm is $(1+2\epsilon)$-nice. The algorithm is a relaxed version of $\greedy$ where in each iteration, it adds an item with almost maximum marginal value (with at least $1-\epsilon$ fraction of the maximum marginal). As a result, similar to the proof for $\greedy$, one can show that this linear-time algorithm $\frac{1}{1-\epsilon}$-nice and consequently $(1+2\epsilon)$-nice for $\epsilon \leq 0.5$.

\section{Hardness Results for Randomized Core-sets}\label{sec:hardness_results}
In Section~\ref{sec:rand-core-set}, we showed that a family of $\beta$-nice algorithms are $\frac{1}{2+\beta}$-approximate randomized core sets (e.g., $\frac{1}{3}$-approximate for algorithm $\greedy$). 
Here we show what kinds of randomized core-sets are not achievable.  
In particular, we prove, in Theorem~\ref{thm:one_half_barrier} that if we restrict our attention to core-sets of size $k$, algorithm $\greedy$ or any local search algorithm does not achieve an approximation factor better than $\frac{1}{2}$ even if each item is sent to multiple machines (up to multiplicity $C = o(\sqrt{m})$). This leads to the following question: does increasing the output size of core-sets, $k'$, help with the approximation factor? In other words, can we get a better than $1/2$ approximation factor if we allow the algorithm to select more than $k$ items on each machine?
To answer this question, we first prove, in Theorem~\ref{thm:OneMinusOneOverEHardness} that it is not possible to achieve a randomized composable core-set of size $k' = o(\frac{n}{Cm})$ with approximation better than $1 - \frac{1}{e}$ even when we allow for multiplicity $C = o(\sqrt{\frac{m}{k}})$. 
We then show in Section~\ref{sec:linear} that although it is not possible to beat the $1-\frac{1}{e}$ barrier, we can slightly increase  the output sizes, apply algorithm $\greedy$ to achieve an approximation factor $\approx 2-\sqrt{2} > \frac{1}{2}$ with a constant multiplicity.

Following we show a limitation on core-sets of size $k$. In particular, we introduce a family of instances for which algorithm $\greedy$ and any algorithm that returns a locally optimum solution of size at most $k$ do not achieve a better than $\frac{1}{2}+\epsilon$-approximate core-set for any $\epsilon > 0$. This lower-bound result applies to a coverage valuation (and therefore submodular) function $f$ and it holds even if we send each item to multiple machines. 

\begin{theorem}\label{thm:one_half_barrier}
For any $\epsilon > 0$, assuming each item is sent to at least one random machine (to set $T_i$ for a random $1 \leq i \leq n$), and at most $C \leq \sqrt{\frac{\epsilon m}{2}}$ random machines, and the number of items an algorithm is allowed to return is at most $k$, there exists a family of instances for which algorithm $\greedy$ and any other local search algorithm returns an at most $(\frac{1}{2} + \frac{1}{k} + \epsilon)$-approximate composable core-set.  
\end{theorem}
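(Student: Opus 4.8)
The plan is to construct a family of coverage instances on which the optimum is a single "good" set of $k$ elements that together cover a large universe, while the ground set also contains many "decoy" elements whose individual coverage looks attractive to any local search / greedy procedure but which collectively add little. I would set up a bipartite structure: a universe $U = U^{\opt} \cup U^{\mathrm{extra}}$ where $|U^{\opt}|$ accounts for the bulk of the value and is covered by the $k$ optimal sets $O_1,\dots,O_k$, and $U^{\mathrm{extra}}$ is partitioned among the decoys. The key design goal is that within any single machine $T_i$, after random clustering with multiplicity $C$, a greedy or locally-optimal algorithm of output size $k$ will be "fooled" into picking a mix of optimal elements and decoys so that the union $\cup_i \alg(T_i)$ misses roughly half of $U^{\opt}$.

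The core steps, in order, would be: (1) Describe the instance precisely — number of machines $m$, size of the universe, the $k$ optimal sets, and $\Theta(mk)$ (or more) decoy sets each covering a private block of $U^{\mathrm{extra}}$ sized so that a single decoy's marginal is just slightly below that of a fresh optimal element but the decoys pairwise overlap heavily (or have rapidly diminishing returns) so that $k$ of them on one machine cover only a small region. (2) Argue that with the stated multiplicity bound $C \le \sqrt{\epsilon m / 2}$, by a second-moment / Chebyshev (or union-bound) argument, with high probability no machine receives more than, say, a $(1+o(1))\cdot C\cdot k/m$-ish share — more precisely, the number of the $k$ optimal elements landing on any fixed machine is small relative to $k$, so on each machine greedy cannot assemble a large chunk of $O$ and instead fills most of its $k$ slots with decoys. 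The quadratic-in-$C$ slack is exactly what is needed so that the "collision" probability of two optimal elements on the same machine stays controlled; this is where the hypothesis $C \le \sqrt{\epsilon m/2}$ is consumed. (3) Show that any local search algorithm constrained to $k$ outputs behaves like greedy here: a locally optimum set of size $k$ on a machine that has few optimal elements will be a near-greedy mixture, because swapping a decoy for an absent optimal element does not improve coverage once the decoys are in place (this requires tuning the marginals so the instance is "locally flat"). (4) Finally, bound $f_k(\cup_i \alg(T_i))$: the union of the per-machine outputs, restricted to at most $k$ elements, can recover at most roughly half of $U^{\opt}$ plus a negligible amount from decoys, giving the $\tfrac12 + \tfrac1k + \epsilon$ bound after accounting for the one "extra" optimal element a machine might keep and for the $\epsilon$-probability bad events.

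The main obstacle I anticipate is step (3)–(4): making the instance simultaneously fool \emph{greedy} (a marginal-value tie-breaking issue — recall the paper stresses consistent tie-breaking, and here we want to exploit the opposite behavior or at least show even the "good" greedy is stuck) \emph{and every local search algorithm of output size} $k$. Local search is more robust than greedy, so the decoy blocks must be engineered so that the coverage function has no improving $k$-swap from the "bad" configuration — essentially the bad configuration must be a genuine local optimum of value $\approx \tfrac12 f(\opt)$. Concretely I would likely make the decoys form a nested/laminar family so that after greedy picks a prefix of them, any single replacement by an optimal element covers no new ground (because the decoy it replaces already "blocks" the relevant universe elements via shared coverage), which forces locality. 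Getting the arithmetic of block sizes to make this exactly work while keeping $f(\opt)$ at the claimed ratio is the delicate part; everything else (the clustering concentration, the final union bound) is routine.
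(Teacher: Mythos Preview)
Your proposal has a genuine gap at the heart of the construction: you set the decoys' marginal value \emph{just below} that of a fresh optimal element. With that ordering, greedy on each machine will first pick whatever optimal elements are present and only then fill the remaining slots with decoys. Consequently every optimal element gets selected on at least one of the machines it is sent to, the union $\cup_i \alg(T_i)$ contains all of $\opt$, and $f_k$ of the union equals $f(\opt)$---approximation factor $1$, not $1/2$. The concentration argument in your step~(2) (``few optimal elements per machine'') is irrelevant to the core-set ratio: the definition asks about $f_k$ of the \emph{union} over all machines, not the value attained on a single machine, so having $\opt$ scattered thinly does no harm as long as each piece is picked somewhere.

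The paper's construction inverts your ordering and breaks the symmetry of the optimum. The optimum is one huge set $A_1$ of size $k^2$ together with $k-1$ tiny sets $A_2,\dots,A_k$ each of size $k-1$; the decoys $B_{i,j}$ partition $A_1$ into $k$ blocks of size $k$, each replicated $L \gg m$ times so every machine sees all $k$ decoy types. On a machine lacking $A_1$, each decoy has marginal $k > k-1 = |A_i|$, so greedy (and any local search, since the decoy configuration is a strict local optimum) outputs only decoys and discards $A_2,\dots,A_k$. The only way $A_i$ ($i\ge 2$) survives is if it shares a machine with $A_1$ (then $A_1$ is picked first, killing all decoy marginals), and the collision probability is at most $C^2/m \le \epsilon/2$---this is where the hypothesis $C \le \sqrt{\epsilon m/2}$ is actually spent. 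The union of all outputs therefore covers essentially only $A_1$, i.e.\ $k^2$ out of $k^2+(k-1)^2$, giving the $\tfrac12 + \tfrac1k + \epsilon$ bound. Note also that the decoys live \emph{inside} $A_1$, not in an extra universe; an extra-universe design would let the decoys contribute real value and spoil the ratio.
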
    

\begin{proof}
Let $\NN$ be a subfamily of subsets of a universe ground set $\UU$, and let
$f:2^\NN\rightarrow R$ be defined as follows: for any set $S \subseteq \NN$, $f(S) = |\cup_{A \in S} A|$. 
Note that $f$ is a coverage function, and thus it is a monotone submodular function. 
Let the universe $\UU$ be $\{1, 2, \cdots, k^2+(k-1)^2\}$. 
The family $\NN$ of subsets of $\UU$ consists of two types of subsets: a) $k$ sets $\{A_i\}_{i=1}^k$, and b) $kL$ sets $B_{i,j}$ for any $1 \leq i \leq k$, and $1 \leq j \leq L$ for some large $L > m\ln\left(\frac{\epsilon}{2Ck}\right)$. 
Let the first set $A_1$ be the subset $\{1, 2, \cdots, k^2\}$. For each $2 \leq i \leq k$, we define $A_i$ to be the set $\{k^2+(k-1)(i-2)+1, k^2+(k-1)(i-2)+2, \cdots, k^2+(k-1)(i-1)\}$ with size $k-1$. 
Therefore the first type of sets $\{A_i\}_{i=1}^k$ form a partitioning of the universe, and therefore they are the optimum family of $k$ sets with the maximum $f$ value. 
For each $1 \leq i \leq k$, all sets $B_{i,j}$ are equal to $\{(i-1)k+1, (i-1)k+2, \cdots, ik\}$ with size $k$. So all type $B$ sets with the same $i$ value  (and different $j$ values) are identical.

We say a machine is a {\em good} machine if for each $1 \leq i \leq k$, it receives at least a set $B_{i,j}$ for some $1 \leq j \leq L$, and we call it is a {\em bad} machine otherwise. At first, we show that the output of algorithm $\greedy$ or any local search algorithm on a good machine that has not received set $A_1$ is exactly one set $B_{i,j}$ for each $1 \leq i \leq k$, and nothing else. In other words, these algorithms do not return any of the sets $A_2, A_3, \cdots, A_k$ unless they have a set $A_1$ as part of their input, or they are running on a bad machine. It is not hard to see that if $A_1$ is not part of the input each set $B_{i,j}$ has marginal value $k$ if no other set $B_{i,j'}$ for $j' \neq j$ has been selected before. On the other hand, the marginal value of each of the  sets $\{A_i\}_{i=2}^k$ is $k-1$. So $\greedy$ or any local search algorithm does not select any of the sets  $\{A_i\}_{i=2}^k$ unless some set $B_{i,j}$ has been selected for each $1 \leq i \leq k$. The fact that the output sizes are limited to $k$ implies that sets $\{A_i\}_{i=2}^k$ are not selected.

Now it suffices to prove that most machines are good, and most sets in $\{A_i\}_{i=2}^k$ are not sent to a machine that has set $A_1$ as well. To prove this, we show that each machine is good with probability at least $1-\frac{\epsilon}{2C}$. To see this, note that for each $i$, there are $L$ identical sets $B_{i,j}$, and the probability that a machine does not receive any of these $L$ copies is at most $(1-\frac{1}{m})^L \leq e^{-L/m} \leq \frac{\epsilon}{2Ck}$. So the probability that a machine is good is at least $(1-\frac{\epsilon}{2Ck})^k \geq 1- \frac{\epsilon}{2C}$. Since each set is sent to at most $C$ machines, for each $1 \leq i \leq k$, we know that set $A_i$ is sent to only good machines with probability at least $(1-\frac{\epsilon}{2C})^C \geq 1-\frac{\epsilon}{2}$. We also know that the probability that for each $2 \leq i \leq k$, the probability of set $A_i$ sharing a machine with set $A_1$ is at most $\frac{C^2}{m} \leq \frac{\epsilon}{2}$ since each set is sent to at most $C$ random machines. As a result, at most $\frac{\epsilon}{2}+\frac{\epsilon}{2}=\epsilon$ fraction of sets  $\{A_i\}_{i=2}^k$ are selected by at least one machine in expectation, and therefore in expectation the size of the union of all selected sets (not only the best $k$ of them) is at most $k^2+\epsilon (k-1)^2$ which is less than $\frac{1}{2} + \frac{1}{k} + \epsilon$ fraction of the value of the optimum $k$ sets ($\{A_i\}_{i=1}^k$).
\end{proof}

Following, we prove that it is not possible to achieve a better than $1 - \frac{1}{e} + \epsilon$ approximation factor for submodular maximization subject to a cardinality constraint even if 
each item is sent to at most $C$ machines where $C \leq \sqrt{\epsilon m}$, and
each machine is allowed to return $k' = \frac{\epsilon (1-\epsilon) n}{8Cm}$ items. We note that in this hardness result, the size of the output sets can be arbitrarily large in terms of $k$, i.e. for instance $k'$ could be $\Omega(2^k)$ for some values of $n, m, C$, and $k$.   This is an information theoretic hardness result that does not use any complexity theoretic assumption. In fact, the instance itself can be optimally solved on a single machine, but distributing the items among several machines makes it hard to preserve the optimum solution. Before presenting the hardness result, we state the following version of Chernoff bound  (which we use in the proof) as given on
page 267, Corollary $A.1.10$ and Theorem $A.1.13$ in \cite{AlonSpencerBook2000}:

\begin{lemma}\label{lem:ChernoffBound}
Suppose $X_1, X_2, \cdots, X_n$ are $0-1$ random variables such that $Pr[X_i=1] = p_i$, and let $\mu = \sum_{i=1}^n p_i$, and $X = \sum_{i=1}^n X_i$. Then for any $a > 0$

$$
Pr[X-\mu \geq a] \leq e^{a-(a+\mu)\ln(1+a/\mu)} 
$$

Moreover for any $a > 0$

$$
Pr[X-\mu \leq -a] \leq e^{-a^2/\mu} 
$$
\end{lemma}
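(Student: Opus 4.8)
The plan is to prove both inequalities by the classical exponential‑moment (Chernoff–Bernstein) method, exactly as in the cited text \cite{AlonSpencerBook2000}; recall that the $X_i$ are assumed mutually independent. The single ingredient behind both tails is the moment generating function estimate: for every real $\lambda$, $\E\bigl[e^{\lambda X_i}\bigr] = 1 - p_i + p_i e^{\lambda} = 1 + p_i(e^{\lambda}-1)$, so by independence together with $1+z\le e^{z}$,
$$\E\bigl[e^{\lambda X}\bigr] \;=\; \prod_{i=1}^{n}\bigl(1 + p_i(e^{\lambda}-1)\bigr) \;\le\; \exp\!\Bigl(\sum_{i=1}^{n} p_i(e^{\lambda}-1)\Bigr) \;=\; e^{\mu(e^{\lambda}-1)} .$$

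For the upper tail, I would fix $\lambda>0$ and apply Markov's inequality to the nonnegative variable $e^{\lambda X}$: $Pr[X-\mu\ge a] = Pr[e^{\lambda X}\ge e^{\lambda(\mu+a)}] \le e^{-\lambda(\mu+a)}\,\E[e^{\lambda X}] \le \exp\bigl(\mu(e^{\lambda}-1) - \lambda(\mu+a)\bigr)$. Minimizing the exponent over $\lambda>0$ leads to $e^{\lambda}=1+a/\mu$, i.e.\ $\lambda=\ln(1+a/\mu)>0$, which is admissible; plugging this back gives $\mu(e^{\lambda}-1)=a$ and $\lambda(\mu+a)=(a+\mu)\ln(1+a/\mu)$, so the bound becomes exactly $Pr[X-\mu\ge a]\le e^{\,a-(a+\mu)\ln(1+a/\mu)}$, which is the first claimed inequality.

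For the lower tail I would run the symmetric argument with $e^{-\lambda X}$: for $\lambda>0$, $Pr[X-\mu\le-a]=Pr[e^{-\lambda X}\ge e^{-\lambda(\mu-a)}]\le e^{\lambda(\mu-a)}\,\E[e^{-\lambda X}]\le \exp\bigl(\mu(e^{-\lambda}-1)+\lambda(\mu-a)\bigr)$, now using the product bound with $-\lambda$ in place of $\lambda$. From here one optimizes in $\lambda$: either directly, obtaining $\lambda=-\ln(1-a/\mu)$ and the entropy‑form exponent $-a-(\mu-a)\ln(1-a/\mu)$ (the event is vacuous for $a>\mu$ and a single point for $a=\mu$, where it reduces to $\prod_i(1-p_i)\le e^{-\mu}$), or one first relaxes $e^{-\lambda}-1\le -\lambda+\lambda^{2}/2$ (valid for $\lambda\ge0$) and then takes $\lambda=a/\mu$. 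Either route produces a Gaussian‑type tail of the form $e^{-c\,a^{2}/\mu}$; the precise constant stated here is Theorem~A.1.13 of \cite{AlonSpencerBook2000}.

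I do not expect a genuine obstacle here: this is a textbook Chernoff bound, and the whole argument reduces to the convexity estimate $1+z\le e^{z}$ and a one‑line minimization in $\lambda$. The only mildly delicate step is choosing the correct $\lambda$ for each tail and, in the lower‑tail case, passing to the quadratic majorant of $e^{-\lambda}-1$ \emph{before} optimizing (doing it afterwards loses the constant). Since the lemma is invoked later only to assert that a sum of independent indicators concentrates sharply about its mean, any of these standard sub‑exponential/sub‑Gaussian forms suffices for the subsequent arguments.
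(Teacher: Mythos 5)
The paper never proves this lemma: it is quoted with a pointer to Corollary~A.1.10 and Theorem~A.1.13 of Alon--Spencer, so your exponential-moment derivation is precisely the argument the citation stands for. Your upper-tail computation is complete and correct: $\E[e^{\lambda X}]\le e^{\mu(e^{\lambda}-1)}$, Markov's inequality, and the choice $\lambda=\ln(1+a/\mu)$ reproduce the exponent $a-(a+\mu)\ln(1+a/\mu)$ exactly as stated.

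For the lower tail, however, you should not defer ``the precise constant'' to the reference, because the constant is exactly where the issue lies. Your own route --- $e^{-\lambda}-1\le-\lambda+\lambda^{2}/2$ followed by $\lambda=a/\mu$ --- gives exponent $\mu\lambda^{2}/2-\lambda a=-a^{2}/(2\mu)$, i.e.\ the bound $e^{-a^{2}/(2\mu)}$, and that is also what Theorem~A.1.13 of Alon--Spencer actually asserts. The inequality as printed in the lemma, $e^{-a^{2}/\mu}$, is false in general: taking $n$ Bernoulli$(\mu/n)$ variables with $n$ large (so $X$ is essentially Poisson with mean $\mu$) and $a=\mu/2$, the true probability is $e^{-(1/2-\frac{1}{2}\ln 2+o(1))\mu}\approx e^{-0.153\mu}$, which exceeds the claimed $e^{-\mu/4}$ once $\mu$ is large. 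So what you can prove is the $e^{-a^{2}/(2\mu)}$ version, not the literal statement; this is a transcription error in the lemma rather than a gap in your method, and indeed every later use of the lemma in the paper (the $e^{-\epsilon^{2}(n/m)/2}$ and $e^{-k/8}$ estimates) invokes the $e^{-a^{2}/(2\mu)}$ form. One small aside: your parenthetical that optimizing $\lambda$ before majorizing ``loses the constant'' is not quite right --- the entropy-form exponent $-a-(\mu-a)\ln(1-a/\mu)$ is itself at most $-a^{2}/(2\mu)$ for $0<a\le\mu$, so either order of operations recovers the same Gaussian tail.
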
 

\begin{theorem}\label{thm:OneMinusOneOverEHardness}
For any $\epsilon > 0$, $k \geq \frac{8}{\epsilon}$ and $C \leq \sqrt{\frac{\epsilon m}{4k}}$, assuming each item is sent to at most $C$ machines randomly, and each machine can output at most $k' = \frac{\epsilon(1-\epsilon)}{8C} \times \frac{n}{m}$ items, there exists a family of instances for which no algorithm can guarantee a core-set of expected value more than $(1 - \frac{1}{e} + \epsilon)$ fraction of the optimum solution.  
\end{theorem}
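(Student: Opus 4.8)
The plan is to construct a hard instance based on the classic $(1-1/e)$-hardness instance for max coverage (a partition-vs-equal-overlap gadget), and then argue that random distribution of the ground-set items across machines forces any core-set algorithm to lose information about the optimal solution. Concretely, I would take the ground set $\NN$ (the items to be distributed) to be a collection of ``elements'' of some universe $\UU$, with $f(S) = |\bigcup_{A\in S}A|$ a coverage function. The universe $\UU$ is partitioned into $k$ blocks. The optimal family $\opt = \{O_1,\dots,O_k\}$ consists of $k$ disjoint sets, each covering one block, so $f(\opt) = |\UU|$. In addition $\NN$ contains many ``generic'' sets, each of which is a random-like subset that covers a $1/k$ fraction of \emph{every} block (so a greedy/any-$k'$ selection built only from generic sets covers at most a $1-(1-1/k)^{k'}\to 1-1/e$ fraction when $k' = k$, and in general the coverage of any $k$ generic sets is at most $(1-(1-1/k)^k)|\UU| \le (1-1/e)|\UU|$). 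The key point: there are many generic sets but only $k$ optimal sets, so after random distribution into $m$ machines with multiplicity $C$, each machine sees on the order of $k' = \frac{\epsilon(1-\epsilon)}{8C}\cdot\frac{n}{m}$ items, of which at most a few are members of $\opt$; the algorithm is allowed to output $k'$ items per machine, but since a machine's local ``view'' barely distinguishes an optimal set from a generic one, it cannot reliably keep the $O_i$'s.

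The main steps, in order, would be: (i) fix the parameters of the gadget — number of generic sets, size of $\UU$, and block structure — so that $n = |\NN|$ is consistent with the stated $k' = \frac{\epsilon(1-\epsilon)n}{8Cm}$ and so that any family of $k$ generic sets covers at most $(1-1/e+\epsilon/2)f(\opt)$; (ii) use the Chernoff bound of Lemma~\ref{lem:ChernoffBound} to control the number of items landing on each machine: with the chosen $C \le \sqrt{\epsilon m/(4k)}$, each machine receives at most $k'$ items except with tiny probability, so the size constraint is (essentially) never slack, i.e.\ a machine that wants to output an optimal set must drop something else; (iii) observe that by symmetry of the generic sets, a machine cannot tell which of its sets are in $\opt$ — formally, condition on the multiset of sets received by a machine and argue that the posterior is invariant under relabeling generic $\leftrightarrow$ optimal, so the probability that a given $O_i$ is output by the (single) machine(s) holding it is at most $k'/(\text{number of sets on that machine that look optimal-or-generic})$, which is $o(1)$ after averaging; (iv) conclude that in expectation only an $\epsilon/2$-fraction (say) of the $k$ optimal sets survive into $\bigcup_i \alg(T_i)$, so $\E[f_k(\bigcup_i\alg(T_i))]$ is at most the value obtainable from $k$ generic sets plus the tiny contribution of the few surviving $O_i$'s, which is at most $(1-1/e+\epsilon)f(\opt)$. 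The multiplicity $C$ enters twice — once in the Chernoff tail (union bound over the $C$ copies of each item) and once in the probability that an optimal set shares a machine with enough other optimal sets to be distinguishable — and the bound $C \le \sqrt{\epsilon m/(4k)}$ is exactly what makes both negligible, paralleling the role of $C\le\sqrt{\epsilon m/2}$ in Theorem~\ref{thm:one_half_barrier}.

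The hardest part will be step (iii): making the indistinguishability argument fully rigorous. One has to be careful that the algorithm sees not just the abstract isomorphism type of its input but the actual sets as subsets of $\UU$, so ``generic'' sets must be constructed (e.g.\ via a random or combinatorial design) so that a single machine's input, restricted to the coordinates it can observe, genuinely does not reveal which sets are the $O_i$'s — or alternatively one randomizes the identity of $\opt$ itself and argues in expectation over this randomization. I would handle this by randomizing which $k$ sets play the role of $\opt$ (a uniformly random $k$-subset of a symmetric family) so that, conditioned on any machine's received input, each received set is equally likely to be one of the optimal ones; then a counting/averaging argument over the $\le k'$ outputs per machine gives that each $O_i$ is retained with probability $O(k' C m / n) = O(\epsilon)$, and linearity of expectation finishes it. The remaining steps are routine once the parameters are pinned down.
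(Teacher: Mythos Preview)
Your overall skeleton matches the paper's proof: a coverage instance with $k$ disjoint optimal sets hidden among many look-alike ``generic'' sets, an indistinguishability argument showing each machine keeps an optimal set with probability $O(k'm/n)=O(\epsilon/C)$, a collision bound $(Ck)^2/m\le \epsilon k/4$ from $C\le\sqrt{\epsilon m/(4k)}$ for pairs of optimal sets landing together, and a tail bound showing any $k$ non-optimal sets cover at most $(1-1/e+\epsilon/2)|\UU|$. The counting in your last paragraph, $O(k'Cm/n)=O(\epsilon)$, is exactly right.

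However, two points need fixing. First, your initial gadget --- optimal sets each concentrated on one block, generic sets spread over all blocks --- is \emph{locally distinguishable}: a machine can look at any single received set and tell immediately whether it is block-supported or spread, so step~(iii) fails outright for that construction. Your proposed fix (``randomize which $k$ sets play the role of $\opt$ from a symmetric family'') is the right instinct, but you cannot do it over a family where the two types are structurally different. The paper resolves this cleanly: \emph{every} set $A_i\subset\UU$, $1\le i\le n$, is a uniformly random subset of size $\ell=|\UU|/k$; the first $k$ are additionally constrained to form a random partition of $\UU$, while $A_{k+1},\dots,A_n$ are independent uniform $\ell$-subsets. Then each individual $A_i$ has the same marginal distribution, so a machine holding exactly one optimal set literally cannot tell which of its inputs it is. Second, your step~(ii) is inverted: the Chernoff bound is used to show each machine receives \emph{at least} $(1-\epsilon)n/m$ items (not at most $k'$), so that the fraction of its input it may output is at most $k'/((1-\epsilon)n/m)\le \epsilon/(8C)$. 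Finally, for the random construction you also need a union bound over all $\binom{n}{k}$ size-$k$ subsets of $\{A_{k+1},\dots,A_n\}$ to conclude that \emph{every} such family covers at most $(1-1/e+\epsilon/4)|\UU|$; this forces $|\UU|=\Omega(n\log(k)/\epsilon^2)$, a parameter choice your step~(i) should pin down.
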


\begin{proof}
Similar to the proof of Theorem\ref{thm:one_half_barrier}, 
we define $\NN$ to be a family of subsets of the ground set $\UU$. 
We assume that $|\UU| \gg |\NN|=n$, and also assume that $|\UU| = k\ell$ for some integer $\ell$ . 
For any set $S \subseteq \NN$, let $f(S) = |\cup_{A \in S} A|$ which is number of items in the ground set $\UU$ that are present in at least one of the sets of $S$. 
We define the $n$ items $\NN = \{A_1, A_2, \cdots, A_n\}$ as follows. Each item $A_i$ is a subset of $\UU$ with size $\ell$.
The first $k$ items $\{A_i\}_{i=1}^k$ form a random partitioning of the ground set $\UU$ (they are $k$ disjoint sets each with size $\ell$ to cover all $k\ell = |\UU|$ elements of $\UU$). So these $k$ items form an optimum solution with value $|\UU| = k \ell$. 
For each $k < i \leq n$, we define each item $A_i$ to be a random subset of $\UU$ with size $\ell$. 

Each item $A_i$ ($i \leq k$) of the optimum solution is a random subset of $\UU$ with size $\ell$, so in a machine $A_i$ cannot be distinguished from items not in the optimum solution unless some other item $A_j$ (for some $j \leq k$) is present in the machine.  
Therefore any machine that receives only one item of the optimum solution will return it with probability at most $\frac{k'}{(1-\epsilon)n/m} + e^{-\frac{\epsilon^2n/m}{2}} \leq \frac{\epsilon}{8C} + \frac{\epsilon}{8C}$. Because each machine returns at most $k'$ items, and with probability at least $1 - e^{-\frac{\epsilon^2n/m}{2}}$, it has at least $(1-\epsilon)\frac{n}{m}$ items. We note that since each item is sent to at least one machine, the average load is at least $\frac{n}{m}\geq \frac{2\ln(8C/\epsilon)}{\epsilon^2}$.   

So among the optimum solution items that are alone in their machines, at most $\frac{\epsilon}{4C}$ fraction of them will be selected. We note that each item is sent to at most $C$ machines, therefore in total $\frac{\epsilon k}{4}$ optimum items will be selected in this category in expectation. 

On the other hand, the total number of optimum items that share a machine with some other optimum item is at most $\frac{(Ck)^2}{m} \leq \frac{\epsilon k}{4}$ (an upper bound on the expected number of collisions).   We conclude that in total at most $\frac{\epsilon k}{2}$ optimum items will be selected in expectation. Therefore the total value of the selected optimum items does not exceed $\frac{\epsilon}{2}$ fraction of the optimum solution. 

It suffices to prove that any subset of $k$ items from $\{A_i\}_{i=k+1}^n$ do not have value more than $1 - \frac{1}{e} + \frac{\epsilon}{2}$ fraction of the optimum solution. 
Consider a subset of $k$ items $\{A_{a_i}\}_{i=1}^k$, where $k < a_i \leq n$ for any $1 \leq i	 \leq k$. Each element in the ground set $\UU$ belongs to at least one of these sets with probability $1 - (1-\frac{1}{k})^k \leq 1-\frac{1}{e} + \frac{1}{k} \leq 1- \frac{1}{e} + \frac{\epsilon}{8}$. The probability that there are more than $(1- \frac{1}{e} + \frac{\epsilon}{4})|\UU|$ items in the union of these $k$ sets is not more than $e^{-\frac{(\epsilon/8)^2(1-\frac{1}{e})|\UU|}{3}} \leq \frac{\epsilon}{4} e^{-n\ln(k)}$ using Chernoff bound (Lemma \ref{lem:ChernoffBound}). 
We note that we need $|\UU|$ to be greater than  $\frac{305 n \ln(k) \ln(1/\epsilon)}{\epsilon^2}$.
We know that there are less than ${n \choose k} \leq e^{n\ln(k)}$ size $k$ subsets of $\{A_{k+1}, A_{k+2}, \cdots, A_n\}$. Using union bound, we imply that with probability at least $1 - \frac{\epsilon}{4}$, the union of any $k$ sets in $\{A_{k+1}, A_{k+2}, \cdots, A_n\}$ has size at most $(1-\frac{1}{e} + \frac{\epsilon}{4})|\UU|$. We conclude that the expected value of the best core-set among the selected items is not more than $\frac{\epsilon}{2} |\UU| + (1 - \frac{1}{e} + \frac{\epsilon}{4})|\UU| + \frac{\epsilon}{4}|\UU| = (1 - \frac{1}{e} + \epsilon)|\UU|$. 
\end{proof}

\section{Better Randomized Core-sets for Monotone Submodular Maximization}\label{sec:linear}
In this section, we prove that  although it is not possible to beat the $1-\frac{1}{e}$ barrier, we can slightly increase  the output sizes (to $k'=(\sqrt{2}+1)k$), and apply algorithm $\greedy$ to achieve an approximation factor $\approx 2-\sqrt{2} > \frac{1}{2}$ with a constant multiplicity.
Furthermore, we show in Theorem~\ref{thm:585tight} that our analysis is tight for algorithm $\greedy$ even if we increase the core-set sizes significantly.  
Finally, we present in Subsection~\ref{subsec:PseudoGreedy} a post-processing algorithm $\pseudogreedy$ that achieves an overall distributed approximation factor better than $1/2$. In particular, after the first phase, we show how to find a size $k$ subset of the union of selected items 
with expected value at least $(0.545-o(1))f(\opt)$. 
Since in this section, we are dealing with a monotone submodular function $f$, we can assume {\sc WLOG} that $f(\emptyset) = 0$.

\begin{theorem}\label{thm:two_minus_sqrt_two_core-set}
  For any integer $C \geq 1$, any cardinality constraint $k = o(m)$, 
 algorithm $\greedy$ is a $\left(2-\sqrt{2} - O\left(\frac{1}{k} + \frac{\ln(C)}{C}\right)\right)$-approximate randomized composable core-set of multiplicity $C$ and size $k' = (2\sqrt{2}+1)k$ for any monotone submodular function $f$. By letting $C={1\over \epsilon}$, this leads to a randomized composable core-set of approximation factor $0.5857$. 
\end{theorem}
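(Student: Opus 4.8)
The plan is to re-run the argument of Theorem~\ref{theorem:1/3_composable_coreset} while cashing in the fact that \greedy is now allowed $k'=(2\sqrt2+1)k$ picks per machine. Write $S_i^{(j)}$ for the first $j$ items \greedy picks on $T_i$ (so $S_i^{(k')}=S_i$, and $S_i^{(k)}$ is still a legal size-$k$ candidate, whence $f_k(\cup_i S_i)\ge\max_i f(S_i^{(k)})$), and let $\delta^i_1\ge\delta^i_2\ge\cdots\ge\delta^i_{k'}\ge 0$ be the corresponding marginals, non-increasing by submodularity and the consistent tie-break. Lemmas~\ref{lem:linear_in_Deltas} and~\ref{lem:upperbound_not_selected} carry over unchanged; what improves is the quantitative $\beta$-nice bound. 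If $x\in T_i\setminus S_i$ then $x$ was available at every greedy step, so $\Delta(x,S_i)\le\delta^i_{k'}$, and $\delta^i_{k'}$ admits two bounds: averaging the first $k$ marginals gives $\delta^i_{k'}\le\delta^i_k\le f(S_i^{(k)})/k$, and averaging the last $k'-k$ gives $\delta^i_{k'}\le (f(S_i)-f(S_i^{(k)}))/(2\sqrt2\,k)$; an identical averaging over the tail block also yields $f(S_i)\le(2\sqrt2+1)f(S_i^{(k)})$. Plugged into the analogue of Lemma~\ref{lem:upperbound_not_selected}, the amount charged to the selected items of a machine that hosted a lost optimum item is at most $\max_i f(S_i^{(k)})$ (not $\beta\max_i f(S_i)$), the residual to be paid being the greedy ``tail'' $f(S_i)-f(S_i^{(k)})$.

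Second, I would reprove Lemma~\ref{lem:randomness} for general multiplicity $C$. With each item placed on $C$ independent uniformly random machines, $Pr[x\in T_i]=C/m$, and the same pairing/telescoping computation as in Lemma~\ref{lem:randomness} (now the ratio $Pr[T_i=T\cup\{x\}]/(Pr[T_i=T\cup\{x\}]+Pr[T_i=T])$ equals $C/m$; for a lost item, which lies on $C$ machines all of which rejected it, one averages the estimate over those $C$ machines) shows the expected sum of $\Delta$-differences is at most $\tfrac1m\E[\sum_i f(S_i)]$ for monotone $f$. The role of $C>1$ is concentration: by the Chernoff bound of Lemma~\ref{lem:ChernoffBound}, for large $C$ the loads $|\opt\cap T_i|$ and the values $f(S_i^{(k)}),f(S_i)$ lie within a $1+O(\ln C/C)$ factor of their means across the $m$ machines, which lets me replace both $\E[\max_i f(S_i^{(k)})]$ and $\tfrac1m\E[\sum_i f(S_i)]$ by a single scale $a:=\E[f(S_1^{(k)})]$ and its tail $b:=\E[f(S_1)-f(S_1^{(k)})]$ with $0\le b\le 2\sqrt2\,a$, controlling the max/average slippage. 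Together with the $O(1/k)$ from $|\opt\setminus\opt^S|\le k$, these are the only sources of the $O(\tfrac1k+\tfrac{\ln C}{C})$ error, and for $C=1/\epsilon$ they are $O(\epsilon)$.

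The final step is the balancing, and it is also the main obstacle. The candidate $\max(f(\opt^S),\max_i f(S_i^{(k)}))$ used in Theorem~\ref{theorem:1/3_composable_coreset} does \emph{not} improve as $k'$ grows, because the randomness term scales with $f(S_i)$ rather than $f(S_i^{(k)})$. The remedy is a \emph{completion} candidate (or a close variant): keep the surviving optimum items $\opt^S$, with $|\opt^S|=\theta k$ say, and fill the remaining $(1-\theta)k$ slots with greedy-prefixes $P_i\subseteq S_i^{(\,\cdot\,)}$ drawn from the machines hosting the lost items, so that the walk $f(\opt)\le f(\opt^S\cup P)+\sum_{x\in L}\Delta(x,P)$ can be controlled by the already-established decreasing-marginal estimates; this candidate is automatically good both when few optimum items are lost ($\theta$ near $1$, value near $f(\opt)$) and when many are ($\theta$ small, a long high-value prefix is appended). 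Combining the refined local estimate, the $C$-concentrated randomness bound, and submodularity to lower bound the completion candidate, one gets a small system of inequalities in $a,b,\theta$ (subject to $0\le b\le 2\sqrt2\,a$) whose worst case, after optimization, is exactly $2-\sqrt2$, with the three governing inequalities becoming simultaneously tight precisely when the extra picks used number $(2\sqrt2)k$, i.e.\ $k'=(2\sqrt2+1)k$ — which is why that is the core-set size in the statement. Making this last step rigorous — in particular lower bounding the completion candidate by carefully matching the greedy prefixes against $\opt^S$ using submodularity, and checking (consistently with Theorem~\ref{thm:585tight}) that enlarging $k'$ further cannot beat $2-\sqrt2$ — is the technical heart of the argument, and is exactly the improved analysis technique the paper advertises.
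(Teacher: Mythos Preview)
Your proposal has genuine gaps in two places. First, the role of multiplicity $C$ is not what you describe. You want to use $C$ for Chernoff-type concentration so that $\max_i f(S_i^{(k)})$ and $\tfrac1m\sum_i f(S_i)$ agree up to $1+O(\ln C/C)$, but $f(S_i^{(k)})$ is not a sum of independent bounded terms --- it depends on $T_i$ through $k$ rounds of Greedy --- so Lemma~\ref{lem:ChernoffBound} does not apply to it. In the paper, $C$ serves a different purpose entirely: it guarantees that any $x\in\opt$ which \emph{would} be selected by $\greedy$ on a random machine with probability $\ge \ln C/C$ is actually selected on at least one of its $C$ hosts with probability $1-O(1/C)$ (Lemma~\ref{lem:random_machine_selected_vs_total_selected}, via a coupling that decorrelates the $C$ machines). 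There is no max/average argument, and Lemmas~\ref{lem:linear_in_Deltas}--\ref{lem:randomness} are not re-run globally. Instead the paper fixes machine~$1$ as a reference, splits $\opt$ into $\optMS=\{x:x\in\greedy(T_1\cup\{x\})\}$ and $\optMNS=\opt\setminus\optMS$, sets $\optMS'=\optMS\cap\opt^S$, and analyzes the single-machine candidate $f_k(S_1\cup\optMS')$; after Lemma~\ref{lem:random_machine_selected_vs_total_selected} gives $\E[f(\optMS'\cup\optMNS)]\ge(1-O(\ln C/C))f(\opt)$, everything is deterministic on machine~$1$.

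Second, and more importantly, your ``small system of inequalities in $a,b,\theta$'' whose optimum is claimed to be $2-\sqrt2$ is never written down, and your completion candidate is not actually constructed or bounded. In the paper the number $2-\sqrt2$ comes from a factor-revealing LP $LP^{k,k_2}$ (Lemma~\ref{lem:LP_lower_bound_core-set}) in variables $\alpha,\{a_j,b_j,c_j\}_{j=1}^{Dk}$ that record, for each of the $Dk$ greedy steps on machine~$1$, how much of the marginal gain ``covers'' $\optMNS$ (the $a_j$), how much covers $\optMS'$ (the $b_j$), and the residual ($c_j$); constraint~(1) lower-bounds $\beta$ by taking $\optMS'$ together with any $k_2$ greedy items, constraint~(2) is Greedy's domination over every $x\in\optMNS$, and constraint~(3) caps $\sum_j b_j\le 1-\alpha$. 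Lemma~\ref{lem:lower_bound_LP} then shows, after a structural reduction (Claim~\ref{claim:LP_optimum_solution_structure}: $c_j=0$, $a_j$ sorted, constraint~(2) tight), that the LP collapses to the two-inequality system $\beta\ge 1-e^{-r}\alpha+(1-r)\lambda$ and $1-\alpha\ge(e^{-r}\alpha-\lambda)^2/(2\lambda)$, whose minimum over $\alpha,\lambda,r\in[0,1]$ is $2-\sqrt2$ at $r=0$, $\lambda=1-1/\sqrt2$. None of this structure is present in your outline, so the claimed constant is unsupported.
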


\begin{proof}
Let $D \defeq 2\sqrt{2}+1$, and $k'=Dk$.
Following our notation from Section~\ref{sec:notation}, let $S_i\defeq \greedy(T_i)$ for $1\le i\le m$, where
$T_i$ is the set of items sent to the machine $i$.
Note that we can let $\vert S_i\vert = Dk$, since if there are less than $Dk$ items in $T_i$, WLOG we can assume the algorithm returns some extra dummy items just for the sake of analysis. 

Consider an item $x\in \opt$. We say that $x$ {\em survives from machine $i$}, if, when we send $x$ to machine $i$ in addition to items of $T_i$, algorithm $\greedy$ would choose this item $x$ in its output of size $k'$, i.e., if $x\in \greedy(T_i \cup \{x\})\}$.
For the sake of analysis, we partition the optimum solution into two sets as follows:
let $\optMS$ be the set of items in the optimum solution that would survive the first machine, i.e., $\optMS\defeq \{x \vert x \in \opt \cap \greedy(T_1 \cup \{x\})\}$. 
Let $\optMNS \defeq \opt \setminus \optMS$, and
$k_1\defeq\vert \optMS\vert$, and $k_2\defeq\vert \optMNS\vert$ (note that $k_1+k_2 = k$).  
We also define $\optMS'\defeq \optMS  \cap \opt^S$, and $\opt' \defeq \optMS' \cup \optMNS$ where $\opt^S$ is defined in Subsection~\ref{sec:notation}. 

We aim to prove that
$\E[f_k(\cup_{i=1}^m S_i)]$ is at least $\left( 2-\sqrt{2} - O\left(\frac{1}{k} + \frac{\ln(C)}{C}\right)\right) f(\opt)$.
Since $f_k$ is a monotone function, we have $f_k(S_1 \cup \optMS') \leq f_k(\cup_{i=1}^m S_i)$. Note that $\optMS'$ is by definition a subset of $\opt^S$, and consequently a subset of $\cup_{i=1}^m S_i$. 
So it suffices to prove that $\E[f_k(S_1 \cup \optMS')]$ is at least  $\left(2-\sqrt{2} - O\left(\frac{1}{k} + \frac{\ln(C)}{C}\right)\right) f(\opt)$. To do so, we first prove the following Lemma:

\begin{lemma}\label{lem:random_machine_selected_vs_total_selected}
The expected value of set $\E[f(\opt')]$ is at least $\left(1-O\left(\frac{\ln(C)}{C}\right)\right) f(\opt)$.
\end{lemma}

\begin{proof} 
Let $\pi$ be an arbitrary fixed permutation on items of $\opt$. For any item $x \in \opt$, we define $\pi^x$ to be all items that appear prior to $x$ in $\pi$.  
 To prove the claim of this lemma, it suffices to show that $f(\opt) - \E[f(\opt')]$ is $O\left(\frac{\ln(C)}{C} f(\opt)\right)$ where the expectation is taken over the distribution of random clustering of items $\{T_i\}_{i=1}^m$. Equivalently, we can prove $\frac{f(\opt) - \E[f(\opt')]}{f(\opt)}$ is $O\left(\frac{\ln(C)}{C}\right)$. 
 At first, we characterize both terms $f(\opt)$, and $f(\opt) - f(\opt')$ in terms of some $\Delta$ values. 
 
 \begin{claim}
 The optimum value $f(\opt)$ is equal to $\sum_{x \in \opt} \Delta(x, \pi^x)$, and the term $f(\opt) - f(\opt')$ is at most $\sum_{x \in \optMS \setminus \opt^S} \Delta(x, \pi^x)$.
 \end{claim}
 
 \begin{proof}
By definition of $\Delta$ values, we have that: $\sum_{x \in \opt} \Delta(x, \pi^x) = f(\opt) - f(\emptyset) = f(\opt)$. 
Similarly, we have that $f(\opt')$ is equal to $\sum_{x \in \opt'} \Delta(x,\pi^x \cap \opt')$. 
By submodularity, we know that $\Delta(x,\pi^x \cap \opt')$ is at least $\Delta(x,\pi^x)$ because $\pi^x \cap \opt'$ is a subset of $\pi^x$. Therefore, we have: 

\begin{eqnarray*}
f(\opt) - f(\opt') 
&=& 
\sum_{x \in \opt} \Delta(x,\pi^x) - \sum_{x \in \opt'} \Delta(x,\pi^x \cap \opt') \\
&\leq & 
\sum_{x \in \opt} \Delta(x,\pi^x) - \sum_{x \in \opt'} \Delta(x,\pi^x) \\
&=& %
\sum_{x \in \opt \setminus \opt'} \Delta(x,\pi^x)
= \sum_{x \in \optMS \setminus \opt^S} \Delta(x,\pi^x)
\end{eqnarray*}

where the last equality holds by definition of $\opt'$. 
\end{proof}

Now, it suffices to upper bound the expected value of $\E[\sum_{x \in \optMS \setminus \opt^S} \Delta(x,\pi^x)]$ by $O\left(\frac{\ln(C)}{C}\right) f(\opt)$ as follows: 

$$
\E[\sum_{x \in \optMS \setminus \opt^S} \Delta(x,\pi^x)] = \sum_{x \in \opt} Pr[x \in \optMS \setminus \opt^S] \Delta(x,\pi^x)
$$

We note that $\Delta(x,\pi^x)$ is a fixed (non-random) term and therefore we can take it out of the expectation. Since $f(\opt)$ is equal to $\sum_{x \in \opt} \Delta(x,\pi^x)$, we just need to prove that $Pr[x \in \optMS \setminus \opt^S]$ is at most $O\left(\frac{\ln(C)}{C}\right)$ for any item $x \in \opt$. 

We note that the first machine is just a random machine, and the distribution of set of items sent to it, $T_1$, is the same as any other set $T_i$ for any $2 \leq i \leq m$.  
We consider two cases for an item $x \in \opt$: 
\begin{itemize}
\item
The probability of $x$ being chosen when added to a random machine is at most $\frac{\ln(C)}{C}$, i.e. $Pr[x \in \greedy(T_1 \cup \{x\})] = \frac{\ln(C)}{C}$. 
\item
The probability $Pr[x \in \greedy(T_1 \cup \{x\})]$ is at least $\frac{\ln(C)}{C}$. 
\end{itemize}

In the first case, we know that $Pr[x \in \optMS] \leq \frac{\ln(C)}{C}$, and therefore   $Pr[x \in \optMS \setminus \opt^S] \leq \frac{\ln(C)}{C}$ which concludes the proof. 

In the latter case, we prove that $Pr[x \notin \opt^S] = O(\frac{\ln(C)}{C})$ which implies the claim of the lemma as $Pr[x \notin \opt^S] \geq Pr[x \in \optMS \setminus \opt^S]$.
Let $i_1, i_2, \cdots, i_C$ be the indices of the $C$ random machines where $x$ is sent to, i.e. $x \in T_{i_{\ell}}$ for $1 \leq \ell \leq C$. 
Item $x$ is not in $\opt^S$ if it is not selected in non of these $C$ machines.
Formally, $Pr[x \notin \opt^S]$ is equal to $Pr[x \notin \greedy(T_{i_{\ell}}),\ \forall 1 \leq \ell \leq C]$.

We note that for any $1 \leq i \leq m$, and each item $y \in \NN \setminus \{x\}$, 
the probability $Pr[y \in T_i]$ is $\frac{C}{m}$. 
Although any pair of sets $T_i$ and $T_{i'}$ are correlated (where $1 \leq i, i' \leq m$), the 
events $x' \in T_i$ and $x'' \in T_i$ are independent for any pair of distinct items $x', x'' \in \NN \setminus \{x\}$.   
So for any $1 \leq \ell \leq C$, the distribution of $T_{i_{\ell}} \setminus \{x\}$ is the same as $T_1 \setminus \{x\}$. Therefore the probability that machine $i_{\ell}$ selects item $x$, $Pr[x \in \greedy(T_{i_{\ell}})]$, is 
equal to $Pr[x \in \greedy(T_1 \cup \{x\})]$. Since we are considering the latter case, we imply that $Pr[x \in \greedy(T_{i_{\ell}})]$ is at least $\frac{\ln(C)}{C}$ for any $1 \leq \ell \leq C$. 
If the $C$ sets $\{T_{i_{\ell}} \setminus \{x\}\}_{\ell=1}^C$ were not correlated (mutually independent), we could say that events $x \in \greedy(T_{i_{\ell}})$ are independent. Consequently, we would have $Pr[x \notin \greedy(T_{i_{\ell}}),\ \forall 1 \leq \ell \leq C]$ is at most $\left(1-\frac{\ln(C)}{C}\right)^C \leq e^{-\ln(C)} = \frac{1}{C}$, and therefore $Pr[x \in \optMS \setminus \opt^S] \leq \frac{1}{C}$ which completes the proof. 
Although these sets are correlated, we show that this correlation is bounded as follows.  

We start by constructing $C$ sets $\{T'_{i_{\ell}}\}_{\ell=1}^C$ which are very similar to the sets $\{T_{i_{\ell}}\}_{\ell=1}^C$ with the exception that the sets $\{T'_{i_{\ell}} \setminus \{x\}\}_{\ell=1}^C$ are mutually independent. 
We exploit this independence property to show that
$Pr[x \notin \greedy(T'_{i_{\ell}}),\ \forall 1 \leq \ell \leq C] \leq \frac{1}{C}$. 
Finally, we use a coupling technique to show that with high probability ($1-O(\frac{1}{C})$), the solution of algorithm $\greedy$ on $T_{i_{\ell}}$ is the same as its solution on $T'_{i_{\ell}}$ for any $1 \leq \ell \leq C$, i.e. $\greedy(T_{i_{\ell}}) = \greedy(T'_{i_{\ell}})$ which completes the proof. Following we elaborate on the construction of sets $\{T'_{i_{\ell}}\}_{\ell=1}^C$, and how they are coupled with sets $\{T_{i_{\ell}}\}_{\ell=1}^C$. 
 
For any $1 \leq \ell \leq C$, 
insert each item $y \in \NN \setminus \{x\}$ in $T'_{i_{\ell}}$ with probability $\frac{C}{m}$ independently. Insert $x$ into all these $C$ sets
$\{T'_{i_{\ell}}\}_{\ell=1}^C$.  
We note that for each item $y \in \NN \setminus \{x\}$, the $C$ events $y \in T'_{i_{\ell}}$ are independent, and consequently sets $\{T'_{i_{\ell}} \setminus \{x\}\}_{\ell=1}^C$ are mutually independent.
Since for each $1 \leq \ell \leq C$, set $T'_{i_{\ell}} \setminus \{x\}$ has the same distribution as $T_1 \setminus \{x\}$, and also using the independence property, 
we have that $Pr[x \notin \cup_{\ell=1}^C \greedy(T'_{i_{\ell}})] \leq (1-\frac{\ln(C)}{C})^C \leq \frac{1}{C}$. We are ready to explain the coupling technique. 

We show how to construct sets $\{T_{i_{\ell}}\}_{\ell=1}^C$ from sets $\{T'_{i_{\ell}}\}_{\ell=1}^C$.
We start with $\ell=1$, and increase it one by one. 
The first step is easy. 
We set $T_{i_1}$ to be equal to $T'_{i_1}$.
For any $1 < \ell \leq C$,  
we initialize $T_{i_{\ell}}$ by $T'_{i_{\ell}}$, and we adjust $T_{i_{\ell}}$ by a few item insertions and deletions to address the correlations between $T_{i_{\ell}}$ and prior sets $T_{i_1}, T_{i_2}, \cdots, T_{i_{\ell-1}}$. 
For any $1 \leq \ell \leq C$, 
we also upper bound the size of
$T_{i_{\ell}} \Delta T'_{i_{\ell}}$ with high probability. 
For any $y \in \NN \setminus \{x\}$, if $y$ has appeared $a$ times in $T_{i_1}, T_{i_2}, \cdots, T_{i_{\ell-1}}$, it should be present in $T_{i_{\ell}}$ with probability $\frac{C-a}{m-\ell+1}$ instead of $\frac{C}{m}$. 
This probability is at most $\frac{C}{m-C+1}$ which is less than $\frac{C}{m} + \frac{C^2}{m^2}$ for $m \geq C^2$. 
So the presence probability of each item is not increased by more than $\frac{C^2}{m^2}$ from the initial probability $\frac{C}{m} = Pr[y \in T'_{i_{\ell}}]$. To increase the presence probability of $y$ by some $\delta$, it suffices to insert it to $T_{i_{\ell}}$ with probability $\frac{\delta}{1-C/m}$ because after this extra probabilistic insertion, it will be in $T_{i_{\ell}}$ with probability $\frac{C}{m} + (1-\frac{C}{m})\frac{\delta}{1-C/m} = \frac{C}{m} + \delta$.  Therefore, we have that $Pr[y \in T_{i_{\ell}} \setminus T'_{i_{\ell}}]$ is at most $\frac{C^2}{m^2(1-C/m)} \leq \frac{2C^2}{m^2}$. 
 
On the other hand, the probability of $y$ being present in $T_{i_{\ell}}$ should decrease only if $y$ appears in $T_{i_{\ell'}}$ for some $1 \leq \ell' < \ell$. 
To reduce the probability of $y$ being present in $T_{i_{\ell}}$ by some $\delta$, it suffices to delete it from $T_{i_{\ell}}$ (when it is in $T_{i_{\ell}}$) with probability $\frac{\delta}{C/m}$. 
Therefore, $y$ is in $T'_{i_{\ell}} \setminus T_{i_{\ell}}$ only if $y$ is in intersection $T_{i_{\ell'}} \cap T'_{i_{\ell}}$ for some $\ell' < \ell$. 
The probability $Pr[y \in T_{i_{\ell'}} \cap T'_{i_{\ell}}]$ is equal to 
$Pr[y \in T_{i_{\ell'}}]Pr[y \cap T'_{i_{\ell}}] = \frac{C}{m} \times \frac{C}{m} =\frac{C^2}{m^2}$ (note that sets $T_{i_{\ell'}}$, and $T'_{i_{\ell}}$ are independent). 
Since there are at most $C$ choices for $\ell'$, the probability $Pr[y \in T'_{i_{\ell}} \setminus T_{i_{\ell}}]$ is upper bounded by $\frac{C^3}{m^2}$.

By this coupling technique, we conclude that for any item $y \in \NN \setminus \{x\}$, the probability $Pr[y \in T_{i_{\ell}} \Delta T'_{i_{\ell}}]$ is at most $\frac{3C^3}{m^2}$. Since there are $n$ items in total, and these events are independent for different items, we can use Chernoff bound \ref{lem:ChernoffBound}to have:

$$
Pr[|T_{i_{\ell}} \Delta T'_{i_{\ell}}| > 	4+(C+1)\frac{3C^3}{m^2}n] \leq  e^{4(1-\ln(C))} \leq C^{-2}
$$

We also know that $T_{i_{\ell}}$ has expected size $\frac{n}{m} \geq k$, and by Chernoff bound \ref{lem:ChernoffBound} with probability at least $1-e^{-k/8} = 1-O(C^{-2})$, its size is not less than $\frac{n}{2m}$ (note that $C \leq k$). For every $\ell$, we conclude that with probability $1-O(\frac{1}{C^2})$, we have: 

\begin{eqnarray*}
|T_{i_{\ell}}| &\geq& \frac{n}{2m} \\
|T_{i_{\ell}} \Delta T'_{i_{\ell}}| &\leq& 	4+(C+1)\frac{3C^3}{m^2}n \leq 4+ \frac{6C^4}{m} \times \frac{n}{m}
\end{eqnarray*}

We are ready to prove that with probability $Pr[\greedy(T_{i_{\ell}}) = \greedy(T'_{i_{\ell}}),\ \forall 1 \leq \ell \leq C]$ is $1-O(\frac{1}{C})$. 
It suffices to show for every $\ell$ that $Pr[\greedy(T_{i_{\ell}}) = \greedy(T'_{i_{\ell}})]$ is $1-O(\frac{1}{C^2})$ using the Union bound for these $C$ events.
Since $m$ is at least $12C^6Dk$ for constant value of $C$, with probability $1-O(\frac{1}{C^2})$, the ratio $\frac{|T_{i_{\ell}} \Delta T'_{i_{\ell}}|}{|T_{i_{\ell}}|}$ is at most $\frac{1}{C^2Dk}$.
We note that $T_{i_{\ell}} \setminus T'_{i_{\ell}}$ is a random subset of $T_{i_{\ell}}$, and its selection has nothing to do with $f$ values. Therefore the probability that $T_{i_{\ell}} \setminus T'_{i_{\ell}}$ has some non-empty intersection with set $\greedy(T_{i_{\ell}})$ is equal to $\frac{\lvert T_{i_{\ell}} \setminus T'_{i_{\ell}} \rvert \times \lvert \greedy(T_{i_{\ell}})\rvert}{\lvert T_{i_{\ell}} \rvert} \leq \frac{1}{C^2}$.
 Assuming $T_{i_{\ell}} \setminus T'_{i_{\ell}}$ has no common item with 
 $\greedy(T_{i_{\ell}})$, we can say that $\greedy(T_{i_{\ell}} \cap T'_{i_{\ell}})$ is the same as $\greedy(T_{i_{\ell}})$ because removing some items from $T_{i_{\ell}}$ that algorithm $\greedy$ does not select will not change the output of algorithm $\greedy$ on it (it is the first property of $\beta$-nice algorithms explained in Definition~\ref{def:nice}). Formally, with probability $1-O(\frac{1}{C^2})$, the intersection $\left(T_{i_{\ell}} \setminus T'_{i_{\ell}}\right) \cap \greedy(T_{i_{\ell}})$ is empty, and consequently, we have $\greedy(T_{i_{\ell}}) = \greedy\left(T_{i_{\ell}} \setminus \left(T_{i_{\ell}} \setminus T'_{i_{\ell}} \right)\right) = \greedy(T_{i_{\ell}} \cap T'_{i_{\ell}})$. Similarly, we can prove that with probability $1-O(\frac{1}{C^2})$, sets $\greedy(T'_{i_{\ell}})$, and $\greedy(T_{i_{\ell}} \cap T'_{i_{\ell}})$ are the same. We conclude that with probability $1-O(\frac{1}{C^2})$, the sets $\greedy(T_{i_{\ell}})$, and $\greedy(T'_{i_{\ell}})$ are the same. Using union bound, we have this equality for all $\ell$ with probability $1-O(\frac{1}{C})$. We already know that $Pr[x \in \cup_{\ell=1}^C T'_{i_{\ell}}]$ is at least $1 - \frac{1}{C}$. Therefore $Pr[x \in \opt^S] = Pr[x \in \cup_{\ell=1}^C T_{i_{\ell}}]$ is $1-O(\frac{1}{C})$ which concludes the proof of lemma. 
\end{proof}

Using Lemma~\ref{lem:random_machine_selected_vs_total_selected}, it is sufficient prove that ratio $\frac{f_k(S_1 \cup \optMS')}{f(\opt')}$ is at least $2-\sqrt{2} - O\left(\frac{1}{k}\right)$. 
In order to lower bound the ratio $\frac{f_k(S_1 \cup \optMS')}{f(\opt')}$, we write the following factor-revealing linear program  $LP^{k, k_2}$, and prove in Lemma~\ref{lem:LP_lower_bound_core-set} that the solution to this LP is a lower bound on the aforementioned ratio.

\begin{center}
\begin{tabular}{|ll|} 
\hline 
$LP^{k,k_2}$ Minimize $\beta$ Subject to: & \\ \hline 
$(1)$ $\beta \geq 1-\alpha +\sum_{j \in J} a_j + c_j$  $\forall J \subset [Dk] \& |J| = k_2$ &
$(4)$ $0 \leq \alpha, a_j, b_j, c_j \leq 1$ $\ \ \ \ \ \ \ \ \ \ \ \ \ \ \ \ \ \ \ \ \ \ \ $ $\forall 1 \leq j \leq Dk$ \\ 
$(2)$ $a_j + b_j + c_j \geq \frac{\alpha - \sum_{j'=1}^{j-1}a_{j'}}{k_2}$  $\forall 1 \leq j \leq Dk$ &
$(5)$ $\beta \geq \sum_{j=1}^{k}  a_j + b_j + c_j$  \\
$(3)$ $\sum_{j=1}^{Dk}  b_j \leq 1- \alpha$  &
$(6)$ $a_j + b_j + c_j \geq a_{j+1}+b_{j+1}+c_{j+1}$  $\forall 1 \leq j < Dk$\\ \hline
\end{tabular} 
\end{center}

\begin{lemma}\label{lem:LP_lower_bound_core-set}
For any integer $k  > 0$, the ratio $\frac{f_k(S_1 \cup \optMS')}{f(\opt')}$ is lower bounded by the optimum solution of linear program $LP^{k,k_2}$ for some integer $1 \leq k_2 \leq k$. 
\end{lemma}

\begin{proof}
We want to prove that ratio $\frac{f_k(S_1 \cup \optMS')}{f(\opt')}$ is lower bounded by the solution of minimization linear program $LP^{k,k_2}$. It suffices to construct one feasible solution with objective value $\beta$ equal to $\frac{f_k(S_1 \cup \optMS')}{f(\opt')}$. 
At first, we construct this solution  for every instance of the problem, and then prove its feasibility in Claim~\ref{claim:LP_feasibility}.  

We remind that $\opt'$ is the union of two disjoint sets $\optMS'$, and $\optMNS$. Fix a permutation $\pi$ on the items of $\opt'$
such that every item of $\optMS'$ appears before every item of $\optMNS$ in $\pi$. In other words, $\pi$ is an arbitrary permutation on items of $\optMS'$ followed by an arbitrary permutation on items of $\optMNS$. 
For any item $x$ in $\opt'$, define $\pi^x$ to be the set of items in $\opt'$
that appear prior to $x$ in permutation $\pi$.	
For any $1 \leq j \leq Dk$, we define set $S^j$ to be the first $j$ items of $S_1$.
We set the linear program variables as follows: 

\begin{eqnarray*}
\beta  &\defeq&  \frac{f_k(S_1 \cup \optMS')}{f(\opt')} \\ 
\alpha &\defeq& \frac{\sum_{x \in \optMNS} \Delta(x,\pi^x)}{f(\opt')}\\
a_j      &\defeq& \frac{\sum_{x \in \optMNS} \Delta(x,\pi^x \cup S^{j-1}) - \Delta(x,\pi^x \cup S^{j})}{f(\opt')} \\
b_j      &\defeq& \frac{\sum_{x \in \optMS'} \Delta(x,\pi^x \cup S^{j-1}) - \Delta(x,\pi^x \cup S^{j})}{f(\opt')} \\
c_j      &\defeq& \frac{f(S^{j}) - f(S^{j-1})}{f(\opt')} - a_j - b_j \\
\end{eqnarray*}

\begin{claim}\label{claim:LP_feasibility}
The above assignment forms a feasible solution of $LP^{k,k_2}$, and its solution is equal to $\frac{f_k(S_1 \cup \optMS')}{f(\opt')}$. 
\end{claim}

\begin{proof}
The claim on the objective value of the solution is evident by definition of $\beta$. 
To prove that the constraints hold, we show some simple and useful facts about the marginal values of items in $\opt$. We note that $\sum_{x \in \opt'} \Delta(x,\pi^x) = f(\opt')$ by definition of $\Delta$ values and the fact that $f(\emptyset)=0$. 
Similarly for any $1 \leq j \leq Dk$, we have that: 

\begin{eqnarray}
\label{eq:sum_marginals}
\sum_{x \in \opt'} \Delta(x, \pi^x \cup S^j) = f(\opt' \cup S^j) - f(S^j)
\end{eqnarray}

We are ready to prove that all constraints $(1), (2), \cdots, (5)$ one by one. We start with constraint $(1)$.
For any set $J \subset [DK]$ with size $|J|=k_2$, we  
define $S(J)$ to be $\{y_j | j \in J\}$ where $y_j$ is the $j$th item selected by algorithm $\greedy$ in $S_1$. We also define $S'(J)$ to be $\optMS' \cup S(J)$. 
Set $S'(J)$ is  a subset of $\optMS' \cup S_1$ with size at most $k_2+k_1=k$. Therefore $f(S'(J))$ is a lower bound on $f_k(\optMS' \cup S_1)$. We can also lower bound $f(S'(J))$ as follows: 

\begin{eqnarray*}
&f(S'(J))& = \sum_{x \in \optMS'} \Delta(x, \pi^x \cap \optMS') + \sum_{j \in J} \Delta(y_j, \optMS' \cup (S^{j-1} \cap S(J))) \\
&\geq& 
\sum_{x \in \optMS'} \Delta(x, \pi^x) + \sum_{j \in J} \Delta(y_j, \opt' \cup S^{j-1}) \\
&=&
(1-\alpha)f(\opt') + \sum_{j \in J} \left( f(\optMS' \cup S^j) - f(\optMS' \cup S^{j-1})\right) \\
&=&
(1-\alpha)f(\opt') + \\
&&\sum_{j \in J} \bigg( f(S^j) - f(S^{j-1}) + \Big(f(\optMS' \cup S^j) - f(S^j)\Big) - \Big(f(\optMS' \cup S^{j-1}) - f(S^{j-1}) \Big)\bigg) \\
&=&
(1-\alpha)f(\opt') + \\
&&\sum_{j \in J} \bigg( (a_j+b_j+c_j)f(\opt') + \Big(f(\optMS' \cup S^j) - f(S^j)\Big) - \Big(f(\optMS' \cup S^{j-1}) - f(S^{j-1}) \Big)\bigg) 
\end{eqnarray*}

The first equality holds by definition of $\Delta$. The first inequality holds by submodularity of $f$, and knowing that $S^{j-1} \cap S(J) \subseteq S^{j-1}$. The second equality holds by definition of $\alpha$, and the last equality holds by definition of $c_j$. We claim that $\Big(f(\optMS' \cup S^j) - f(S^j)\Big) - \Big(f(\optMS' \cup S^{j-1}) - f(S^{j-1}) \Big)$ (which is part of the right hand side of the last equality) is equal to $-b_j f(\opt')$. We note that $\Big(f(\optMS' \cup S^j) - f(S^j)\Big)$ is equal to $\sum_{x \in \optMS'} \Delta(x, \pi^x \cup S^j)$, and similarly $\Big(f(\optMS' \cup S^{j-1}) - f(S^{j-1})\Big)$ is equal to $\sum_{x \in \optMS'} \Delta(x, \pi^x \cup S^{j-1})$. By taking the difference of them, we have: 

\begin{eqnarray*}
\Big(f(\optMS' \cup S^j) - f(S^j)\Big) - \Big(f(\optMS' \cup S^{j-1}) - f(S^{j-1}) \Big) &=& \sum_{x \in \optMS'} \Delta(x, \pi^x \cup S^j) - \Delta(x, \pi^x \cup S^{j-1})
\end{eqnarray*}

which is (by definition of $b_j$) equal to $-b_j f(\opt')$. We conclude that $f(S')$, and consequently $f_k(S_1 \cup \optMS')$ are both at least $(1-\alpha + \sum_{j \in J} a_j + c_j) f(\opt')$ which concludes the proof of constraint $(1)$.

We prove  constraint $(2)$ using the fact that algorithm $\greedy$ selects the item with maximum marginal value in each step. We note that the right hand side of constraint $(2)$ is $a_j+b_j+c_j$ which is by definition the marginal gain of item $y_j$ 
divided by $f(\opt')$, i.e. $\frac{\Delta(y_j, S^{j-1})}{f(\opt')}$. 
We know that any item $x \in \optMNS$ will not be selected by algorithm $\greedy$ if it is part of the input set which means that the marginal gain $(a_j+b_j+c_j)f(\opt') = \Delta(y_j, S^{j-1})$ is at least the marginal gain $\Delta(x,S^{j-1})$ for any $x \in \optMNS$, and it is also greater than the average of these marginal gains. In other words, we have:

$$
a_j+b_j+c_j \geq \frac{\sum_{x \in \optMNS} \Delta(x,S^{j-1})}{\lvert \optMNS\rvert f(\opt')} = 
\frac{\sum_{x \in \optMNS} \Delta(x,S^{j-1})}{k_2 f(\opt')}
$$  

To finish the proof of constraint $(2)$, it suffices to prove the following inequality:

\begin{eqnarray}\label{eq:constraint2}
\sum_{x \in \optMNS} \Delta(x,S^{j-1}) \geq
f(\opt') \left(\alpha - \sum_{j'=1}^{j-1}a_{j'} \right)
\end{eqnarray}

By definition of $\alpha$, and 
$a$ values, we have that: 

\begin{eqnarray*}
f(\opt') \left( \alpha - \sum_{j'=1}^{j-1}a_{j'} \right)
&=& \sum_{x \in \optMNS} \Delta(x,\pi^x) - \sum_{j'=1}^{j-1} \sum_{x \in \optMNS} \left(\Delta(x,\pi^x \cup S^{j'-1}) - \Delta(x, \pi^x \cup S^{j'}) \right) \\
&=&
\sum_{x \in \optMNS} \Delta(x,\pi^x) - \sum_{x \in \optMNS} \sum_{j'=1}^{j-1} \left( \Delta(x,\pi^x \cup S^{j'-1}) - \Delta(x, \pi^x \cup S^{j'}) \right) \\
&=&
\sum_{x \in \optMNS} \Delta(x,\pi^x) - \sum_{x \in \optMNS} \left( \Delta(x,\pi^x) - \Delta(x, \pi^x \cup S^{j-1}) \right) \\
&=& \sum_{x \in \optMNS} \Delta(x,\pi^x \cup S^{j-1})
\end{eqnarray*}
  
which completes the proof of Equation~\ref{eq:constraint2}, and consequently constraint $(2)$.

Now, we prove that constraint $(3)$ holds. By definition of $\alpha$, and the fact that $f(\opt') =$ $\sum_{x \in \opt'}$ $\Delta(x, \pi^x)$, we know that $1-\alpha$ is equal to $\frac{\sum_{x \in \optMS'} \Delta(x,\pi^x)}{f(\opt')}$. We also know that:
 
\begin{eqnarray*}
\sum_{j=1}^{Dk} b_j &=& \frac{\sum_{j=1}^{Dk} \sum_{x \in \optMS'} \Delta(x, \pi^x \cup S^{j-1}) - \Delta(x, \pi^x \cup S^j)}{f(\opt')} \\
&=& \frac{\sum_{x \in \optMS'} \Delta(x, \pi^x) - \Delta(x, \pi^x \cup S^{Dk})}{f(\opt')} 
\leq  \frac{\sum_{x \in \optMS'} \Delta(x, \pi^x)}{f(\opt')} = 1 -\alpha
 \end{eqnarray*}

where the inequality holds because valuation function $f$ is monotone, and therefore all $\Delta$ values are non-negative.  This proves that constraint $(3)$ holds.

To prove constraint $(4)$, we should show that variables $a_j, b_j$, $c_j$, and $\alpha$ are all in range $[0,1]$ for any $1 \leq j \leq DK$.
We know that $f(\opt')$ is equal to $\sum_{x \in \opt'} \Delta(x, \pi^x)$. Therefore by definition, $\alpha$ is equal to $\frac{\sum_{x \in \optMNS} \Delta(x, \pi^x)}{\sum_{x \in \opt'} \Delta(x, \pi^x)}$
. Since  $\optMNS$ is a subset of $\opt'$, we imply that $\alpha$ is at most $1$. We also know that $\Delta$ values are all non-negative, and therefore $\alpha$ is non-negative.
Now, we should prove that variables $a_j, b_j$, and $c_j$ are all in range $[0,1]$.
By definition, $a_j+b_j+c_j$ is equal to $\frac{\Delta(y_j,S^{j-1})}{f(\opt')} \leq \frac{f(\{y_j\}) - f(\emptyset)}{f(\opt')} = \frac{f(\{y_j\})}{f(\opt')}$ where the inequality and equality are implied by the submodularity of $f$, and the fact $f(\emptyset) = 0$ respectively. If $f(\{y_j\})$ is at least $f(\opt')$, the proof of Lemma~\ref{lem:LP_lower_bound_core-set} can be completed as follows. 
We know that $f_k(S_1 \cup \optMS') \geq f(\{y_j\})$, and therefore the ratio $\frac{f_k(S_1 \cup \optMS')}{f(\opt')}$ is at least $1$. On the other hand, there exists a very simple solution for $LP^{k_2, k}$ with objective value $\beta=1$ by just setting all variables to zero, and $\beta$ equal to one which completes the proof in the case $f(\{y_j\}) \geq f(\opt')$. So we can focus on the case, $f(\{y_j\}) \leq f(\opt')$ in which we have $a_j+b_j+c_j \leq \frac{f(\{y_j\})}{f(\opt')} \leq 1$. 
So it suffices to prove that these three variables are non-negative to prove that constraint $(4)$  holds. Variables $a_j$ and $b_j$ are non-negative because $f$ is submodular, and $S^{j-1}$ is a subset of $S^j$. We use Equation~\ref{eq:sum_marginals} to prove non-negativity of $c_j$. By definition, $a_j+b_j$ is equal to $\frac{\sum_{x \in \opt'} \Delta(x, \pi^x \cup S^{j-1}) - \Delta(x, \pi^x \cup S^j)}{f(\opt')}$. By applying Equation~\ref{eq:sum_marginals}, we have: 

$$
a_j + b_j = \frac{f(\opt' \cup S^{j-1}) - f(S^{j-1}) - f(\opt' \cup S^j) + f(S^j)}{f(\opt')}
$$

This implies that:

\begin{eqnarray*}
c_j &=& \frac{f(S^{j}) - f(S^{j-1}) -f(\opt' \cup S^{j-1}) + f(S^{j-1}) + f(\opt' \cup S^j) - f(S^j)}{f(\opt')} \\
&\geq& 
\frac{ -f(\opt' \cup S^{j-1})  + f(\opt' \cup S^j)}{f(\opt')} \geq 0
\end{eqnarray*}
 
 where the last inequality holds because of monotonicity of $f$.

  We prove constraint $(5)$ as follows. At first, we show that the right hand side of constraint $(5)$ is simply equal to $\frac{f(S^{k})}{f(\opt')}$. We know that $a_j+b_j+c_j = \frac{f(S^j) - f(S^{j-1})}{f(\opt')}$ for each $1 \leq j \leq k$. By a telescopic summation, we have that the right hand side of constraint $(5)$, $\sum_{j=1}^k a_j+b_j+c_j$, is equal to $\frac{f(S^k) - f(\emptyset)}{f(\opt')} = \frac{f(S^k)}{f(\opt')}$. By definition of $\beta$, and the fact that $f_k(S_1 \cup \optMS')$ is at least $f(S^k)$, we conclude that constraint $(5)$ holds.

  To prove constraint $(6)$, we note that by definition, $a_j + b_j +c_j$ is $\Delta(y_j, S^{j-1})$. 
Since algorithm $\greedy$ chooses the item with maximum marginal value at each step, we have $\Delta(y_j, S^{j-1}) \geq \Delta(y_{j+1}, S^{j-1})$. By submodularity, we have $\Delta(y_{j+1}, S^{j-1}) \geq \Delta(y_{j+1}, S^j) = a_{j+1} + b_{j+1} + c_{j+1}$. We conclude that constraint $(6)$ holds. 
  Therefore the proofs of Claim~\ref{claim:LP_feasibility}, and Lemma~\ref{lem:LP_lower_bound_core-set} are also complete.
\end{proof}
\end{proof}

Finally, we show that the solution of $LP^{k,k_2}$ is at least $2-\sqrt{2} - O(\frac{1}{k})$ for any possible value of $k_2$ which concludes the proof of Theorem~\ref{thm:two_minus_sqrt_two_core-set}.  
 
\begin{lemma}\label{lem:lower_bound_LP}
The optimum solution of linear program $LP^{k,k_2}$ is at least  $2 - \sqrt{2} - O(\frac{1}{{k}})$ for any $0 \leq k_2 \leq k$.
\end{lemma}

\begin{proof}
We consider two cases: a) $k_2 \leq \frac{k}{10}$, and b) $k_2 > \frac{k}{10}$. We first consider the former case which is easier to prove, and then focus on the latter case. 
If $k_2$ is at most  $\frac{k}{10}$, we prove that 
 the objective function of linear program $LP^{k,k_2}$ (which is $\beta$) cannot be less than $0.6$ which concludes the proof of this lemma. 
 Since all variables are non-negative, we can apply constraint $(2)$ for each $j$ in range $[1,k]$, and imply that $\sum_{j=1}^k a_j + b_j + c_j$ is at least $\left(1- \left(1 - \frac{1}{k_2} \right)^k\right) \alpha$. Using constraint $(5)$, we know that $\sum_{j=1}^k a_j + b_j + c_j$ is a lower bound for $\beta$. We are also considering the case $k_2 \leq \frac{k}{10}$, therefore $\beta$ is at least $\left(1- \left(1 - \frac{1}{k_2} \right)^k\right) \alpha \geq \left(1-e^{-10}\right)\alpha \geq 0.9999\alpha$. If $\alpha$ is at least $0.586$, the claim of Lemma~\ref{lem:lower_bound_LP} is proved. So we assume $\alpha$ is at most $0.586$. 
 We define three sets of indices: $J_1 \defeq \{1, 2, \cdots, k_2\}$, $J_2 \defeq \{k_2+1, k_2+2, \cdots, 2k_2\}$, and $J_3 \defeq \{2k_2+1, 2k_2+2, \cdots, 3k_2\}$. We note that these sets have size $k_2$, and therefore constraint $(1)$ should hold for them. If there exists some set $J \subset Dk$ with size $k_2$ such that $\sum_{j \in J} a_j + c_j$ is at least $0.3\alpha$, we can use constraint $(1)$ to lower bound $\beta$ with $1-\alpha + 0.3\alpha = 1-0.7\alpha > 2 - \sqrt{2}$. Therefore we can assume that $\sum_{j \in J_i} a_j+c_j$ is at most $0.3\alpha$ for $i \in \{1, 2, 3\}$. Using constraint $(2)$, we imply that:
 
\begin{eqnarray*}
\sum_{j \in J_1} a_j+b_j+c_j &\geq& 0.7\alpha \\
\sum_{j \in J_2} a_j+b_j+c_j &\geq& 0.4\alpha \\
\sum_{j \in J_3} a_j+b_j+c_j &\geq& 0.1\alpha \\ 
\end{eqnarray*}
 
 We also know that $J_1 \cup J_2 \cup J_3 \subset \{1, 2, \cdots, k\}$. We apply constraint $(5)$ to imply that $\beta$ is at least $(0.7+0.4+0.1)\alpha = 1.2\alpha$. This yields a stronger upper bound on $\alpha$. If $\alpha$ is at least $\frac{2-\sqrt{2}}{1.2} < 0.49$, the claim of Lemma~\ref{lem:lower_bound_LP} is proved. So we assume $\alpha \leq 0.49$. We follow the above argument one more time, and the proof is complete. If for some $i \in \{1, 2, 3\}$, the sum $\sum_{j \in J_i} a_j+c_j$ is at least $0.16\alpha$, using constraint $(1)$, we can lower bound $\beta$ with $1-0.84\alpha > 2-\sqrt{2}$. Therefore we have $\sum_{j \in J_i} a_j+c_j \leq 0.16\alpha$ for each $i \in \{1, 2, 3\}$ which yields the following stronger inequalities: 
 
\begin{eqnarray*}
\sum_{j \in J_1} a_j+b_j+c_j &\geq& 0.84\alpha \\
\sum_{j \in J_2} a_j+b_j+c_j &\geq& 0.68\alpha \\
\sum_{j \in J_3} a_j+b_j+c_j &\geq& 0.52\alpha \\ 
\end{eqnarray*}
 
By applying constraint $(5)$, we have $\beta \geq (0.84+0.68+0.52)\alpha > 2\alpha$. We can also use constraint $(1)$, and conclude that $\beta \geq \max\{2\alpha, 1-\alpha\} > 2-\sqrt{2}$ which completes the proof for the former case $k_2 \leq \frac{k}{10}$.

In the rest of the proof, we consider the latter case $k_2 > \frac{k}{10}$.
The structure of the proof is as follows: In Claim~\ref{claim:LP_optimum_solution_structure}, we first show that without loss of generality, one can  assume a special structure in an optimum solution of $LP^{k,k_2}$, and then exploit this structure to show that any solution of $LP^{k,k_2}$ is lower bounded by a simple system of two equations  with some $O(\frac{1}{k_2}) = O(\frac{1}{k})$ error.  
We can explicitly analyze this system of equations, and achieve a lower bound of $2-\sqrt{2}$ on $\beta$ (the key variable of the system of equations, and also the objective function of linear program $LP^{k,k_2}$).

\begin{claim}\label{claim:LP_optimum_solution_structure}
There exists an optimum solution for linear program $LP^{k,k_2}$ with the following three properties: 
\begin{itemize}
\item $c_j = 0$ for all $1 \leq j \leq Dk$
\item $a_j \geq a_{j+1}$ for all $1 \leq j < Dk$
\item constraint $(2)$ is tight for all $1 \leq j \leq Dk$
 \end{itemize}
\end{claim}

\begin{proof}
It suffices to show that every optimum solution of $LP^{k,k_2}$ without changing the objective $\beta$ can be transformed to a feasible solution  with the above three properties. 
Consider an optimum solution $\left(\beta^*, \alpha^*, \{a^*_j, b^*_j, c^*_j\}_{j=1}^{Dk}\right)$. 
We start by showing how $c^*_j$s can be set to zero. 
Suppose $c^*_j > 0$ for some $1 \leq j \leq Dk$. 
We can increase the value of $a^*_j$ by $c^*_j$, and then set $c^*_j$ to zero. This update keeps $a^*_j+c^*_j$ intact, and therefore does not change anything in constraints $(1)$, $(3)$, $(5)$, and $(6)$. It also makes it easier to satisfy constraint $(2)$ since it (possibly) reduces the right hand side, and keeps the left hand side intact. Constraint $(4)$ remains satisfied since $a^*_j + c^*_j \leq 1$ (otherwise $\beta$ is also at least $1$ which proves the claim of Lemma~\ref{lem:lower_bound_LP} directly). Therefore we can assume $c$ variables are equal to zero, and exclude them to have a simpler linear program: 

\vspace{0.2in}
\begin{tabular}{|ll|} \hline $LP^{k,k_2}$ Minimize $\beta$ & \\ \hline 
Subject to: & \\
$(1)$ $\beta \geq 1-\alpha +\sum_{j \in J} a_j$ & $\forall J \subset [Dk] \& |J| = k_2$\\
$(2)$ $a_j + b_j \geq \frac{\alpha - \sum_{j'=1}^{j-1}a_{j'}}{k_2}$ & $\forall 1 \leq j \leq Dk$\\ 
$(3)$ $\sum_{j=1}^{Dk}  b_j \leq 1- \alpha$ & \\ 
$(4)$ $0 \leq \alpha, a_j, b_j \leq 1$ & $\forall 1 \leq j \leq Dk$ \\
$(5)$ $\beta \geq \sum_{j=1}^k a_j+b_j$ & \\ 
$(6)$ $a_j+b_j \geq a_{j+1}+b_{j+1}$ & $\forall 1 \leq j < Dk$ \\ \hline
\end{tabular} \vspace{0.2in}

Now we prove how to make $a^*$ variables monotone decreasing.  Suppose for some $j_1 < j_2$, we have $a^*_{j_1} = a^*_{j_2} - 2\delta$ for some positive $\delta$. We set both variables $a^*_{j_1}$ and $a^*_{j_2}$ to their average, i.e. increase $a^*_{j_1}$ by $\delta$, and decrease $a^*_{j_2}$ by $\delta$.

$$
a^*_{j_1} := a^*_{j_1} + \delta 
$$

$$
a^*_{j_2} := a^*_{j_2} - \delta
$$

We also decrease $b^*_{j_1}$ and increase $b^*_{j_2}$ by $\delta$: 

$$
b^*_{j_1} := b^*_{j_1} - \delta
$$

$$
b^*_{j_2} := b^*_{j_2} + \delta
$$

Now, we show that all constraints holds one by one. We note that it suffices to consider constraint $(1)$ only for set $J$ with maximum $\sum_{j \in J} a^*_j$. 
We can assume that this set $J$ with maximum $\sum_{j \in J} a^*_j$
cannot contain $j_1$ without having $j_2$ (either before of after the update) because $a^*_{j_1}$ is at most $a^*_{j_2}$ in both cases. 
Therefore, the right hand side of constraint $(1)$ is intact, and it still holds. 

Constraints $(2)$, $(5)$, and $(6)$ are all intact because $a^*_j+b^*_j$ is invariant in this operation for any $1 \leq j \leq Dk$.

Constraint $(3)$ holds because sum of $b^*$ variables remain the same. 
To prove constraint $(4)$ holds, it suffices to show that all variables stay in range $[0,1]$. It is evident for $a^*$ values since we are setting them to their average. For $b^*$ values, we first prove that $b^*_{j_1}$ stays non-negative. We note that $a^*_{j_1} + b^*_{j_1} \geq a^*_{j_2} + b^*_{j_2}$ using constraint $(6)$. We also have $a^*_{j_1} = a^*_{j_2}$, and $b^*_{j_2} \geq 0$. Therefore, $b^*_{j_1}$ cannot be negative. To prove that $b^*_{j_2}$ is (still) at most $1$, it suffices to  note that the sum of all  $b^*$ values is (still) at most $1- \alpha \leq 1$, and $b^*$ variables are all non-negative. Therefore all constraints are still valid after this operation. 
 
We prove that after a finite number of times (at most ${DK \choose 2}$) of applying this operation, we reach a feasible solution with monotone non-increasing sequence of $a^*$ values. 
If we start with $j_1=1$, and do this operation for any pair $(j_1,j_2)$ 
with $a^*_{j_1} < a^*_{j_2}$, after at most $Dk-1$ steps, we reach a solution in which $a^*_1 \geq a^*_j$ for any $1 < j \leq Dk$.
 We continue the same process by increasing $j_1$ one by one, and after at most ${DK \choose 2}$ updates, we reach a sorted sequence of $a^*$ values.  By monotonicity of $a^*$ values, we can simplify the linear program even further: 

\vspace{0.2in}
\begin{tabular}{|ll|} \hline $LP^{k,k_2}$ Minimize $\beta$ & \\ \hline 
Subject to: & \\
$(1)$ $\beta \geq 1-\alpha +\sum_{j=1}^{k_2} a_j$ &\\
$(2)$ $a_j + b_j \geq \frac{\alpha - \sum_{j'=1}^{j-1}a_{j'}}{k_2}$ & $\forall 1 \leq j \leq Dk$\\ 
$(3)$ $\sum_{j=1}^{Dk}  b_j \leq 1- \alpha$ & \\ 
$(4)$ $0 \leq \alpha, a_j, b_j \leq 1$ & $\forall 1 \leq j \leq Dk$ \\ 
$(5)$ $\beta \geq \sum_{j=1}^k a_j+b_j$ & \\ 
$(6)$ $a_j+b_j \geq a_{j+1}+b_{j+1}$ & $\forall 1 \leq j < Dk$ \\ 
$(7)$ $a_{j}  \geq a_{j+1}$ & $\forall 1 \leq j < Dk$ \\ \hline
\end{tabular} \vspace{0.2in}

Now we prove that we can assume that constraint $(2)$ is tight for all $1 \leq j \leq Dk$. At first, we prove by contradiction that the right hand side of constraint $(2)$ is non-negative. Let $j$ be the minimum index for which the right hand side of constraint $(2)$ is negative. 
We can set all $a^*$ and $b^*$ values to zero for any index greater than $j$, and also reduce $a^*_{j-1}$ by some amount to make this right hand side zero. All constraints hold, and we will have a solution in which the right hand side of constraint $(2)$ is always non-negative. Now we make constraint $(2)$ always tight as follows. 
Let $j_1$ be the maximum index in range $[1,Dk]$, for which constraint $(2)$ is loose by some $\delta > 0$. We update the variables as follows. 
If $b^*_{j_1}$ is positive, we reduce it to $\max\{b^*_{j_1} - \delta, 0\}$, and do not change any other variable. We note that in this case, all constraints still hold, and constraint $(2)$ for all indices $j_1, j_1+1, \cdots, Dk$ is tight.

If $b^*_{j_1}$ is zero, we decrease $a^*_{j_1}$ by $\delta$, and for any $j_2 > j_1$, we increase $a^*_{j_2}$ by $\frac{\delta}{k_2}(1-\frac{1}{k_2})^{j_2-j_1-1}$.
We prove that constraint $(2)$ for all indices $j_1, j_1+1, \cdots, Dk$ is tight, and all other constraints still hold after this update. 
By definition of $\delta$, constraint $(2)$ is tight for index $j_1$ after this update. Constraint $(2)$ was tight before the update for $j_2 > j_1$ because of the special choice of $j_1$. We prove that the right and left hand sides of constraint $(2)$ increased by the same amount for each $j_2 > j_1$. The left hand side increased by $\frac{\delta}{k_2}(1-\frac{1}{k_2})^{j_2 - j_1 -1}$.
We also know that the right hand side changed by: 

$$
\frac{\delta - \sum_{j'=j_1+1}^{j_2-1} \frac{\delta}{k_2}(1-\frac{1}{k_2})^{j' - j_1 -1}}{k_2} = \frac{\delta - \frac{\delta}{k_2} k_2 (1-(1-\frac{1}{k_2})^{j_2-j_1-1}) }{k_2} = \frac{\delta (1-\frac{1}{k_2})^{j_2-j_1-1}}{k_2}
$$ 

Therefore constraint $(2)$ is tight for all $j_2 > j_1$ after the update. 
Now we prove feasibility of the new solution. 
For constraint $(1)$, we note that all increments of $a^*$ variables is less than $\frac{\delta}{k_2} \sum_{r=0}^{\infty} (1-\frac{1}{k_2})^r = \delta$. We also note that $a^*_{j_1}$ is decreased by $\delta$. So the right hand side of constraint $(1)$ is decreased, and it remains feasible. We just showed that for $j_2 \geq j_1$, constraint $(2)$ is tight and therefore valid, and it is intact for smaller indices. Constraint $(3)$ is also intact. To prove constraint $(4)$, we should show that the new $a^*$ values are in range $[0,1]$. Since constraint $(2)$ is tight, and $\alpha$ is at most $1$, these new $a^*$ values are all at most $1$. Non-negativity of the right hand side of constraint $(2)$ implies that these new values are all non-negative. Constraint $(5)$ holds since its right hand side is only decreased. To prove constraint $(6)$, we note that the right hand sides of constraint $(2)$ is decreasing in $j$, and they are all tight for indices $\geq j_1$. Therefore constraint $(6)$ holds for $j \geq j_1$. For $j = j_1 -1$, it clearly holds since we are decreasing $a^*_{j_1}$, and consequently its right hand side. To prove constraint $(7)$, we note that the increments in $a^*$ values is decreasing as $j_2 > j_1$ increases. So constraint $(7)$ remains feasible for $j > j_1$. For $j=j_1$, we note that $b^*_{j_1}$ is zero. So using constraint $(6)$, we can prove constraint $(7)$ holds for $j=j_1$ which completes the feasibility proof. 
Each time, we make these updates, the index $j_1$ (the maximum index for which constraint $(2)$ is loose) reduces by at least $(1)$. Therefore after at most $Dk$ operations, we have an optimum solution with all three properties of Claim~\ref{claim:LP_optimum_solution_structure}.
\end{proof}

Using Claim~\ref{claim:LP_optimum_solution_structure}, we can
assume that the solution of $LP^{k,k_2}$ is lower bounded by the next $LP^{new, k, k_2}$. We focus on lower bounding the solution of $LP^{new, k, k_2}$ in the rest of the proof. 

\vspace{0.2in}
\begin{tabular}{|ll|} \hline $LP^{new, k,k_2}$ Minimize $\beta$ & \\ \hline 
Subject to: & \\
$(1)$ $\beta \geq 1-\alpha +\sum_{j=1}^{k_2} a_j$ &\\
$(2)$ $a_j + b_j = \frac{\alpha - \sum_{j'=1}^{j-1}a_{j'}}{k_2}$ & $\forall 1 \leq j \leq Dk$\\ 
$(3)$ $\sum_{j=1}^{Dk}  b_j \leq 1- \alpha$ & \\ 
$(4)$ $0 \leq \alpha, a_j, b_j \leq 1$ & $\forall 1 \leq j \leq Dk$ \\  
$(5)$ $a_j+b_j \geq a_{j+1}+b_{j+1}$ & $\forall 1 \leq j < Dk$ \\ 
$(6)$ $a_{j}  \geq a_{j+1}$ & $\forall 1 \leq j < Dk$ \\ \hline
\end{tabular} \vspace{0.2in}

We note that we eliminated one of the lower bounds on $\beta$. 
This only reduces the optimum solution of the linear program which is consistent with our approach. We also used Claim~\ref{claim:LP_optimum_solution_structure} to replace the inequality constraint $(2)$ with an equality constraint. We also removed the $c$ variables, and added the monotonicity constraint $(6)$. In the rest of the proof, we introduce some notation, and show some extra structure in the optimum solution of $LP^{new, k, k_2}$. This will help us lower bound the optimum solution by analyzing a system of two equations explicitly.  

We start with proving the extra structure. Let $\tau \defeq a_{k_2}$. 
We show that for any pair of indices $1 \leq j_1 < j_2 \leq k_2$, either $b_{j_1}$ is zero or $a_{j_2}$ is equal to $\tau$.
Let $j_1$ be the minimum index with $b_{j_1} > 0$. If $j_1$ is at least $k_2$, the claim holds clearly. So we consider $j_1 < k_2$. 
If $a_{j_1+1}$ is equal to $\tau$, by monotonicity of $a$ values, the claim is proved. So we define $\delta > 0$ to be $\min\{b_{j_1}, a_{j_1+1} - \tau\}$. 
We increase $a_{j_1}$, and $b_{j_1+1}$ by $\delta$, and $\delta(1-\frac{1}{k_2})$ respectively. We also decrease both of $a_{j_1+1}$, and $b_{j_1}$ by $\delta$.
After these changes, constraints $(1)$, $(2)$, $(3)$, $(4)$, and $(5)$ in $LP^{new,k,k_2}$ still hold. 
In particular, we made the changes in this special way to make sure that constraint $(2)$ still holds. Constraint $(5)$ also holds since the right hand side of constraint $(2)$ is decreasing in $j$. 
But the monotonicity constraint $(6)$ may be violated for $j=j_1+1$. This happens if the new $a_{j_1+1}$ is less than $a_{j_1+2}$. In this case, we swap the variables $a_{j_1+1}$ and $a_{j_1+2}$. We also change $b_{j_1+1}$, and $b_{j_1+2}$ in a way that constraint $(2)$ holds for both $j = j_1+1$, and $j=j_1+2$. Similarly, we have that all constraints $(1), (2), \cdots, (5)$ hold. But constraint $(6)$ may be violated for $j=j_1+2$. We continue doing this swap operation until constraint $(6)$ holds as well. This will happen after at most $k_2$ swap operations, since the new $a$ values are all at least $\tau$.
Finally, we reach a feasible solution for  $LP^{new,k,k_2}$ in which either one more $a$ variable is equal to $\tau$ or one more $b$ variable is set to zero. Therefore after at most $2k_2$ updates, for any pair of indices $1 \leq j_1 < j_2 \leq k_2$, we have that either $b_{j_1}$ is zero or $a_{j_2}$ is equal to $\tau$.

We also claim that for any $j > k_2$, we can assume either $a_j=\tau$, or $b_j=0$. Otherwise, suppose $j_1 > k_2$ is the smallest index for which $a_{j_1} < \tau$, and $b_{j_1} > 0$. We can increase $a_{j_1}$ by $\delta = \min\{\tau - a_{j_1}, b_{j_1}\}$, and decrease $b_{j_1}$ by $\delta$. We note that since $a_{j_1}+ b_{j_1}$ is invariant, the monotonicity constraints $(5)$ still holds. 
We need to prove constraint $(6)$ for $j=j_1-1$. It can be violated only if $a_{j_1-1}$ is less than $\tau$, and $b_{j_1-1}$ is non-negative which contradicts the choice of $j_1$. 
To prove feasibility, we only need to prove constraint $(2)$  for $j > j_1$. We make it hold by the following adjustments. We start by $j=j_1+1$, and increase it one by one. If constraint $(2)$ is loose by some $\epsilon$ for index $j$, we decrease $b_j$ by $\min\{b_j, \epsilon\}$. We also decrease $a_j$ by $\max\{\epsilon - b_j, 0\}$. The sum $a_j+b_j$ is reduced by $\epsilon$, and constraint $(2)$ now holds for $j$. It is also clear that sum of $b$ variables do not increase, and all other constraints still hold. With these adjustments for each $j > j_1$ in the increasing order, we know the solution is feasible. We also have all the ingredients to characterize the optimum solution of $LP^{new, k, k_2}$, and lower bound it.

 We can formalize this optimum solution in terms of a few parameters including $\tau, \beta, k,$ and $k_2$. 
 At this final stage of the proof, we conclude with two lower bounds (system of two equations) on $\beta$ and $1-\alpha$ in terms of these few parameters. 
 Let $t$ be the smallest index in range $1 \leq t \leq k_2$ with $b_t \neq 0$. If such an index does not exist, define $t$ to be $k_2$.  Using Claim~\ref{claim:LP_optimum_solution_structure}, we know that constraint $(2)$ is tight, and $b_j = 0$ for any $j < t$, we can inductively prove that $a_j = \frac{\alpha}{k_2} (\frac{k_2-1}{k_2})^{j-1}$ for any $j < t$. 
Consequently, we have that $\sum_{j=1}^{t-1} a_j$ is equal to $\alpha (1 - (\frac{k_2-1}{k_2})^{t-1}) \geq \alpha (1 - e^{-r})$ where $r$ is defined to be $\frac{t-1}{k_2}$. Therefore, constraint $(1)$ implies that:
$\beta \geq 1- \alpha + (1-e^{-r})\alpha + (1-r)k_2\tau$. This lower bound on $\beta$ is the first inequality we wanted to prove. To achieve the second inequality (lower bound on $1-\alpha$), we start by upper bounding the sum of $a$ variables. 

 We show that $\sum_{j=1}^t a_j \leq \alpha\left(1 - e^{-r} + \frac{2}{k_2}\right)$ as follows.
 Since $a$ variables are monotone and, constraint $(2)$ is tight for $j=1$, we have $a_t \leq a_1 = \frac{\alpha}{k_2}$. 
We also have that $\sum_{j=1}^{t-1} a_j$ is equal to $\alpha \left(1 - \left(1-\frac{1}{k_2}\right)^{t-1}\right)$. 
 We can upper bound $\left(1-\frac{1}{k_2}\right)^{k_2-1}$ by $e^{-1}$ as follows. 
We prove that  $\left(1-\frac{1}{k_2}\right)^{k_2-1}$ is a monotone decreasing sequence for $k_2 = 1, 2, \cdots$. 
 
 $$
 \frac{\left(1-\frac{1}{k_2}\right)^{k_2-1}}{\left(1-\frac{1}{k_2+1}\right)^{k_2}} =
 \frac{1}{1-\frac{1}{k_2+1}} \times \left(1-\frac{1}{k_2^2} \right)^{k_2-1} \geq
 \frac{k_2+1}{k_2} \times \left(1-\frac{k_2-1}{k_2^2} \right)  = 
 \frac{k_2^3+1}{k_2^3} > 1
 $$
 
 We also know that $\lim_{k_2 \to \infty} \left(1-\frac{1}{k_2}\right)^{k_2-1} = e^{-1}$. Therefore each term $\left(1-\frac{1}{k_2}\right)^{k_2-1}$ is at least $e^{-1}$. 
Therefore, we have that

 \begin{eqnarray*}
 \sum_{j=1}^t a_j 
 &\leq& 
 \alpha \left(1-\left(\frac{k_2-1}{k_2}\right)^{t-1}\right) + \frac{\alpha}{k_2}
\leq
\alpha \left(1-\left(\left(\frac{k_2-1}{k_2}\right)^{k_2}\right)^{r}\right) + \frac{\alpha}{k_2} \\
&\leq&
\alpha \left(1-\left(\left(\frac{k_2-1}{k_2}\right)^{k_2-1} \left(\frac{k_2-1}{k_2}\right)\right)^{r}\right) + \frac{\alpha}{k_2}
\leq
\alpha \left(1-e^{-r}\left(1-\frac{1}{k_2}\right)^r + \frac{1}{k_2}\right) \\
&\leq&    
\alpha \left(1-e^{-r}\left(1-\frac{r}{k_2}\right) + \frac{1}{k_2}\right) 
\leq
\alpha \left(1-e^{-r} + \frac{2}{k_2}\right)
 \end{eqnarray*}
 
 which yields the desired upper bound on $\sum_{j=1}^t a_j$.

 Since $a_j$ is at most $\tau$ for any $j > t$, constraint $(2)$ implies that $b_j 
 \geq \frac{\alpha - \alpha\left(1-e^{-r} + \frac{2}{k_2}\right) - (j-t-1)\tau}{k_2} - \tau$. 
 To simplify the calculation, let $\alpha'$ be $\alpha - \alpha\left(1-e^{-r} + \frac{2}{k_2}\right) = \alpha\left(e^{-r} - \frac{2}{k_2}\right)$. Summing up the lower bounds on $b$ values imply that: $1-\alpha \geq \sum_{j=1}^{Dk} b_j \geq \sum_{\ell=0}^{\min\{Dk-t-1,\ell^*\}} \left( \frac{\alpha' - \ell\tau}{k_2} - \tau \right)$ where $\ell^*$ is the greatest integer  less than or equal to $\frac{\alpha'}{\tau} - k_2$ (we set it to zero if $\frac{\alpha'}{\tau} - k_2$ is not positive). By this definition of $\ell^*$, we make sure that the right hand side summands are non-negative, and therefore they will not weaken the inequality.
 The rest of the analysis is done in two cases. If $\ell^* \leq Dk-t-1$, by computing the sum, we achieve following inequality: 

$$
1-\alpha \geq
\sum_{j=1}^{Dk} b_j \geq
  \frac{\ell^*+1}{k_2} \left(\alpha' - k_2\tau - \frac{\ell^*\tau}{2} \right) 
\geq \frac{\alpha' - k_2\tau}{k_2\tau} \left(\alpha' - k_2\tau - \frac{\alpha' - k_2 \tau}{2} \right) 
= \frac{\left(\alpha' - k_2 \tau \right)^2}{2k_2\tau}
 $$

We conclude that if $\beta^*$ is the solution of linear program $LP^{new,k,k_2}$, the following system of equations should have a solution with $\beta = \beta^*$: 

\begin{eqnarray}\label{eq:SLE1}
\beta &\geq& 1- \alpha + (1-e^{-r})\alpha + (1-r)\lambda   \\
1 - \alpha &\geq& \frac{\left(e^{-r}\alpha - \frac{2\alpha}{k_2} - \lambda \right)^2}{2\lambda} \nonumber
\end{eqnarray}
  
 where $\lambda$ is defined to be $k_2\tau$. We note that $\alpha, \lambda,$ and $r$ are the variables of the above system of two equations, and they should be in range $[0,1]$. We also note that $k_2$ is another variable which can be any positive integer. To simplify, we solve the following system of equations to eliminate $k_2$: 
 
\begin{eqnarray}\label{eq:SLE2}
\beta &\geq& 1- \alpha + (1-e^{-r})\alpha + (1-r)\lambda   \\
1 - \alpha &\geq& \frac{\left(e^{-r}\alpha - \lambda \right)^2}{2\lambda} \nonumber
\end{eqnarray}
 
It is easy to see that if system of equations~\ref{eq:SLE1} has a solution $(\beta_1, \alpha_1, \lambda_1, r_1, k_2)$, system of equations~\ref{eq:SLE2} has the following solution: $(\beta = \beta_1+\frac{2}{k_2}, \alpha = \alpha_1, \lambda = \lambda_1 + \frac{2}{k_2}, r = r_1)$. Therefore it suffices to lower bound $\beta$ in system of equations~\ref{eq:SLE2}. Because the same lower bound plus the term  $\frac{2}{k_2} = O(\frac{1}{k})$ holds for $\beta$ in system of equations~\ref{eq:SLE1}. 

By computing the partial derivatives, and considering boundary values, one can find the minimum $\beta$ for which the system of equations~\ref{eq:SLE2} has a valid solution.  Its minimum occurs when $r$ is zero, and the second inequality is tight. Therefore we have $\alpha= \sqrt{\lambda(2-\lambda)}$. We conclude that $\beta$ is the minimum of $1-\sqrt{\lambda(2-\lambda)} + \lambda$ which is equal to $1 - \sqrt{(1-\sqrt{\frac{1}{2}})(1+\sqrt{\frac{1}{2}})}+ (1-\sqrt{\frac{1}{2}}) = 2 - 2\sqrt{\frac{1}{2}} = 2 -\sqrt{2} \approx 0.5857$ and occurs at $\lambda = 1 - \sqrt{\frac{1}{2}} \approx 0.2928$.

In the other case, $\ell^*$ is greater than $Dk-t-1$, and therefore $\alpha'$ is at least $Dk\tau$ by definition of $\ell^*$. Since $t \leq k_2$, we can write the following lower bound on $1-\alpha$: 

$$
1-\alpha \geq
\sum_{j=1}^{2k} b_j \geq
  \frac{Dk-t}{k_2} \left(\alpha' - k_2\tau - \frac{(Dk-t-1)\tau}{2} \right) 
\geq \frac{D-1}{2} \left(\alpha' - k_2\tau\right) 
 $$

So we have a slightly different set of two inequalities in this case to lower bound $\beta$.  

\begin{eqnarray}\label{eq:SLE3}
\beta &\geq& 1- \alpha + (1-e^{-r})\alpha + (1-r)\lambda   \\
1 - \alpha &\geq& \frac{D-1}{2}\left(e^{-r}\alpha - \lambda \right) \nonumber
\end{eqnarray}

 It is evident that both inequalities should be tight to minimize $\beta$, and therefore $\alpha$ is equal to $\frac{1+\lambda}{1+D'e^{-r}}$ where $D'$ is $\frac{D-1}{2}$. So we can write $\beta$ as a function of just $\lambda$ and $r$: 
 
 $$
 \beta(\lambda,r) = 1 - \frac{1+\lambda}{1+D'e^{-r}}e^{-r} +(1-r)\lambda
 $$  
 
To minimize $\beta$, either $\lambda$ should be at one of its boundary values $\{0,1\}$, or the partial derivative $\frac{\partial \beta}{\partial \lambda}$ should be zero.
 For $\lambda=1$, $\beta$ cannot be less than $2 - r -e^{-r} \geq 1- \frac{1}{e} > 2-\sqrt{2}$. For $\lambda=0$, we have $1-\alpha \geq D'\alpha' \geq D'\alpha$, so $\alpha$ is at most $\frac{1}{1+D'} = \sqrt{2} -1$, and therefore $\beta$ is at least $1-\alpha \geq 2-\sqrt{2}$. The only case to consider is when $\frac{\partial \beta}{\partial \lambda} = 0$ which means $\frac{e^{-r}}{1+D'e^{-r}}$ should be equal to $1-r$ with a unique solution $r^* = 0.71 \pm 0.001$. Therefore $\beta$ is equal to $1-(1-r^*)(1+\lambda-\lambda) = r^* > 2-\sqrt{2}$. We conclude that any feasible solution of linear program $LP^{k,k_2}$ 
has $\beta$  at least $2-\sqrt{2} - O(\frac{1}{k})$ which completes the proof.
\end{proof}

\end{proof}

We show in the following Theorem that the $(2-\sqrt{2}) \approx 0.585$ lower bound on the approximation ratio of the core-sets that $\greedy$ finds is tight even if we allow the core-sets to be significantly large.

\begin{theorem} \label{thm:585tight}
For any $\epsilon > 0$, and any core-set size $k' \geq k$, there are instances of monotone submodular maximization problem with cardinality constraint $k$ for which $\greedy$ is at most a $(2-\sqrt{2}+O(\epsilon))$-approximate randomized composable core-set even if each item is sent to $C  \leq \sqrt{\epsilon m}$ machines.
\end{theorem}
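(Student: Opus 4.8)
The plan is to exhibit, for every $\eps>0$ and every core-set size $k'\ge k$, an explicit coverage instance on which the union of $\greedy$'s outputs over a random clustering of multiplicity $C\le\sqrt{\eps m}$ has no $k$-subset of value exceeding $(2-\sqrt2+O(\eps))f(\opt)$. Since every item is placed on at least one machine we have $\cup_iT_i=\NN$, hence $\E[f_k(\cup_iT_i)]=f_k(\NN)=f(\opt)$, so it suffices to prove $\E[f_k(\cup_iS_i)]\le(2-\sqrt2+O(\eps))f(\opt)$. The instance will be a combinatorial realization of the extreme point of the factor-revealing program $LP^{k,k_2}$ from Lemma~\ref{lem:lower_bound_LP}, which attains its optimum $2-\sqrt2$ at $r=0$, $\lambda=1-\tfrac1{\sqrt2}$, $\alpha=\tfrac1{\sqrt2}$ (so that $1-\alpha=\lambda$).

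Concretely, I would take $f(S)=|\bigcup_{A\in S}A|$ on a large ground set and split the optimum $\opt$ into a ``fit'' part of $\approx(1-\alpha)k$ sets --- these will survive a random machine and reappear in $\cup_iS_i$ --- contributing a $(1-\alpha)$-fraction of $f(\opt)$, and an ``unfit'' part of $\approx\alpha k$ sets contributing the remaining $\alpha$-fraction. In addition $\NN$ contains a calibrated family of Greedy-trap ``decoy'' sets, each present in $L\gg m\ln(k/\eps)$ identical copies so that, exactly as in the proof of Theorem~\ref{thm:one_half_barrier} (using the bound $(1-1/m)^L\le e^{-L/m}$), with probability $1-O(\eps)$ every machine receives a copy of every decoy type. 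The decoys must be designed --- following the tight form of constraints $(1),(2),(5)$ of $LP^{k,k_2}$ at the extreme point --- so that: (i) on every machine the maximum marginal value is, at every step of $\greedy$, realized by a decoy or by a fit OPT set and never by an unfit OPT set (so no unfit OPT set ever enters any $S_i$ --- and this is robust to $k'$ being arbitrarily large, because once the decoys are consumed the unfit OPT sets have no residual coverage left, and it is unaffected by which other $\opt$-items share the machine since those lie in disjoint regions); and (ii) the decoys are globally redundant: any $k_2$ of them, together with all fit OPT sets, cover at most a $\bigl((1-\alpha)+\lambda+O(\eps)\bigr)$-fraction of the universe.

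Given such an instance the analysis is short: conditioned on the $(1-O(\eps))$-probability event that every machine holds every decoy type, property~(i) gives $\cup_iS_i\subseteq\{\text{fit OPT sets}\}\cup\{\text{decoys}\}$, so by monotonicity of $f_k$ and property~(ii), $f_k(\cup_iS_i)\le\bigl((1-\alpha)+\lambda+O(\eps)\bigr)f(\opt)=(2-\sqrt2+O(\eps))f(\opt)$; the complementary event contributes only $O(\eps)f(\opt)$ to the expectation, so the claim follows. The hard part is the decoy design itself: each decoy must be \emph{locally irresistible} to $\greedy$ --- beating every still-unselected unfit OPT set in marginal value at every step, even when $k'$ is huge and almost all decoys have already been taken --- while the decoys must be \emph{globally weak}, so that no $k_2$ of them together with the fit sets cover more than a $2-\sqrt2$ fraction. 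Reconciling ``locally irresistible'' with ``globally weak'' at exactly the $2-\sqrt2$ threshold is precisely the content of the tight case of $LP^{k,k_2}$, so the decoy family has to be a concrete instantiation of that LP's extreme-point solution (a carefully graded cascade of coverage sets), and one must verify that the discretization of the set system, the $O(\eps)$-probability failure events (controlled via the Chernoff bound of Lemma~\ref{lem:ChernoffBound}), and the $O(1/k)$ slack inherited from Lemma~\ref{lem:lower_bound_LP} all stay within the stated $O(\eps)$ error.
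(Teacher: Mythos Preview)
Your high-level plan is the same as the paper's: a coverage instance realizing the $r=0$, $\alpha=\tfrac1{\sqrt2}$, $\lambda=1-\tfrac1{\sqrt2}$ extreme point of $LP^{k,k_2}$, with decoys replicated $\gg m\ln(k/\eps)$ times so every machine sees every decoy type, and decoys that (i) beat each unfit OPT set at every Greedy step and eventually absorb all of its coverage, yet (ii) any $k$-subset of \{fit OPT\}$\cup$\{decoys\} has value at most $(1-\alpha+\lambda)f(\opt)=(2-\sqrt2)f(\opt)$.

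Where you stop is exactly where the content is. The paper's construction is concrete and somewhat different from what you sketched: the optimum is \emph{one} large ``fit'' set $B'$ (not $(1-\alpha)k$ of them) together with $k_2=k-1$ ``unfit'' row sets $R_i=\cup_{j=1}^{k_3}B_{i,j}$, where $\{B_{i,j}\}$ is a $k_2\times k_3$ grid of disjoint blocks and $k_3=\lceil\alpha k_2/\lambda\rceil$. The decoys are the column sets $X_j=C_j\cup(\text{a decreasing slice of }B')$ with $|X_j|$ strictly decreasing in $j$; on any machine without $B'$, at step $t$ the marginal of $X_t$ is $(k_3-t+1)\lceil\alpha/\eps\rceil+1$, which strictly exceeds the marginal $(k_3-t+1)\lceil\alpha/\eps\rceil$ of every $R_i$, so Greedy picks $X_1,\ldots,X_{k_3}$ first and then the $R_i$'s have marginal $0$. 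To handle $k'>k_3$, the paper adds $k'$ replicated singleton filler sets $Z_\ell$ so Greedy always has a marginal-$1$ option and never falls back to $R_i$.

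Two small corrections to your sketch: first, your claim that the analysis ``is unaffected by which other $\opt$-items share the machine'' does not hold in the paper's instance --- if $B'$ lands on the same machine as some $R_i$, Greedy picks $B'$ first and then $R_i$ \emph{does} beat the decoys (since $k_3>k_2$), so the paper separately bounds $\Pr[R_i\text{ shares a machine with }B']\le C^2/m\le\eps$. Second, your fit/unfit split of sizes $(1-\alpha)k$ and $\alpha k$ is not what makes the numbers come out; the paper's split is $1$ and $k-1$, and the $(1-\alpha)$ versus $\alpha$ apportionment is in \emph{value}, not cardinality.
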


\begin{proof}
Let $\NN$ be a subfamily of subsets of a universe ground set $\UU$, and let
$f:2^\NN\rightarrow R$ be defined as follows: for any set $S \subseteq \NN$, $f(S) = |\cup_{A \in S} A|$. 
Note that $f$ is a coverage function, and thus it is a monotone submodular function. 

Our hardness instance is inspired by the solution of Equation~\ref{eq:SLE2} in the proof of Theorem~\ref{thm:two_minus_sqrt_two_core-set}. We remind the solution $\left(\beta = 2 - \sqrt{2}, \alpha = \sqrt{\frac{1}{2}}, \lambda = k_2 \tau = 1 - \sqrt{\frac{1}{2}}\right)$. Let $k_2=k-1$, and $k_3=\lceil \frac{\alpha k_2}{\lambda}\rceil$.
For each $1 \leq i \leq k_2$, and $1 \leq j \leq k_3$, let $B_{i,j}$ be a set of $\lceil \frac{\alpha}{\epsilon}\rceil$ arbitrary elements in universe.  Let $B'$ be a set of  $(1+\epsilon)k k_3 \lceil \frac{1-\alpha}{\epsilon}\rceil$ arbitrary items of $\N$. We assume sets $B_{i,j}$s, and set $B'$ are all pairwise disjoint. Define set $R_i$ to be the union $\cup_{j=1}^{k_3} B_{i,j}$ (row $i$ of matrix $B$ of sets) for any $1 \leq i \leq k_2$. 
Similarly define set $C_j$ to be the union $\cup_{i=1}^{k_2} B_{i,j}$ (column $j$ of matrix $B$ of sets) for any $1 \leq j \leq k_3$. Define set $X_j$ to be union of $C_j$, and $\max\{1,\left( k_3  - k_2 - (j-1)\right)  \lceil \frac{\alpha}{\epsilon}\rceil + 1\}$ arbitrary elements of $B'$ such that sets $\{X_j\}_{j=1}^{k_3}$ are pairwise disjoint. We note that there are enough elements in $B'$ to make these sets disjoint because of Equation~\ref{eq:SLE2}.  Since we want sets $X_j$s to be disjoint we need at least $k_3+\sum_{j=1}^{k_3-k_2+1} \left( k_3  - k_2 - (j-1)\right)  \lceil \frac{\alpha}{\epsilon}\rceil$ items in $B'$. Therefore $\vert B'\vert$ suffices to be as large as $k_3+\frac{(k_3-k_2)(k_3-k_2+1)}{2}\times \lceil \frac{\alpha}{\epsilon}\rceil \leq \frac{(\alpha - \lambda)^2}{2} \frac{\alpha}{\lambda^2 \epsilon} (1+\epsilon)$. Using Equation~\ref{eq:SLE2}, we have  $1-\alpha \geq \frac{(\alpha - \lambda)^2}{2\lambda}$ which completes the proof of the lower bound claim on size of $B'$.   
We also define $k'$ singleton disjoint sets $\{Z_{\ell}\}_{\ell=1}^{k'}$ where $k'$ is the the number of items $\greedy$ is allowed to return. Each $Z_{\ell}$ contains one element of universe, and is disjoint from all other sets defined above. 

The hardness instance consists of one copy of set $B'$, one copy of set $R_i$ (for each $1 \leq i \leq k_2$), $10m\ln(mk)$ copies of set $X_j$ (for each $1 \leq j \leq k_3$), and $10m\ln(mk)$ copies of set $Z_{\ell}$ (for each $1 \leq \ell \leq k'$)  where $m$ is the number of machines.

First, we prove that with high probability, each machine
has at least one copy of $X_j$ for each $1 \leq j \leq k_3$. We note that for each $j$, and each machine $M$, at least $10\ln(mk)$ copies of $X_j$ is sent to $M$, and using Chernoff bound~\ref{lem:ChernoffBound}, with probability at least $1 - \frac{1}{m^2k^2}$ at least one copy of $X_j$ is sent to $M$. Using Union bound, with probability at least $1 - \frac{1}{mk}$, for each machine is receiving at least one copy of $X_j$ for any $j$. Similarly we know that each machine has at least one copy of set $Z_{\ell}$ for any $1 \leq \ell \leq k'$.

We also note that for any $1 \leq i \leq k_2$, with probability at least $1-\epsilon$, none of the at most $C$ copies of set $R_i$ shares a machine with one copy of $B'$. There are at most $C$ copies of $B'$, and $C$ copies of $R_i$, and the probability that two sets are sent to the same machine is $\frac{1}{m}$. So with probability at most $\frac{C^2}{m} \leq \epsilon$, one copy of $R_i$ and one copy of $B'$ are sent to the same machine.

In a machine $M$ that does not have any copy of  $B'$, and has at least one copy of $X_j$ (for any $1 \leq j \leq k_3$), $\greedy$ does not choose any copy of $R_i$. We prove this by contradiction. Suppose at iteration $t$, $\greedy$ chooses some set $R_i$ in this machine. Sets $X_1, X_2, \cdots, X_{t-1}$ are the first selected sets by $\greedy$ since $X_j$ sets are disjoint and monotone decreasing in size. We claim that the marginal gain of set $X_t$ is larger than the marginal gain of $R_i$ which shows the contradiction. Marginal gain of selecting $X_t$ is $k_2 \lceil \frac{\alpha}{\epsilon}\rceil + \left( k_3  -k_2 -  (t-1)\right)  \lceil \frac{\alpha}{\epsilon}\rceil + 1$. We also know that marginal gain of selecting any row set $R_i$ is equal to $\left(\lceil \frac{\alpha k_2}{\lambda}\rceil - (t-1)\right) \lceil \frac{\alpha}{\epsilon}\rceil$.  Clearly the marginal gain of adding set $X_t$ is larger which completes the proof of the claim. So $\greedy$ selects all sets $\{X_j\}_{j=1}^{k_3}$ at first.
We note that after choosing all these sets, marginal gain of adding any row set $R_i$ is zero, and therefore they will not be selected because $\greedy$ can always choose a set $Z_{\ell}$ to get a marginal value of at least $1$. 

So the expected value of best size $k$ subset of selected items (sets) is at most $\epsilon k_2  \vert R_i\vert +  k + \vert B' \vert + k \vert C_j \vert$ for any $i,j$. 
Because each row set $R_i$ is selected with probability at most $\epsilon$, and they all have the same size. There are at most $k$ column sets present in any set of $k$ selected sets. We also note that column sets $C_j$s have the same size. Set $B'$ is also counted in the computation, and at most $k$ singleton sets $Z_{\ell}$s should be counted. Therefore the maximum value of size $k$ subsets of selected sets is at most: 

\begin{eqnarray*}
\epsilon k_2  \vert R_i\vert +  k + \vert B' \vert + k \vert C_j \vert 
&\leq & 
 k k_3 \left( \epsilon \lceil \frac{\alpha}{\epsilon}\rceil + 1 + (1+\epsilon) \lceil \frac{1-\alpha}{\epsilon}\rceil  \right) + k^2 \lceil \frac{\alpha}{\epsilon}\rceil \\
&\leq & k_2k_3 (1+\frac{2}{k})(1+O(\epsilon)) (\alpha + 1 + \frac{1-\alpha}{\epsilon} + \frac{\lambda}{\epsilon}) \\
&\leq & k_2k_3 (1+O(\epsilon)) \frac{1-\alpha + \lambda}{\epsilon}
\end{eqnarray*}

where the last inequality holds for $k = \Omega(\frac{1}{\epsilon})$. On the other hand, the optimum solution consists of $k_2 = k-1$ row sets $R_i$s, and set $B'$ with value at least: 

$$
k_2 k_3 \lceil \frac{\alpha}{\epsilon}\rceil +  k_2k_3 \lceil \frac{1-\alpha}{\epsilon}\rceil \geq \frac{k_2k_3}{\epsilon}
$$

We conclude that the expected value of maximum value size $k$ subset of selected sets is upper bounded by $\lambda + 1-\alpha + O(\epsilon) = 2 - \sqrt{2} + O(\epsilon)$ times the optimum solution $f(\opt)$.
 \end{proof}

\subsection{Improved Distributed Approximation Algorithm}\label{subsec:PseudoGreedy}
We remind that in the first phase, each machine $1 \leq i \leq m$ runs algorithm $\greedy$ on set $T_i$ with $k'=Dk$ (where $D$ is $2\sqrt{2}+1$).
By Theorem~\ref{thm:two_minus_sqrt_two_core-set}, there exists a size $k$ subset of selected items $\cup_{i=1}^m S_i$ with expected value at least $0.585f(\opt)$, but we do not know how to find this set efficiently. 
If we  apply algorithm $\greedy$ again on $\cup_{i=1}^m S_i$ to select $k$ items in total, 
we achieve a distributed approximation factor of $(1-\frac{1}{e})\times 0.585 \approx 0.37$. 
In the following, we present a post-processing algorithm $\pseudogreedy$ that achieves an overall distributed approximation factor better than $1/2$. In particular, after the first phase, we show how to find a size $k$ subset of the union of selected items $\cup_{i=1}^m S_i$ with expected value at least $(0.545-o(1))f(\opt)$. 

Algorithm $\pseudogreedy$ proceeds as follows: it first computes a family of candidate solutions of size $k+O(1)$, and keeps the one candidate solution $V$ with the maximum value. It then lets $S$
to be a random size $k$ subset of $V$, and returns $S$ as the solution. 
These candidate solutions, denoted by $S_{k'_2, I}$ (for $1\le k'_2\le k$ and $4I\subseteq \{1,\cdots, 8\}$) in Algorithm~\ref{alg:54-approximate}, are  computed as follows: 
We first  enumerate all $k$ possible values of $1 \leq k'_2 \leq k$ (this notation is used to be consistent with the proof of Theorem~\ref{thm:two_minus_sqrt_two_core-set}). Then, by letting $k'_1=k-k'_2$, and $k_3=32 \lceil \frac{k'_2}{128}\rceil$, we partition the first $8k_3$ items in set $S_1$\footnote{ We choose any machine instead of machine 1, and since the clustering is done at random, the analysis goes through.
} into eight subsets  $\{A_{i'}\}_{i'=1}^8$, each of size $k_3$. 
The next step of the algorithm proceeds as follows: 
for any $I \subseteq \{1, 2, \cdots, 8\}$ with $\lvert I\rvert \leq 4 + \frac{k'_1}{k_3}$, we initialize set $S_{k'_2,I}$ with the union of all sets $A_{i'}$ where $i' \in I$. Then for $k'_1 + (4- \lvert I \rvert)k_3$ iterations, we search all items in $\cup_{i=1}^m S_i$, and insert the item with the maximum marginal value to $S_{k_2,I}$. Roughly speaking, by starting from all these subsets, we ensure that the selected set hits enough number of items in $\opt$.
The upper bound we enforce on $\lvert I \rvert$ is to make sure that the number of iterations at this step is non-negative, i.e. $k'_1 + (4- \lvert I \rvert)k_3 \geq 0$. 
Finally 
we define $V$ to be the set $S_{k'_2,I}$ with the maximum $f$ value, and return a random subset of size $k$ of $V$ as the output set $S$.

\vspace{0.07 in}
\begin{minipage}[t]{0.8\columnwidth}
\fbox{
\begin{algorithm}[H]\label{alg:54-approximate}
\textbf{Input:} A collection of $m$ subsets $\{S_1,\ldots, S_m\}$.\\[0.6ex]
 \textbf{Output:} Set $S \subset \cup_{i=1}^m S_{i}$ with $|S| \leq k$.\\[0.6ex]
$V \gets \emptyset$\;
\ForAll{$1 \leq k'_2 \leq k$}{
	$k_3 \gets 32\lceil \frac{k'_2}{128}\rceil$\; $k'_1 \gets k-k'_2$\;
	Partition the first $8k_3$ items of $S_1$ into $8$  sets $\{A_{i'}\}_{i'=1}^{8}$ each of size $k_3$\;
	\ForAll{$I \subseteq \{1, 2, \cdots, 8\}$ with $\vert I \vert \leq 4 + \frac{k'_1}{k_3}$}{
		$S_{k'_2,I} \gets \cup_{i' \in I} A_{i'}$\;
		\For{$k'_1 + (4 - \vert I \vert)k_3$ times}{
			Find $\argmax_{x \in \cup_{i=1}^m S_i} \Delta(x,S_{k'_2,I})$, and insert it to $S_{k'_2,I}$\;
		}
		{\bf if} {$f(S_{k_2,I}) > f(V)$} {\bf then} $V \gets S_{k_2,I}$\;
	}
}
$S \gets$ a random size $k$ subset of $V$\;
 \caption{Algorithm $\pseudogreedy$}
\end{algorithm}
}
\end{minipage}

\begin{theorem}\label{thm:54approximate}
Algorithm $\pseudogreedy$ returns a subset $S$ with size at most $k$, and expected value at least $(0.545-O(\frac{1}{k} +\frac{\ln(C)}{C}))f(\opt)$. 
\end{theorem}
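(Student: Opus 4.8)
The plan is to reduce the statement to a claim about a single, well-chosen iteration of the outer loop of $\pseudogreedy$, and then to analyze that iteration with a factor-revealing linear program patterned on $LP^{k,k_2}$ and its analysis in Theorem~\ref{thm:two_minus_sqrt_two_core-set}. Fix the random clustering $\{T_i\}_{i=1}^m$ and reuse the notation of that proof: let $\optMS$ be the items of $\opt$ surviving machine~$1$, $\optMNS=\opt\setminus\optMS$, $k_2=|\optMNS|$, $k_1=k-k_2$, $\optMS'=\optMS\cap\opt^S$, and $\opt'=\optMS'\cup\optMNS$. Since the outer loop enumerates every value $1\le k'_2\le k$ and the returned set $S$ is drawn from the best candidate $V$, the proof reduces to three ingredients: (a) for the iteration $k'_2=k_2$ (or, when $k_2=0$, for $k'_2=1$), the candidates produced include a set of value at least $(0.545-O(1/k))\,f(\opt')$; (b) every candidate $S_{k'_2,I}$ has size exactly $4k_3+k'_1\in[k,\,k+128)$ because $k_3=32\lceil k'_2/128\rceil\in[k'_2/4,\,k'_2/4+32)$, so a uniformly random size-$k$ subset of $V$ keeps, by the standard averaging bound for monotone submodular functions with $f(\emptyset)=0$ (the average of the $k$ largest marginals in a random order dominates the overall average), a $(1-O(1/k))$ fraction of $f(V)$ in expectation; and (c) taking expectation over the clustering and applying Lemma~\ref{lem:random_machine_selected_vs_total_selected} gives $\E[f(\opt')]\ge(1-O(\ln(C)/C))f(\opt)$. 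Composing (a)--(c) yields $\E[f(S)]\ge(0.545-O(1/k+\ln(C)/C))f(\opt)$, and $|S|\le k$ is immediate.

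For (a) I would fix the iteration $k'_2=k_2$ (so $k'_1=k_1$) and consider the eight blocks $A_1,\dots,A_8$ of size $k_3\approx k_2/4$ partitioning the top $8k_3\approx 2k_2$ items of $S_1=\greedy(T_1)$, which exist since $8k_3=2k_2+O(1)<Dk$. The point of hedging over $I$ is that the worst instance of $LP^{k,k_2}$ forces a core-set to pair \emph{all} of $\optMS'$ with the $k_2$ highest-marginal items of $S_1$, whose coverage may overlap; and, unlike the idealized object $f_k(S_1\cup\optMS')$, $\pseudogreedy$ cannot compute the optimal combination. By seeding with an arbitrary union of blocks and then completing greedily over $\cup_i S_i\supseteq\optMS'$, it can interpolate, at granularity $1/8$, between ``keep a long prefix of $S_1$'' --- good when the decisive content sits in the $\optMNS$ part, which the machine-$1$ greedy output accounts for through the domination property --- and ``rebuild greedily over $\cup_i S_i$'', good when the decisive content sits in $\optMS'$. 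Concretely I would fix a permutation $\pi$ of $\opt'$ placing $\optMS'$ before $\optMNS$ as in the proof of Lemma~\ref{lem:LP_lower_bound_core-set}, keep the per-step variables $\alpha$ and $a_j,b_j,c_j$ attached to the machine-$1$ greedy trajectory $S^0\subset S^1\subset\cdots\subset S_1$, and write an augmented program whose constraints lower-bound $f(S_{k_2,I})$ for each admissible $I$ using submodularity together with the greedy selection rule both on machine~$1$ (the fact behind constraint~$(2)$ of $LP^{k,k_2}$, via the first property of Definition~\ref{def:nice}) and in the post-processing phase. Maximizing over $I$ then becomes a maximum over this family of constraints, so the quantity in (a) is lower-bounded by the optimum of a minimization program resembling $LP^{new,k,k_2}$ but with the block-selection freedom weakening the adversary (with more blocks the bound would approach $2-\sqrt{2}$; eight suffice for $0.545$).

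To finish (a) I would evaluate this program the way Lemma~\ref{lem:lower_bound_LP} evaluates $LP^{k,k_2}$: first establish a normal-form claim (monotone $a_j$, $c_j=0$, per-step constraints tight) in the style of Claim~\ref{claim:LP_optimum_solution_structure}, reducing an optimal solution to a handful of parameters, and then split on the regime of $k_2$. When $k_2$ is a small enough constant fraction of $k$ --- the degenerate $k_2=0$ (where $k_3=0$ and the sole candidate is plain $\greedy$ on $\cup_i S_i\supseteq\optMS'=\opt'$, already giving $(1-1/e)f(\opt')>0.545\,f(\opt')$) included --- the normal-form analysis yields a bound above $0.545$, mirroring the $k_2\le k/10$ case of Lemma~\ref{lem:lower_bound_LP}. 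When $k_2=\Omega(k)$, I would reduce the program to an explicit low-dimensional system of inequalities in $\alpha$, $\lambda=k_2\tau$, $r$, and the hedging parameter, and minimize $\beta$ over it by checking boundary values and vanishing partial derivatives, exactly as in the $\ell^*\le Dk-t-1$ and $\ell^*>Dk-t-1$ cases of Lemma~\ref{lem:lower_bound_LP}; the minimum comes out to $0.545$ up to an $O(1/k)$ slack that absorbs the rounding of $k_3$ and the $\le 128$ gap in $4k_3-k'_2$.

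The main obstacle I anticipate is constructing this augmented program so that it simultaneously (i) faithfully models the adversary against all admissible $I$ --- in particular that for the right $|I|$ the seed contributes its full greedy marginal while the ensuing greedy phase is still provably at least as good as appending the remaining required items of $\opt'$ --- and (ii) stays tractable enough to admit a closed-form analysis. Pinning the constant down to $0.545$, rather than something weaker, will require the same patient boundary-and-derivative case analysis as in Lemma~\ref{lem:lower_bound_LP}, now carrying the extra degrees of freedom introduced by the eight blocks.
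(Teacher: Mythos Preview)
Your high-level decomposition into parts (a), (b), (c) matches the paper exactly, including the size bound $|V|\le k+128$, the random-subset averaging giving $\E[f(S)]\ge(1-O(1/k))f(V)$, and the appeal to Lemma~\ref{lem:random_machine_selected_vs_total_selected}. You also correctly identify the mechanism behind (a): the seed $S(I)=\cup_{i'\in I}A_{i'}$ contributes its own greedy marginals, and the subsequent $k_1+(4-|I|)k_3$ greedy steps over $\cup_i S_i\supseteq\optMS'$ recover a $(1-e^{-(k_1+(4-|I|)k_3)/k_1})$-fraction of the gap to $S'(I)=\optMS'\cup S(I)$; plugging in the lower bounds for $f(S(I))$ and $f(S'(I))$ from Claim~\ref{claim:LP_feasibility} gives exactly the inequality the paper needs.

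The genuine difference is in how the resulting program is evaluated. You propose to keep the per-step variables $a_j,b_j,c_j$, establish a normal form as in Claim~\ref{claim:LP_optimum_solution_structure}, and then solve a low-dimensional system analytically by boundary/derivative analysis, as in Lemma~\ref{lem:lower_bound_LP}. The paper instead \emph{aggregates} the per-step variables into $256$ block variables $a'_\ell,b'_\ell,c'_\ell$ (each the sum over $\lceil k_2/128\rceil$ consecutive steps), obtaining a linear program $LP^r$ of \emph{constant} size, parameterized only by the real number $r=4k_3/(4k_3+k_1)$. The $0.545$ bound is then certified \emph{numerically}: $r$ is discretized into $160$ subintervals, $r$ is rounded conservatively in each constraint so that feasibility is preserved, and all $160$ finite LPs are solved; the minimum, near $r\approx 127/160$, exceeds $0.545$. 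Your route is sound in principle, but the family of constraint-(1) inequalities carries a distinct exponential factor $1-e^{-1-\frac{4-|I|}{4}\cdot\frac{r}{1-r}}$ for each $|I|$, so the minimizer is unlikely to admit a closed form like $2-\sqrt{2}$, and the case analysis would be substantially heavier than in Lemma~\ref{lem:lower_bound_LP}. The paper's block-aggregation trades that effort for an opaque but easily verifiable numerical constant. (One small slip: in your $k_2=0$ aside, the algorithm never runs with $k_3=0$; with $k'_2=1$ one has $k_3=32$, but taking $I=\emptyset$ still yields plain $\greedy$ for $\ge k$ steps, so your conclusion there is unaffected.)
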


\begin{proof}
Algorithm $\pseudogreedy$ returns set $S$, and we aim to prove that $\E[f(S)]$ is at at least $0.545(1-O(\frac{1}{k} + O(\frac{\ln(C)}{C}))) f(\opt)$ as follows. 
We prove that $\E[f(S)]$ is at at least $(1-O(\frac{1}{k} + O(\frac{\ln(C)}{C}))) f(\opt)$
 times the solution of the following linear program $LP^r$ for some $0 \leq r  \leq 1$.  This is a constant size linear program, and therefore we can find its optimum solution for any $r$. Although there are infinite number of choices for $r$, and we cannot find the optimum solution of all of them numerically, we can discretize the interval $[0,1]$, and consider a constant number of cases for $r$. In particular, we know that $r \in [\frac{d}{160}, \frac{d+1}{160}]$ for some $1 \leq d < 160$. In each of these cases, we have an accurate estimate of $r$, and based on these estimates, we change $LP^r$ in a way that does not violate feasibility of any solution (we should round up or down $r$ in a way that maintains feasibility of constraint $(1)$). Solving $LP^r$ for all these $160$ cases, we imply that its optimum solution is always above $0.545$ with the minimum happening for $r \approx \frac{127}{160}$. 
Therefore it suffices to show that for some $r$, the solution of linear program $LP^r$ times $(1-O(\frac{1}{k} + \frac{\ln(C)}{C}))$ is a lower bound on $\frac{\E[f(S)]}{f(\opt)}$.

\vspace{0.2in}
\begin{tabular}{|ll|} \hline $LP^r$ Minimize $\beta$ & \\ \hline 
Subject to: & \\
$(1)$ $\beta \geq (1-e^{-1-\frac{4-|I|}{4} \times \frac{r}{1-r}})(1-\alpha - \sum_{i' \in I} \sum_{\ell \in B_{i'}} b'_{\ell}) +\sum_{i' \in I} \sum_{\ell \in B_{i'}} a'_{\ell} + b'_{\ell} + c'_{\ell}$ & \\
 $\forall I \subset \{1, 2, \cdots, 8\} \& |I| \leq 4 + \frac{4(1-r)}{r}$ &\\
$(2)$ $a'_{\ell} + b'_{\ell} + c'_{\ell} \geq \frac{\alpha - \sum_{\ell'=1}^{\ell} a'_{\ell'}}{128}$ & $\forall 1 \leq \ell \leq 256$\\ 
$(3)$ $\sum_{\ell=1}^{256}  b'_{\ell} \leq 1- \alpha$ & \\ 
$(4)$ $0 \leq a'_{\ell}, b'_{\ell}, c'_{\ell}$ & $\forall 1 \leq \ell \leq 256$ \\ 
$(5)$ $0 \leq \alpha \leq 1$ &  \\ \hline
\end{tabular}
 \vspace{0.2in}

We need to borrow many notations from proof of Theorem~\ref{thm:two_minus_sqrt_two_core-set}. 
We remind that $T_1$, and $S_1$ are the sets of items sent to the first machine and the set of items selected in this machine respectively, i.e.,  $S_1$ is equal to $\greedy(T_1)$. We also remind that $D = 2\sqrt{2}+1$, and $k'=Dk$.
Note that we can assume $\vert S_1\vert = Dk$, since if there are less than $Dk$ items in $T_1$, WLOG we can assume the algorithm returns some extra dummy items just for the sake of analysis. 
Consider an item $x\in \opt$. We say that $x$ {\em survives from machine $1$}, if, when we send $x$ to machine $1$ in addition to items of $T_1$, algorithm $\greedy$ would choose this item $x$ in its output of size $k'$, i.e., if $x\in \greedy(T_1 \cup \{x\})\}$.
For the sake of analysis, we partition the optimum solution into two sets as follows:
let $\optMS$ be the set of items in the optimum solution that would survive the first machine, i.e., $\optMS\defeq \{x \vert x \in \opt \cap \greedy(T_1 \cup \{x\})\}$. 
Let $\optMNS \defeq \opt \setminus \optMS$, and
$k_1\defeq\vert \optMS\vert$, and $k_2\defeq\vert \optMNS\vert$ (note that $k_1+k_2 = k$).  
We also define $\optMS'\defeq \optMS  \cap \opt^S$, and $\opt' \defeq \optMS' \cup \optMNS$ where $\opt^S$ is defined in Subsection~\ref{sec:notation}. 
Using Lemma~\ref{lem:random_machine_selected_vs_total_selected} (restated here), we know that set $\E[f(\opt')]$ is almost equal to
$f(\opt)$. Therefore $\opt'$  can be used as the benchmark instead of $\opt$. In other words, assuming the following lemma, it suffices to prove that $\frac{f((S)}{f(\opt')}$ is at least $(1-O(\frac{1}{k}))$ times the solution of $LP^r$ for some $r$.

\begin{lemma}
The expected value of set $\opt'$, $\E[f(\opt')]$, is at least $\left(1-O\left(\frac{\ln(C)}{C}\right)\right) f(\opt)$.
\end{lemma}

Algorithm $\pseudogreedy$ picks a random size $k$ subset of set $V$, and returns it as the output set $S$. We prove that $V$ does not have more than $k+O(1)$ items. We note that $V$ is equal to set $S_{k'_2, I}$ for some $k'_2$ and $I$. Therefore $\lvert V \rvert$ is equal to  $k_3\vert I \vert + k'_1 + (4 - \vert I \vert)k_3 = k'_1 + 4k_3 \leq k'_1 + k'_2  + 128 = k+128$. 
Let $\sigma$ be an arbitrary permutation on items of $V$. 
We have that $f(V) = \sum_{x \in V} \Delta(x, \sigma^x)$ where $\sigma^x$ is the set of items in $V$ that appear before $x$ in permutation $\sigma$.
We also have that $f(S) = \sum_{x \in S} \Delta(x, \sigma^x \cap S)$. 
By submodularity, we know that $\sum_{x \in S} \Delta(x,\sigma^x \cap S)$ is at least $\sum_{x \in S} \Delta(x,\sigma^x)$. 
Therefore the ratio $\frac{\E[f(S)]}{f(V)}$ is at least $\frac{\sum_{x \in V} Pr[x \in S] \Delta(x,\sigma^x)}{\sum_{x \in V} \Delta(x,\sigma^x)}$ where the expectation is taken over the uniform distribution of all size $k$ subsets of $V$. We note that $Pr[x \in S]$ for each $x \in V$ is $\frac{k}{\lvert V \rvert} \geq 1- O(\frac{1}{k})$. We conclude that $\E[f(S)]$ is at least $(1 - O(\frac{1}{k}))f(V)$. Applying Lemma~\ref{lem:random_machine_selected_vs_total_selected}, we know that 
to complete the proof of Theorem~\ref{thm:54approximate},
it is sufficient to show $\frac{f(V)}{f(\opt')}$ is lower bounded by the solution of $LP^r$ for some $0 \leq r \leq 1$.

 Since algorithm $\pseudogreedy$ enumerates all $k$ possible values of $k'_2$, in one of these trials, $k'_2$ is equal to $k_2=\vert \optMNS \vert$. From now on, we focus on this specific value of $k'_2$. We remind that  $k_3=32 \lceil \frac{k'_2}{128}\rceil$. 
 Let $I'$ be the set of indices that maximizes $S_{k_2,I}$, i.e., 
 $I' \defeq \argmax_{I \subseteq \{1, 2, \cdots, 8\} \& \vert I \vert \leq 4 + \frac{k_1}{k_3}} f(S_{k_2,I})$. Define $V'$ be the set $S_{k_2,I'}$. 
By definition, we have $f(V') \leq f(V)$. So it suffices to show that 
 $f(V')$ is at least $f(\opt')$ times the solution of $LP^r$ for some $0 \leq r \leq 1$ as follows. We define $r$ to be $\frac{4k_3}{4k_3 + k_1}$. 

Before elaborating on how the linear program $LP^r$ lower bounds the ratio $\frac{f(V)}{f(\opt')}$, we need to explain this linear program in more details. Linear program $LP^r$ has $2+3\times 256 = 770$ variables: $\alpha$, $\beta$, $\{a'_{\ell}\}_{\ell=1}^{256}$, $\{b'_{\ell}\}_{\ell=1}^{256}$, $\{c'_{\ell}\}_{\ell=1}^{256}$. The objective function of $LP^r$ is variable $\beta$. Set $B_i$ of indices is defined to be $\{32(i-1)+1, 32(i-1)+2, \cdots, 32i\}$ for any $1 \leq i \leq 8$. We note that $r$, and $\lvert I \rvert$ are not variables, and therefore constraint $(1)$ like all other constraints is a linear inequality.

At first, we remind the feasible solution of $LP^{k,k_2}$ constructed in the proof of Lemma~\ref{lem:LP_lower_bound_core-set}, and then 
construct a feasible solution for $LP^r$  with $\beta$ equal to  
$\frac{f(V')}{f(\opt')}$ as follows.
Fix a permutation $\pi$ on the items of $\opt'$
such that every item of $\optMS'$ appears before every item of $\optMNS$ in $\pi$. In other words, $\pi$ is an arbitrary permutation on items of $\optMS'$ followed by an arbitrary permutation on items of $\optMNS$. 
For any item $x$ in $\opt'$, define $\pi^x$ to be the set of items in $\opt'$
that appear prior to $x$ in permutation $\pi$.	
We set $\alpha$ to be $\frac{\sum_{x \in \optMNS} \Delta(x,\pi^x)}{f(\opt')}$. 
For any $1 \leq j \leq 8k_3$, we define set $S^j$ to be the first $j$ items of $S_1$.
Now we are ready to define variables $\{a_j, b_j, c_j\}_{j=1}^{8k_3}$ as follows.

$$
a_j:=\frac{\sum_{x \in \optMNS} \Delta(x,\pi^x \cup S^{j-1}) - \Delta(x,\pi^x \cup S^{j})}{f(\opt')}
, 
b_j:=\frac{\sum_{x \in \optMS'} \Delta(x,\pi^x \cup S^{j-1}) - \Delta(x,\pi^x \cup S^{j})}{f(\opt')}
$$
 
We let the variable $c_j$ be the marginal gain of item $y_j$ divided by $f(\opt')$ minus $a_j+b_j$ where $y_j$ is the $j$th item selected in $S_1$, i.e., $c_j := \frac{\Delta(y_j,S^{j-1})}{f(\opt')} - a_j - b_j$.

We are ready to present a feasible solution for $LP^r$. We set $\beta$ to be $\frac{f(V')}{f(\opt')}$. We keep the same value for $\alpha = \frac{\sum_{x \in \optMNS} \Delta(x,\pi^x)}{f(\opt')}$. For any $1 \leq \ell \leq 256$, we define:

\begin{eqnarray*}
a'_{\ell} &\defeq& \sum_{j=\lceil \frac{k_2}{128}\rceil (\ell-1) + 1}^{\lceil \frac{k_2}{128}\rceil \ell}
 a_j \\
b'_{\ell} &\defeq& \sum_{j=\lceil \frac{k_2}{128}\rceil (\ell-1) + 1}^{\lceil \frac{k_2}{128}\rceil \ell}
 b_j \\
c'_{\ell} &\defeq& \sum_{j=\lceil \frac{k_2}{128}\rceil (\ell-1) + 1}^{\lceil \frac{k_2}{128}\rceil \ell}
 c_j 
\end{eqnarray*}

It suffices to prove feasibility of this solution for linear program $LP^r$. 
Using Lemma~\ref{lem:LP_lower_bound_core-set}, we know   $a, b,$ and $c$ variables with $\alpha$, and some choice of $\beta$ form a feasible solution for $LP^{k,k_2}$. Therefore, we can use constraints of $LP^{k,k_2}$ to prove feasibility of this solution for $LP^r$. Using constraint $(4)$ of $LP^{k,k_2}$, variable $\alpha$ is in range $[0,1]$ which proves constraint $(5)$ of $LP^r$. We also know that $a, b,$ and $c$ variables are non-negative which implies that by definition $a', b',$ and $c'$ variables are non-negative (constraint $(4)$ of $LP^r$). Constraint $(3)$ of $LP^r$ holds using constraint $(3)$ of $LP^{k,k_2}$ and by definition of $b'$. We can also prove constraint $(2)$ of $LP^r$ using the same constraint in $LP^{k,k_2}$ as follows. For each $\lceil \frac{k_2}{128}\rceil (\ell-1) + 1 \leq j  \leq \lceil \frac{k_2}{128}\rceil \ell$, we know that $a_j+b_j+c_j$ is at least $\frac{\alpha - \sum_{j'=1}^{j-1} a_{j'}}{k_2}$. Summing up this inequality for all $\lceil \frac{k_2}{128}\rceil (\ell-1) + 1 \leq j  \leq \lceil \frac{k_2}{128}\rceil \ell$, and by definition of $a', b',$ and $c'$, we have that: 

\begin{eqnarray*}
a'_{\ell} + b'_{\ell} + c'_{\ell} &\geq & \sum_{j=\lceil \frac{k_2}{128}\rceil (\ell-1) + 1}^{\lceil \frac{k_2}{128}\rceil \ell}
\frac{\alpha-\sum_{j'=1}^{j-1} a_{j'}}{k_2}
\geq 
\sum_{j=\lceil \frac{k_2}{128}\rceil (\ell-1) + 1}^{\lceil \frac{k_2}{128}\rceil \ell}
\frac{\alpha-\sum_{j'=1}^{\lceil \frac{k_2}{128}\rceil \ell} a_{j'}}{k_2} \\
&=&
\sum_{j=\lceil \frac{k_2}{128}\rceil (\ell-1) + 1}^{\lceil \frac{k_2}{128}\rceil \ell}
\frac{\alpha-\sum_{\ell'=1}^{\ell} a'_{\ell'}}{k_2} 
=
\lceil \frac{k_2}{128}\rceil \frac{\alpha-\sum_{\ell'=1}^{\ell} a'_{\ell'}}{k_2} 
\geq
\frac{\alpha-\sum_{\ell'=1}^{\ell} a'_{\ell'}}{128}
\end{eqnarray*}

We note that the last inequality holds assuming the numerator is non-negative. 
In case, the numerator is negative, the constraint $(2)$ holds using non-negativity of $a', b',$ and $c'$ variables. So we just need to prove constraint $(1)$ (which is the most important constraint) of $LP^r$. We remind that $\beta$ is defined to be  
$\frac{f(V')}{f(\opt')} = \max_{I \subseteq \{1, 2, \cdots, 8\} \& \vert I \vert \leq 4 + \frac{k_1}{k_3}} \frac{f(S_{k_2,I})}{f(\opt')}$. 
So for every $I \subseteq \{1, 2, \cdots, 8\}$ with size at most $4 + \frac{k_1}{k_3}$,  it suffices to prove that: 

$$
\frac{f(S_{k_2,I})}{f(\opt')} \geq 
\left(1-e^{-1-\frac{4- \vert I \vert}{4} \times \frac{r}{1-r}}\right)\left(1-\alpha - \sum_{i' \in I} \sum_{\ell \in B_{i'}} b'_{\ell}\right) +\sum_{i' \in I} \sum_{\ell \in B_{i'}} \left(a'_{\ell} + b'_{\ell} + c'_{\ell}\right)
$$

We note that  $S_{k_2,I}$ consists of two sets of items: 
\begin{itemize}
\item
Set $S(I)$ with $\vert I \vert k_3$ items corresponding to sets $\{B_{i'}\}_{i' \in I}$ which are added in the first phase.
In other words, $S(I)$ is $\cup_{i' \in I} B_{i'} = \cup_{i' \in I} \cup_{j = (i-1)k_3+1}^{ik_3} \{y_j\}$ where $y_j$ is the $j$th item of $S_1$.  
\item $k_1 + (4 - \vert I \vert)k_3$ items added greedily in the second phase.
\end{itemize}

Similar to proof of Claim~\ref{claim:LP_feasibility}, we define $S'(I)$ to be $\optMS' \cup S(I)$. To be consistent in notation, we define $J$ to be the indices of items in $S(I)$, i.e. $J \defeq \{j | y_j \in S(I)\}$. Using the same argument in proof of Claim~\ref{claim:LP_feasibility}, we know that:

\begin{eqnarray}\label{eq:S'I}
f(S'(I)) &\geq& (1-\alpha)f(\opt') + \sum_{j \in J} (a_j+c_j)f(\opt') \nonumber \\
&=& (1-\alpha)f(\opt') + \sum_{i' \in I} \sum_{\ell \in B_{i'}} (a'_{\ell}+c'_{\ell})f(\opt')
\end{eqnarray}

This means that there are $\vert \optMS' \vert \leq k_1$ items that can be added to $S(I)$ in the second phase, and increase its value to $f(S'(I))$. Since we are adding $k_1 + (4 - \vert I \vert)k_3$ items greedily in the second phase, our final value $f(S_{k_2,I})$ is at least:

\begin{eqnarray}\label{eq:Sk2I}
f(S_{k_2,I}) &\geq& f(S(I)) + (1-e^{-\frac{k_1 + (4 - \vert I \vert)k_3}{k_1}}) (f(S'(I)) - f(S(I))) \nonumber \\
&=& f(S(I)) + (1-e^{-1 - \frac{(4 - \vert I \vert)}{4} \times \frac{r}{1-r}}) (f(S'(I)) - f(S(I))) 
\end{eqnarray}

where the equality holds by definition of $r$. We can also lower bound $f(S(I))$ as follows.  

\begin{eqnarray}\label{eq:SI}
f(S(I)) &=& \sum_{j \in S(I)} \Delta(j, S^{j-1} \cap S(I)) \geq
 \sum_{j \in S(I)} \Delta(j, S^{j-1})  \nonumber \\
 &=& f(\opt')\sum_{j \in S(I)} a_j+b_j+c_j = f(\opt')\sum_{i' \in I} \sum_{\ell \in B_{i'}} a'_{\ell} + b'_{\ell} + c'_{\ell} 
 \end{eqnarray}

where the inequality holds by submodularity of $f$, and equalities hold by definition. By combining inequalities~\ref{eq:S'I}, \ref{eq:Sk2I}, and \ref{eq:SI}, we conclude that:

$$
\frac{f(S_{k_2,I})}{f(\opt')} \geq 
\sum_{i' \in I} \sum_{\ell \in B_{i'}} \left(a'_{\ell} + b'_{\ell} + c'_{\ell}\right)
 + \left(1-e^{-1 - \frac{(4 - \vert I \vert)}{4} \times \frac{r}{1-r}}\right) \left(1-\alpha - \sum_{i' \in I} \sum_{\ell \in B_{i'}}  b'_{\ell} \right)
$$

This proves the feasibility of solution $(\beta, \alpha, \{a'_{\ell}, b'_{\ell}, c'_{\ell}\}_{\ell=1}^{256})$, and consequently completes the proof of Theorem~\ref{thm:54approximate}.
\end{proof}

\section{Small-size Core-sets for Submodular Maximization}\label{sec:small} 
\subsection{Hardness Results for Small-size Core-sets}\label{subsec:hardness}
We start by presenting  the hardness results for non-randomized core-sets. 

\begin{theorem}\label{thm:hardness_submod}
For any $1 \leq k' \leq k$, the output of any algorithm is at most an $O(\frac{k'}{k})$-approximate composable non-randomized core-set for the submodular maximization problem. 
\end{theorem}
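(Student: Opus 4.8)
The plan is to produce a single coverage instance together with an essentially trivial adversarial partition on which every algorithm that returns at most $k'$ items per part loses a factor $\Omega(k/k')$ against $f_k(T_1\cup\cdots\cup T_m)$. Since a non-randomized composable core-set, by definition, must satisfy $f_k(\alg(T_1)\cup\cdots\cup\alg(T_m))\ge\alpha\, f_k(T_1\cup\cdots\cup T_m)$ for \emph{every} partition, a single bad partition already forces $\alpha=O(k'/k)$.

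First I would fix the instance: take the universe $\UU=\{1,\dots,k\}$, let $\NN=\{A_1,\dots,A_k\}$ with $A_i=\{i\}$, and let $f(S)=\lvert\bigcup_{A\in S}A\rvert$, which is a coverage function and hence monotone submodular. The two facts I need are immediate: $f_k(\NN)=k$ (take $S=\NN$), while for any $S\subseteq\NN$ with $\lvert S\rvert\le k'$ we have $f(S)=\lvert S\rvert\le k'$, and therefore $f_k(S')=f(S')\le k'$ for every $S'$ with $\lvert S'\rvert\le k'\le k$. Next I would choose the partition. Put everything on one machine, $T_1=\NN$, so that (if one insists on $m\ge 2$ nonempty parts) each remaining machine $i\ge 2$ receives one irrelevant singleton $\{0\}$ with $0\notin\UU$. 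Then $\lvert\alg(T_1)\rvert\le k'$ gives $f_k\!\left(\alg(T_1)\cup\bigcup_{i\ge 2}\alg(T_i)\right)\le f(\alg(T_1))+(m-1)\le k'+m-1$, whereas $f_k(T_1\cup\cdots\cup T_m)=f_k(\NN)=k$. Hence the achieved ratio is at most $(k'+m-1)/k$, which is $O(k'/k)$ for $m=O(1)$ (and exactly $k'/k$ in the degenerate one-part partition $T_1=\NN$). Thus no algorithm is a $c$-approximate non-randomized composable core-set for any $c$ asymptotically exceeding $k'/k$, proving the theorem.

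The argument has no genuine obstacle; the only care needed is bookkeeping. First, one must read $f_k$ correctly: because $k'\le k$, an output of at most $k'$ sets has $f_k$ equal to its own $f$-value, so shrinking the cardinality budget from $k$ to $k'$ on $\NN$ really does cut the covered mass from $k$ to at most $k'$. Second, one should state the partition so it conforms to whatever convention the reader has for "partition into $m$ parts"; the dummy-singleton padding handles any objection to empty parts at an additive cost of $m-1$, negligible once $m$ is constant. If a more overtly "distributed" witness were wanted, one could instead place the $k$ disjoint OPT singletons $k'$-per-machine across $m=k/k'$ machines and pad each machine with many decoy singletons whose underlying ground elements are reused across machines, and argue via the fact that the coverage oracle restricted to a machine of distinct singletons reveals only cardinality; but the one-machine construction above already suffices and is cleanest, so that is what I would present.
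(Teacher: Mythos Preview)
Your proposal is correct and follows essentially the same approach as the paper: place all $k$ ``valuable'' items on a single machine so that any size-$k'$ output can capture value at most $k'$, while the optimum is $k$. The paper's version is marginally cleaner in presentation---it takes $f(S)=|S\cap A|$ directly (with $A$ the $k$ valuable items and a set $B$ of zero-value filler items sent to the other parts), which avoids your dummy-singleton bookkeeping and the $(m-1)$ slack term entirely---but the underlying construction and argument are the same.
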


\begin{proof}
The hardness instance consists of set $A$ of $k$, and set $B$ of the remaining items. We define the submodular function $f: 2^{A \cup B} \rightarrow \RR$ as $f(S) = |S \cap A|$ for any subset $S \subseteq A \cup B$. 
Suppose the (non-randomized) partitioning, puts all items in $A$ in one part, and distributes the rest of the items in other parts arbitrarily. Since at most $k'$ items in each part can be selected as part of the core-set, no matter what sets are chosen in each part, any size $k$ subset of the union of output sets cannot have $f$ value more than $k'$. On other hand, set $A$ is a size $k$ subset with $f$ value $k$  which implies the $O(\frac{k'}{k})$ hardness gap.
\end{proof}

Now we construct the following family of instances to achieve hardness results for randomized core-sets in the submodular maximization problem.

\begin{definition}\label{def:hard_instance}
We define instance $I^{k,k'}$ of the submodular maximization problem as follows. 
Define $\Gamma$ to be $\lfloor \sqrt{k'k} \rfloor$.
For each $1 \leq i \leq k-\Gamma$, we add an item that represents the set $\{i\}$. 
For each $k-\Gamma < i \leq k$, we add $\lfloor \frac{k}{k'} \rfloor$ identical items all representing the same set $\{i\}$. The value of a subset $S$ of items, $f(S)$, is equal to the cardinality of the union of all sets the items in $S$ represent. This is a coverage valuation function and subsequently monotone submodular.    
\end{definition}

\begin{theorem}\label{thm:hardness_submod-random}
For any $1 \leq k' \leq k$, with $m = \Theta(\frac{k}{k'})$ machines, the output of any algorithm is at most an $O(\sqrt{\frac{k'}{k}})$-approximate randomized composable core-set for the submodular maximization problem. 
\end{theorem}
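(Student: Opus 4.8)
The plan is to run, on the fixed instance $I^{k,k'}$ of Definition~\ref{def:hard_instance}, the random clustering with multiplicity $C=1$ (a uniform partition) into $m=\Theta(k/k')$ machines, and to argue that every algorithm $\alg$ — working in the standard value oracle model, so that on machine $j$ it may query $f(S)$ only for $S\subseteq T_j$ — produces a union $\cup_{j=1}^m\alg(T_j)$ whose best size-$k$ subset has expected value only $O(\sqrt{kk'})$. Since $f_k(T_1\cup\cdots\cup T_m)=k$ deterministically (pick $u_1,\dots,u_{k-\Gamma}$ together with one copy of each of the $\Gamma$ ``heavy'' elements), this forces the approximation factor of any randomized composable core-set to be $O(\sqrt{kk'})/k=O(\sqrt{k'/k})$, which is the claim.

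The engine of the proof is an exchangeability (indistinguishability) statement, in the same spirit as the one used in Theorem~\ref{thm:OneMinusOneOverEHardness}. Call $u_1,\dots,u_{k-\Gamma}$ the \emph{light} items and the rest \emph{heavy}, and for a machine $j$ and a heavy element $e$ let $n_{j,e}$ be the number of copies of $e$ that land on machine $j$. On machine $j$ the value oracle restricted to $2^{T_j}$ lets $\alg$ group heavy items by element and detect every $e$ with $n_{j,e}\ge 2$, but it returns $f(S)=|S|$ for every subset $S$ made up of light items and \emph{lonely} heavy items (those with $n_{j,e}=1$). Hence, conditioned on the ``profile'' of the partition (for each machine the count of light items and the multiset $(n_{j,e})_e$), the identities of the items inside the pool $P_j:=\{\text{light items on }j\}\cup\{\text{lonely heavy items on }j\}$ are a uniformly random consistent labelling, and $\alg$'s output restricted to $P_j$ is, in distribution, a uniform subset of $P_j$ of some size $\le k'$. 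Therefore, for any light $u_i$ sent to machine $j$,
\[
\Pr\bigl[u_i\in\alg(T_j)\bigr]\;\le\;\E\!\left[\frac{k'}{|P_j|}\right]\;\le\;\E\!\left[\frac{k'}{b_j}\right],\qquad b_j:=\bigl|\{e:n_{j,e}=1\}\bigr|.
\]

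Next I would lower bound $b_j$. Since the $\lfloor k/k'\rfloor$ copies of each heavy element are placed independently and uniformly among the $m=\Theta(k/k')$ machines, $\Pr[n_{j,e}=1]=\Theta(1)$, and the indicators $\{\One[n_{j,e}=1]\}_e$ are independent over the $\Gamma$ heavy elements; thus $\E[b_j]=\Theta(\Gamma)$ and, by the Chernoff bound (Lemma~\ref{lem:ChernoffBound}), $b_j=\Omega(\Gamma)$ with probability $1-e^{-\Omega(\Gamma)}$, which a union bound over the $m$ machines makes simultaneous (using $\Gamma=\lfloor\sqrt{kk'}\rfloor\gg\log m$). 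Feeding this into the displayed inequality gives $\Pr[u_i\in\cup_j\alg(T_j)]=O(k'/\Gamma)$ for every light item, so the expected number $D_u$ of distinct light elements covered by $\cup_j\alg(T_j)$ is at most $(k-\Gamma)\cdot O(k'/\Gamma)=O\!\bigl(kk'/\sqrt{kk'}\bigr)=O(\sqrt{kk'})$. Finally, any size-$k$ subset of $\cup_j\alg(T_j)$ covers at most $D_u$ light elements and at most $\Gamma$ heavy elements, so $f_k\bigl(\cup_j\alg(T_j)\bigr)\le D_u+\Gamma$ and hence $\E[f_k(\cup_j\alg(T_j))]=O(\Gamma)=O(\sqrt{kk'})$, finishing the argument.

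The main obstacle is making the first step rigorous: one has to pin down exactly what a value-oracle algorithm learns on its machine (heavy-element groupings and multiplicities, but nothing that separates a light item from a lonely heavy item) and then condition on precisely the right event so that a light item's position within $P_j$ is uniformly random while the conditioning still fixes $\alg$'s behaviour, i.e. a careful symmetrization of an otherwise arbitrary deterministic query strategy. Everything downstream — the Chernoff estimate for $b_j$, the union bound, and the counting bound $f_k\le D_u+\Gamma$ — is routine.
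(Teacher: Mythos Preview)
Your proposal is correct and follows essentially the same approach as the paper: use the instance $I^{k,k'}$, argue that on each machine a light item is indistinguishable (via the value oracle) from a lonely heavy item, show each machine has $\Omega(\Gamma)$ lonely heavy items, and conclude each light item is selected with probability $O(k'/\Gamma)$, yielding a total covered value of $O(\Gamma)=O(\sqrt{kk'})$ versus the optimum $k$. Your write-up is in fact more careful than the paper's in two places---you make the symmetrization/exchangeability argument explicit in the oracle model, and you invoke Chernoff to turn the $\Omega(\Gamma)$-in-expectation count of lonely heavy items into a high-probability bound---whereas the paper handles both points in one informal sentence.
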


\begin{proof}
We say item $\{i\}$ is {\em alone} in machine $\ell$ if there is no other item with the same set $\{i\}$ in machine $\ell$ for any $1 \leq i \leq k$, and $1 \leq \ell \leq m$. 
Each item is sent to one of the $m$ machines uniformly at random. We also know that  there are $\Theta(m)$ copies of set $\{i\}$ for each $k - \Gamma < i \leq k$. 
Therefore $Pr[\mbox{Some item with set \{i\} is sent to machine $\ell$, and is alone in machine $\ell$}]$ is $\Omega(1)$ for any  pair of $k-\Gamma < i \leq k$ and $1 \leq \ell \leq m$. 
So machine $\ell$ receives in expectation $\Omega(\Gamma)$ alone items for any $1 \leq \ell \leq m$. 
Since we can permute the elements $\{1, 2, \cdots, k\}$ arbitrarily, there is no difference between alone items with $i \leq k-\Gamma $, and the ones with $i > k-\Gamma$.  We also know that each machine can output at most $k'$ items. So the probability that an arbitrary set $\{i\}$ for some $1 \leq i \leq k-\Gamma$ is selected in its machine is at most $O(\frac{k'}{\Gamma})$. We conclude that the union of all selected sets have size at most $\Gamma + O(\frac{k'}{\Gamma})(k-\Gamma)$ which is $O(\sqrt{\frac{k'}{k}} k)$ by definition of $\Gamma$. It is easy to see that the optimum solution has value $k$ as there exist $k$ disjoint  sets in this instance. So the approximation factor of any algorithm for instance $I^{k,k'}$ is $O(\sqrt{\frac{k'}{k}})$. 
\end{proof}

\subsection{$\Omega(\sqrt{\frac{k'}{k}})$-approximate Core-sets for Monotone Submodular Maximization}\label{subset:small_size_submod}
In Theorem~\ref{thm:hardness_submod-random}, we proved that it is not possible to achieve better than $
O(\sqrt{\frac{k'}{k}})$-approximate randomized composable core-sets for the submodular maximization problem. Here we show that this bound is tight. 
We prove this by applying a randomized algorithm (which is different from algorithm $\greedy$):
\begin{enumerate}
\item Let set $T_{\ell}$ be the set of items sent to machine $1 \leq \ell \leq m$ (each item is sent to one of the machines uniformly at random). 
First run algorithm $\greedy$ on set $T_{\ell}$ to select set $S_{\ell}$ of $k$ items. Note that $S_{\ell}$ has $k$ items, and not $k'$. 

\item  Let $\tau_{\ell}$ to be  $\frac{f(S_{\ell})}{\sqrt{k'k}}$. Now, construct set $S'_{\ell} \subseteq S_{\ell}$ as follows. 
\begin{itemize}
\item Start with $S'_{\ell} = \emptyset$, and while there exists an $x \in S_{\ell}$ with $\Delta(x, S'_{\ell}) \geq \tau_{\ell}$ insert $x$ to $S'_{\ell}$. 
\end{itemize}
\item With probability $1/2$ output a random size $k'$ subset of $S_{\ell}$, and with probability $1/2$, return a random size $k'$ subset of $S'_{\ell}$. 
\end{enumerate}

\begin{theorem}
For any $1 \leq k' \leq k$, and $m \geq \frac{k}{k'}$ machines, the union of output sets by the above algorithm form an $\Omega(\sqrt{\frac{k'}{k}})$-approximate randomized core-set for the submodular maximization problem. 
\end{theorem}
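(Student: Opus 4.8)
The plan is to bound $\E[f_k(\bigcup_{\ell=1}^m A_\ell)]$ from below by $\Omega(\sqrt{k'/k})\,f(\opt)$, where $A_\ell$ is the (random) output on machine $\ell$ and $\opt$ is a fixed optimum $k$-set; note $\bigcup_\ell T_\ell = \NN$, so the benchmark $f_k(\bigcup_\ell T_\ell)$ is just $f(\opt)$. First I would record three ingredients. (a) Since every item inserted into $S'_\ell$ raises $f$ by at least $\tau_\ell = f(S_\ell)/\sqrt{k'k}$ while $f(S'_\ell)\le f(S_\ell)$, we get $|S'_\ell|\le\sqrt{k'k}$ and $f(S'_\ell)\ge |S'_\ell|\tau_\ell$; moreover, by the greedy construction of $S_\ell$, any $x\in T_\ell$ that is not among the first $|S'_\ell|$ greedy picks of $S_\ell$ has $\Delta(x,S'_\ell)<\tau_\ell$. (b) The standard sampling fact: for monotone submodular $f$ and a uniformly random $t$-subset $R$ of a set $X$ with $|X|\ge t$, $\E[f(R)]\ge (t/|X|)\,f(X)$ (average the non-increasing expected marginals along a random permutation). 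Combining (a) and (b): Mode A gives $\E[f(A_\ell)\mid S_\ell]\ge (k'/k)\,f(S_\ell)$, while Mode B gives $\E[f(A_\ell)\mid S_\ell]\ge \sqrt{k'/k}\,f(S_\ell)$ when $|S'_\ell|\ge k'$, and $\E[f(A_\ell)\mid S_\ell]=f(S'_\ell)$ (the whole core, with no subsampling loss) when $|S'_\ell|<k'$. (c) From Lemmas~\ref{lem:linear_in_Deltas}--\ref{lem:randomness} (as in the proof of Theorem~\ref{theorem:1/3_composable_coreset}), $\E[f(\opt^S)]\ge f(\opt)-2\,\E[\max_\ell f(S_\ell)]$ for $\opt^S=\opt\cap\bigcup_\ell S_\ell$, together with $f(\opt^S)\ge\sum_{x\in\opt^S}\Delta(x,\opt^x)$ for any fixed ordering of $\opt$.

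The main step is a dichotomy on how much of $\opt$'s value sits in the union of the high-density cores $\bigcup_\ell S'_\ell$. In the concentrated case $f(\bigcup_\ell S'_\ell)\ge c_1 f(\opt)$, I would use Mode B globally: each item of $S'_\ell$ lands in $A_\ell$ with probability $\ge\tfrac12\min(1,k'/|S'_\ell|)\ge\tfrac12\sqrt{k'/k}$, so $\bigcup_\ell A_\ell$ contains a $\tfrac12\sqrt{k'/k}$-rate random subsample of $\bigcup_\ell S'_\ell$, of expected $f$-value $\ge \tfrac{c_1}{2}\sqrt{k'/k}\,f(\opt)$; since $|\bigcup_\ell S'_\ell|\le m\sqrt{k'k}$, that subsample has $O(k)$ items in expectation when $m=\Theta(k/k')$ (and for much larger $m$ one instead notes that $\bigcup_\ell A_\ell$ already covers essentially all of $\NN$), so trimming to $k$ items costs only a constant factor. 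In the spread case $f(\bigcup_\ell S'_\ell)<c_1 f(\opt)$, ingredient (a) gives that every $x\in\opt$ with $x\notin S'_{\ell(x)}$ has $\Delta(x,\bigcup_\ell S'_\ell)<\tau_{\ell(x)}=f(S_{\ell(x)})/\sqrt{k'k}$, so telescoping over $\opt$ yields $f(\opt)\le f(\bigcup_\ell S'_\ell)+\tfrac1{\sqrt{k'k}}\sum_{x\in\opt,\ x\notin S'_{\ell(x)}} f(S_{\ell(x)})$; combined with (c) to control $\max_\ell f(S_\ell)$ and with a concentration bound $\max_\ell|\opt\cap T_\ell|=O(k/m+\log m)$ for the random partition, this forces the total $\sum_\ell f(S_\ell)$ to be comparable to $(k/k')f(\opt)$, i.e. the relevant value is redundantly visible on many machines. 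Then Mode A on the $\ge k/k'$ machines keeps each item of $\bigcup_\ell S_\ell$ with probability $\ge k'/k$ independently across machines, so each such ``unit'' of value survives somewhere in the union with constant probability, and a $k$-subset of $\bigcup_\ell A_\ell$ recovers $\Omega(\sqrt{k'/k})f(\opt)$.

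I expect the spread case to be the main obstacle. The delicate point is converting ``the $\opt$-value outside the high-density cores is thin'' into a quantitative statement that Mode A's per-item survival probability of only $k'/k$ is nevertheless enough: this needs that the value is spread across enough distinct machines that the union over the $\ge k/k'$ of them loses at most a constant factor, and simultaneously that $\bigcup_\ell A_\ell$ can be trimmed down to $k$ items while retaining a $\sqrt{k'/k}$-fraction of $f(\opt)$. Fixing $c_1$ so the two cases are exhaustive and both close, and pushing the $O(k/m+\log m)$-type partition-concentration estimates through the accounting, is the technical core; the remainder is routine use of submodularity, the sampling fact in (b), and Lemmas~\ref{lem:linear_in_Deltas}--\ref{lem:randomness}.
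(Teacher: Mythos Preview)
Your overall structure (a dichotomy that routes to one of the two output modes) is the same as the paper's, but you split on the wrong quantity and this creates a real gap in your ``concentrated'' case. The paper splits on whether $\frac{1}{m}\sum_\ell f(S_\ell)$ is $\Omega(\sqrt{k'/k})\,f(\opt)$. If it is, the paper picks $m'=\lfloor k/k'\rfloor$ \emph{random} machines and shows inductively that the union $U_{m'}$ of their Mode-A outputs satisfies $\E[f(U_{m'})]\ge\Omega(\frac{1}{m}\sum_\ell f(S_\ell))$; crucially $|U_{m'}|\le m'k'\le k$, so this is already an $f_k$ bound. If it is not, the paper intersects $\cup_\ell S'_\ell$ with a capped set $\opt''\subseteq\opt$ (at most $\lfloor 2k/m\rfloor$ optimum items per machine, so $|\opt''|\le k$), uses the threshold $\tau_\ell$ together with an adapted Lemma~\ref{lem:randomness}-style bound to get $f(\opt''\cap\cup_\ell S'_\ell)=\Omega(f(\opt))$, and then invokes Mode B's per-item survival rate $\Omega(\sqrt{k'/k})$ on this $\le k$-item set.

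Your concentrated case only bounds $\E[f(\cup_\ell A_\ell)]$, not $\E[f_k(\cup_\ell A_\ell)]$. Your trimming claim (``the subsample has $O(k)$ items when $m=\Theta(k/k')$'') is fine for that narrow range, but the theorem requires all $m\ge k/k'$, and your parenthetical fallback (``for much larger $m$, $\cup_\ell A_\ell$ covers essentially all of $\NN$'') is false: the outputs have only $mk'$ items and need not cover $\NN$, nor would that help for $f_k$. The idea you are missing is exactly what the paper does in both branches: carry a witness set that is \emph{by construction} at most $k$ items (either $U_{m'}$ or $\opt''\cap\cup_\ell S'_\ell$), rather than trying to trim the full output union after the fact. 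Separately, your spread case reaches the correct intermediate inequality $\frac{1}{m}\sum_\ell f(S_\ell)\ge\Omega(\sqrt{k'/k})\,f(\opt)$, but the closing argument is wrong as stated: each item sits on a single machine, so under Mode A it survives with probability $k'/(2k)$, not ``constant''; the right finish is precisely the $m'$-random-machines computation above. Finally, your use of ingredient (c) to ``control $\max_\ell f(S_\ell)$'' is backwards --- Lemmas~\ref{lem:linear_in_Deltas}--\ref{lem:randomness} lower-bound $f(\opt^S)$ in terms of $\max_\ell f(S_\ell)$, not the other way around, and they are not needed in that branch.
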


\begin{proof}
For any $1 \leq \ell \leq m$, let $S''_{\ell}$ be the output set (with size $k'$) of machine $\ell$. 
First, we prove that $\E[f_k(\cup_{\ell=1}^m S''_{\ell})]$is $\Omega\left(\frac{\sum_{\ell=1}^m f(S_{\ell})}{m}\right)$. 
Let $m' =\lfloor \frac{k}{k'} \rfloor \le m$. 
We choose a set $M'=\{\ell_1, \ell_2, \cdots, \ell_{m'}\}$ of
$m'$ uniformly at random machines. 
Let $U_i$ be the union of output sets of machines $\ell_1, \ell_2, \cdots, \ell_i$ for any $0 \leq i \leq m'$, i.e., $U_i \defeq \cup_{j=1}^i S''_{\ell_j}$. We note that $U_{m'}$ has size $k'm' \leq k' \frac{k}{k'} = k$. Therefore $f(U_{m'})$ is at most $f_k(\cup_{\ell=1}^m S''_{\ell})$.

Following, we show that $\E[f(U_{m'})] \geq \frac{\sum_{\ell=1}^m f(S_{\ell})}{m}$.  For any $i$, we prove that $\E[f(U_i) - f(U_{i-1})]$ is at least $\frac{(f(S_{\ell_i}) - f(U_{i-1}))k'}{2k}$ where $1 \leq i \leq m'$. We know that adding the whole set $S_{\ell_i}$ to $U_{i-1}$ has marginal value at least $f(S_{\ell_i})-f(U_{i-1})$. With probability $1/2$, output set $S''_{\ell_i}$ consists of $k'$ random items out of the $k$ items of $S_{\ell_i}$. By submodularity, this implies that in expectation, adding set $S''_{\ell_i}$ has marginal value at least $\frac{k'(f(S_{\ell_i})-f(U_{i-1}))}{2k}$ to set $U_{i-1}$. Summing up all these inequalities, and using monotonicity of $f$, we have that: 

$$
\E[f(U_{m'})] \geq \sum_{i=1}^{m'} \E [f(U_i) - f(U_{i-1})] \geq \sum_{i=1}^{m'} \frac{k' \E[f(S_{\ell_i})-f(U_{i-1})]}{2k} \geq
\sum_{i=1}^{m'} \frac{k'(f(S_{\ell_i})-\E[f(U_{m'})])}{2k}
$$

We conclude that $\E[f(U_{m'})]$ is $O\left(\frac{\sum_{i=1}^{m'} f(S_{\ell_i})}{m'}\right)$. 
We also note that  $\E[\frac{\sum_{i=1}^{m'} f(S_{\ell_i})}{m'}]$ is equal to $\frac{\sum_{\ell=1}^m f(S_{\ell})}{m}$ because the expected value of average of some numbers chosen randomly from a set is equal to the average of all numbers in the set. Therefore $\E[f(U_{m'})]$ is $O\left(\frac{\sum_{\ell=1}^{m} f(S_{\ell})}{m}\right)$. 
If $\frac{\sum_{\ell=1}^{m} f(S_{\ell})}{m}$ is 
$\Omega\left(\sqrt{\frac{k'}{k}}f(\opt)\right)$, 
we have $\E[f(U_{m'})] \geq \Omega\left(\sqrt{\frac{k'}{k}}f(\opt)\right)$ which concludes the proof of this theorem. 
So we focus on the case $\frac{\sum_{\ell=1}^{m} f(S_{\ell})}{m} = o\left(\sqrt{\frac{k'}{k}}f(\opt)\right)$, and finish the proof as follows. 

In each machine $i$ for the sake of the analysis,  we select $\lfloor \frac{2k}{m} \rfloor$ items of $\opt \cap T_i$ randomly (if there are not these many items in $\opt \cap T_i$ take all of them) where $1 \leq i \leq m$. Define $\opt'' \subseteq \opt$  be the union of these selected items. 
Since $\{T_i\}_{i=1}^m$ is a random partitioning of all items, we expect $\frac{k}{m}$ items of $\opt$ to be present in set $T_i$ for each $i$. 
For each item $x \in \opt$, with $\Omega(1)$ probability, set $T_{i^x}$ does not have more than  $\lfloor \frac{2k}{m} \rfloor$ items of $\opt$ where $i^x$ is the index of the machine that receives $x$, i.e., $x \in T_{i^x}$. Therefore, any item $x \in \opt$ is in set $\opt''$ with  $\Omega(1)$ probability. We conclude that  $\E[f(\opt'')]$ is $\Omega(f(\opt))$. Using the proof of Lemma~\ref{lem:linear_in_Deltas}, we have that:

$$
f(\opt'' \cap (\cup_{\ell=1}^m S'_{\ell})) \geq 
f(\opt'') -  \sum_{\ell=1}^m \sum_{x \in \opt'' \cap T_{\ell} \setminus S'_{\ell}} \Delta(x,\opt^x) 
$$

Similar to the proofs of Lemmas~\ref{lem:upperbound_not_selected}, and \ref{lem:randomness}, we also have that:
 
 \begin{eqnarray*}
 \sum_{\ell=1}^m \sum_{x \in \opt'' \cap T_{\ell} \setminus S'_{\ell}} \Delta(x,\opt^x) 
 &\leq&
  \sum_{\ell=1}^m | \opt'' \cap T_{\ell} \setminus S'_{\ell}|\tau_{\ell} 
 + \frac{\sum_{\ell=1}^m f(S'_{\ell})}{m} \\
 &\leq&
\frac{\sum_{\ell=1}^m 2k\tau_{\ell} 
 + \sum_{\ell=1}^m f(S'_{\ell})}{m} 
 \leq
\frac{\sqrt{\frac{k}{k'}} \sum_{\ell=1}^m  f(S_{\ell}) + \sum_{\ell=1}^m  f(S'_{\ell})}{m} \\
&\leq&
\frac{(\sqrt{\frac{k}{k'}}+1)\sum_{\ell=1}^m  f(S_{\ell})}{m} 
 \end{eqnarray*}

where the last inequality is implied by the fact that $S'_{\ell} \subseteq S_{\ell}$, and consequently $f(S'_{\ell}) \leq f(S_{\ell})$. We conclude that: 

 \begin{eqnarray*}
 f(\opt'' \cap (\cup_{\ell=1}^m S'_{\ell})) &\geq& 
f(\opt'') - (\sqrt{\frac{k}{k'}}+1)\sum_{\ell=1}^m  f(S_{\ell})/m \\
&=& \Omega(f(\opt)) - \sqrt{\frac{k}{k'}} \times o\left(\sqrt{\frac{k'}{k}}f(\opt)\right) 
= \Omega(f(\opt))
\end{eqnarray*}

where the equations  are implied by the facts that  $\sum_{\ell=1}^m  \frac{f(S_{\ell})}{m}$ is $o(\sqrt{\frac{k'}{k}}f(\opt))$, and $\E[f(\opt'')]$ is $\Omega(f(\opt))$. We note that items of set $\opt'' \cap (\cup_{\ell=1}^m S'_{\ell})$ are not necessarily present in the  output sets $\{S''_{\ell}\}_{\ell=1}^m$. To finish the proof of this theorem, it suffices to prove that each item in $\opt'' \cap S'_{\ell}$ is present in the output set $S''_{\ell}$ with probability $\Omega\left(\sqrt{\frac{k'}{k}}\right)$ for any $1 \leq  \ell \leq m$. 
Each item in $S'_{\ell}$ had marginal value at least $\tau_{\ell}$ to set $S'_{\ell}$ when it was selected to be added to $S'_{\ell}$. 
So there are at most $\frac{f(S'_{\ell})}{\tau_{\ell}} \leq \frac{f(S_{\ell})}{\tau_{\ell}} = \sqrt{k'k}$ items in $S'_{\ell}$. Since with probability $1/2$ the algorithm in machine $\ell$ outputs $k'$ random items of $S'_{\ell}$, each item in $S'_{\ell}$  is in the output set with probability at least $\frac{k'}{2\sqrt{k'k}} = \Omega\left(\sqrt{\frac{k'}{k}}\right)$. Therefore the intersection of $\opt''$ and the union of output sets, $\opt'' \cap (\cup_{\ell=1}^m S''_{\ell})$, has expected value $\Omega\left(\sqrt{\frac{k'}{k}} f(\opt'' \cap (\cup_{\ell=1}^m S'_{\ell}))\right) = \Omega\left(\sqrt{\frac{k'}{k}} f(\opt)\right)$. We note that $\vert \opt'' \cap (\cup_{\ell=1}^m S''_{\ell}) \vert$ is at most $\vert \opt'' \vert \leq \vert \opt \vert = k$. Therefore we have $\E[f_k(\cup_{\ell=1}^m S''_{\ell})] \geq \E[f(\opt'' \cap (\cup_{\ell=1}^m S''_{\ell}))] = \Omega\left(\sqrt{\frac{k'}{k}} f(\opt)\right)$ which concludes the proof of this theorem. 
\end{proof}

\section{Conclusion}
The concept of composable core-sets has been introduced recently in the context of distributed and streaming algorithms and have been applied to several problems~\cite{IMMM14,BBLM14}. In this paper, we introduced the concept of {\em randomized} composable core-sets and showed its effectiveness in maximizing submodular functions in a distributed manner.  
While we mainly discuss the cardinality constraint in this paper, we expect that the ideas and the proof techniques be applicable to more general packing constraints such as matroid constraints.  There are several research problems that are interesting to explore in this line of research.

\begin{itemize}
\item For the submodular maximization problem, it remains to find a randomized composable core-set of approximation factor $1-{1\over e}$, or rule out the possibility of constructing such a core-set.
\item We discussed how the size and multiplicity of the composable core-set can help improve the approximation factor of the algorithms. It would be nice to get tight bounds on the approximation factor for each range of the size  of composable core-sets. Moreover, it would be interesting to study the impact of increasing the multiplicity of the core-set on the achievable approximation factors.
\item While we provided a tight result for the small-size composable core-set problem, the achievable approximation factor is not satisfactory. A natural way to improve this factor is to apply the composable core-set idea iteratively, and achieve an improve approximation factor. Even for composable core-sets of size $k$ and above, it might be possible to improve the approximation factor by applying such a core-set multiple times. This approach leads to several interesting follow-up questions.
\item  While randomized composable core-sets are applicable to random-order streaming models, applying the proof techniques in this paper may result in stronger approximation factors in pure random-order streaming models (compared to the ones presented here). We leave this problem to future research.
\end{itemize}

Finally, it would be nice to explore applicability of these ideas on other optimization and graph theoretic problems.

\bibliographystyle{plain}
\bibliography{refs}

\end{document}